\numberwithin{equation}{section}
\theoremstyle{plain}
\newtheorem{theorem}{Theorem}[section]
\newtheorem{proposition}[theorem]{Proposition}
\newtheorem{lemma}[theorem]{Lemma}
\newtheorem{corollary}[theorem]{Corollary}
\newtheorem{main_theorem}{Theorem}
\theoremstyle{definition}
\newtheorem{definition}[theorem]{Definition}
\newtheorem{remark}[theorem]{Remark}
\newtheorem{main_remark}{Remark}
\def\@tocline#1#2#3#4#5#6#7{\relax
  \ifnum #1>\c@tocdepth 
  \else
    \par \addpenalty\@secpenalty\addvspace{#2}%
    \begingroup \hyphenpenalty\@M
    \@ifempty{#4}{%
      \@tempdima\csname r@tocindent\number#1\endcsname\relax
    }{%
      \@tempdima#4\relax
    }%
    \parindent\z@ \leftskip#3\relax \advance\leftskip\@tempdima\relax
    \rightskip\@pnumwidth plus4em \parfillskip-\@pnumwidth
    #5\leavevmode\hskip-\@tempdima
      \ifcase #1
       \or\or \hskip 1em \or \hskip 2em \else \hskip 3em \fi%
      #6\nobreak\relax
      \dotfill
      \hbox to\@pnumwidth{\@tocpagenum{#7}}
    \par
    \nobreak
    \endgroup
  \fi}
\newcommand\dist{{\operatorname{dist}}}
\newcommand\eps{\varepsilon}
\providecommand*{\dif}%
{\@ifnextchar^{\DIfF}{\DIfF^{}}}
\def\DIfF^#1{%
    \mathop{\mathrm{\mathstrut d}}%
    \nolimits^{#1}\gobblespace
}
\def\gobblespace{%
    \futurelet\diffarg\opspace}
\def\opspace{%
    \let\DiffSpace\!%
    \ifx\diffarg(%
    \let\DiffSpace\relax
    \else
    \ifx\diffarg[%
    \let\DiffSpace\relax
    \else
    \ifx\diffarg\{%
    \let\DiffSpace\relax
    \fi\fi\fi\DiffSpace}
\newcommand{\Ccal}{\mathcal{C}}
\newcommand{\Tcal}{\mathcal{T}}
\newcommand{\Cbb}{\mathbb{C}}
\newcommand{\Rbb}{\mathbb{R}}
\newcommand{\Sbb}{\mathbb{S}}
\newcommand{\Zbb}{\mathbb{Z}}
\DeclareMathOperator{\Tr}{Tr}
\let\eps\varepsilon
\let\phi\varphi
\newcommand{\intR}{\int_{\mathbb{R}^3}}
\newcommand{\rhob}{\underline{\rho}}
\newcommand{\brho}{\bar{\rho}}
\newcommand{\indvar}[2]{\mathds{1}_{#1} \ast \eta_{#2}}
\newcommand{\indld}{\mathds{1}_{\ell\Delta} \ast \eta_\delta}
\newcommand{\indldt}{\mathds{1}_{t\ell\Delta} \ast \eta_{t\delta}}
\newcommand{\elda}{e_{\mathrm{LDA}}}
\newcommand{\clt}{c_{\mathrm{LT}}}
\DeclareMathOperator{\supp}{supp}
\DeclareMathOperator{\spec}{spec}
\newcommand{\hti}{\widetilde{h}}
\newcommand{\type}{paper}
\begin{document}

\title[The Validity of the LDA for Smooth Short Range Interaction Potentials]{The Validity of the Local Density Approximation for Smooth Short Range Interaction Potentials}

\author{Nicco Mietzsch}
\address{Department of Mathematics, LMU Munich, Theresienstrasse 39, 80333 Munich, Germany}
\email{nicco.mietzsch@campus.lmu.de}

\date{\today}

\begin{abstract}
{In the full quantum theory, the energy of a many-body quantum system with a given one-body density is described by the Levy-Lieb functional. It is exact, but very complicated to compute. For practical computations, it is useful to introduce the Local Density Approximation which is based on the local energy of constant densities. The aim of this \type{} is to make a rigorous connection between the Levy-Lieb functional theory and the Local Density Approximation. Our justification is valid for fermionic systems with a general class of smooth short range interaction potentials, in the regime of slowly varying densities. We follow a general approach developed by Lewin, Lieb and Seiringer for Coulomb potential \cite{Lewin_2020}, but avoid using any special properties of the potential including the scaling property and screening effects for the localization of the energy. }
\end{abstract}

\maketitle

\setcounter{tocdepth}{2}
\tableofcontents

\parindent 0mm
\parskip   5mm 

\section{Introduction} \label{chapter:intro}
{Even basic properties of many-body quantum systems are very difficult to compute for high numbers of particles. This is due to the high dimensionality of the respective equations. Therefore, there have been many attempts at reducing the computational cost by approximating the equations of the full system with simpler equations in lower dimensions which can be solved numerically or even analytically.

In the following, we will be interested in approximating the energy of a full many-body system by an explicit functional. We will start with describing our setting. In suitable units, a system of $N$ fermionic particles in $d$ dimensions without spin can be described by the Hamilton operator
\begin{equation}
	H_N^{V,\omega} = \sum_{i=1}^N \Delta_{x_i} + \sum_{i=1}^N V(x_i) + \sum_{1\leq j < k \leq N} \omega(x_j - x_k).
\end{equation}
Here, the functions $V$, $\omega$ describe the external potential and the interaction potential between two particles. Because we want to describe fermionic particles, this operator acts in the antisymmetric space $\bigwedge^N L^2(\Rbb^d)$. More precisely, under suitable assumptions on $\upsilon$ and $\omega$, the operator $H_N^{V,\omega}$ can be realized as a self-adjoint operator by Friedrichs' method. In particular, it is densely defined with domain $D(H_N^{V,\omega})$ and a quadratic form domain $Q(H_N^{V,\omega})$, which is the natural space of wavefunctions $\psi$ for which the energy $\left\langle \psi, H_N^{V,\omega} \psi \right\rangle$ is defined.

This operator encodes all relevant physical properties of the system in question. For example the \textit{ground state energy} is given by the variational principle
\begin{equation}
\label{eq:ground_state}
	E_N^{V,\omega} = \inf \spec (H_N^{V,\omega}) = \inf_{\substack{\psi \in Q(H_N^{V,\omega}) \\ \lVert \psi \rVert = 1}} \left\langle \psi, H_N^{V,\omega} \psi \right\rangle
\end{equation}
the infimum of the spectrum of $H_N^{V,\omega}$.

While these equations are easy to state from a mathematical point of view (after all, the objects involved are linear operators), in applications it is difficult to compute their solutions. The problem is that typically, one is interested in large systems, meaning one has to calculate solutions in a very high-dimensional space. This is difficult even numerically. Therefore, since the invention of quantum mechanics, there have been many attempts to reduce the computational cost by replacing these high-dimensional \textit{linear} equations in $N$-body space by low-dimensional \textit{non-linear} equations which have very few variables. However, while the resulting formulae have great success in applications, most of them have very unsatisfactory mathematical justifications. An important example is the Local Density Approximation in Density Functional Theory, which will be the topic of this \type{}.}

\subsection{Density Functional Theory}\label{sec:dft}
{To discuss the formulation of Density Functional Theory (DFT), we first need to define the density. While this will be done in greater generality in Section \ref{chapter:main_res}, here we restrict ourself to densities of wavefunctions $\psi \in \bigwedge_1^N L^2(\Rbb^d)$ with $\lVert \psi \lVert = 1$. In this case the one-body density $\rho_\psi$ is given by
\begin{equation*}
	\rho_\psi(x) = N \int_{\Rbb^{d(N-1)}} |\psi(x,x_2,\dots,x_n)|^2 \;dx_2 \cdots dx_n.
\end{equation*}
The interpretation of the density is that for any set $\Omega$, $\int_{\Omega} \rho_\psi$ gives the average number of particles in $\Omega$. In particular, we have $\int_{\Rbb^d} \rho_\psi = N$.

The idea of DFT is to split the minimization problem (\ref{eq:ground_state}) into two minimizations, first over all wavefunctions having a particular density and then over all possible densities
\begin{equation}
\label{eq:dft}
	E_N^{V,\omega} = \inf_\rho \left\lbrace \inf_{\rho_\psi = \rho} \left\langle \psi, H_N^{V,\omega} \psi \right\rangle \right\rbrace
\end{equation}
where the inner product is understood in the sense of quadratic forms. By the (anti-) symmetry of the wavefunction, it is easy to see that
\begin{equation*}
	\left\langle \psi, \sum_{i=1}^N V(x_i) \psi \right\rangle = \int_{\Rbb^d} V(x) \rho_\psi(x) \; dx.
\end{equation*}
Therefore, it suffices to minimize the kinetic and interaction energy, i.e. to define the Levy-Lieb functional \cite{Levy_1979, Lieb_1983}
\begin{equation*}
	F^\omega_{LL}(\rho) = \inf_{\rho_\psi = \rho} \left\langle \psi, H_N^{0,\omega} \psi \right\rangle = \inf_{\rho_\psi = \rho} \left\langle \psi, \left( \sum_{i=1}^N \Delta_{x_i} + \sum_{1\leq j < k \leq N} \omega(x_j - x_k) \right) \psi \right\rangle.
\end{equation*}
Here, one has to think about for which functions to define this functional. In this case, it is natural to assume $\sqrt{\rho} \in H^1(\Rbb^d)$ and $\int_{\Rbb^d} \rho = N$. This follows from the Hofmann-Ostenhoff inequality, which we will see in Section \ref{chapter:known_bds}, and the representability of such densities, see \cite{Lieb_1983}. Now (\ref{eq:dft}) reads
\begin{equation*}
	E_N^{V,\omega} = \inf_\rho \left\lbrace F^\omega_{LL}(\rho) + \int_{\Rbb^d} V(x) \rho(x) \; dx \right\rbrace.
\end{equation*}
The functional $F^\omega_{LL}$ is universal in the sense that it is independent of the external potential $V$. In this case, it depends on the particle number $N$, which is ignored here for simplicity.

There are several extensions to the functional $F^\omega_{LL}$ (for a recent overview, see \cite{Lewin_2019}) which not only deal with pure states $\psi$ but also mixed states or even grand-canonical states. We will give a precise definition of the functional we use in this work in Section \ref{chapter:main_res}.}

\subsection{The Local Density Approximation}\label{sec:lda}
{While there exist several approximations to the energy of a density, we will be interested in the \textit{Local Density Approximation}
\begin{equation*}
	LDA_f(\rho) = \frac{1}{2}\int_{\Rbb^d}\int_{\Rbb^d} \rho(x) \rho(y) \omega(x-y) \; dx \; dy + \int_{\Rbb^d} f(\rho(x)) \; dx
\end{equation*}
where the function $f : \Rbb_+ \to \Rbb$ is supposed to approximate the local part of the energy. The first term is called the \textit{direct term} and encodes the classical interaction energy of the density $\rho$. 

One of the oldest LDA's for Coulomb interaction is the Thomas-Fermi (TF) functional \cite{Thomas_1927,Fermi_1927}, where one takes $f(t) = 3/10 (3\pi^2)^{2/3}t^{5/3}$ in dimension $d=3$. While this functional is rather simple and fails to predict many basic properties of real-world systems, it is surprisingly accurate in some cases \cite{Solovej_2015}.

There are some generalizations of this approximation, the simplest being the Local Spin-Density Approximation (LSDA) which takes into account that real-world electrons have two spin states. Another, even more general ansatz is the Generalized Gradient Approximation (GGA), where not only the values of $\rho$ are taken into account, but also its gradient $\nabla \rho$. For a detailed overview and assessment of some density functionals used in computational chemistry, see \cite{MHG_2017}.

In this work, we will justify the use of $f = \elda$, where $\elda(\rho_0)$ is the energy of the constant density $\rho_0$ per unit volume, for which the classical interaction energy has been dropped. Since constant, non-zero functions cannot be densities of quantum mechanical states in our setting, we will use a suitable thermodynamic limit.

While density functional theory and the LDA are mostly used for Coulomb interaction because it is the physically relevant setting for atoms and molecules, it is still interesting to think about other interaction potentials. The first mathematically rigorous justification for the Thomas-Fermi functional was given by Lieb and Simon \cite{LS_1977} in 1977 for the Coulomb case. In their paper, the authors proved that the ground state energy can be approximated by the respective minimization problem for the Thomas-Fermi functional.

Recently, this result was extended in \cite{FLS_2019} where it was proven that the TF functional gives the correct ground state energy under relatively weak assumptions on the interaction and external potential. While the previous results restrict themselves to the ground state energy, it was proven in \cite{GN_2018} that the Levy-Lieb functional converges in a suitable sense to the TF functional. This was done for a large set of interaction potentials. In turn, this gives an easy proof of the validity of the ground state TF energy.

Most of these works concern themselves with the aforementioned choice of $f(t) = Ct^{5/3}$ and try to justify this ansatz. Another strategy is to take a suitable thermodynamic limit to obtain the function $f$ in the first place. While this approach is more accurate, because one can just define $f$ as it should be, namely if one believes in the LDA to be true, then the function used can be precisely defined by the thermodynamic limit. However, then it is harder to justify the approximation, because usually the resulting function $f$ has no explicit form and will probably be more complicated than the simple function in TF theory.

This strategy of using the thermodynamic limit in LDA has recently been employed by Lewin, Lieb and Seiringer \cite{Lewin_2020} for Coulomb interaction. More precisely, they rigorously derived the LDA in the regime of slowly varying densities. In this work, we will derive similar estimates for smooth, short-range interaction potentials. Those interactions are not necessarily easier to treat. This is because one cannot use the Graf-Schenker inequality \cite{GS_1995}, which makes use of screening effects for the localization of the energy. Another difficulty is that the Couloumb interaction provides some obvious scaling properties for the resulting thermodynamic limit which are not directly available to us. These properties will be important in deriving a form of continuity which is needed to prove the overall validity of the LDA.
}

{\bf Acknowledgement.} I would like to express my deepest thanks to my mentor, Phan Th\`anh Nam, for his great and continued support.

\section{Main results} \label{chapter:main_res}
{
In this Section, we will give the necessary definitions before stating our main result. Additionally, we will give a short overview over the proof.
}

\subsection{Definitions}\label{sec:def}
{We start by defining the types of interactions that we can treat with our analysis before giving a precise definition of the Levy-Lieb functional used here. For simplicity, we will work with spinless fermions. While we expect that one can follow our arguments here to treat particles with spin, it simplifies our notation drastically. Also, from now on, we will restrict ourself to particles in $\Rbb^3$.

\begin{definition}
\label{def:short-range}
	We call a function $\omega \colon \Rbb^3 \to \Rbb$ \textit{short-range interaction potential} if and only if it is radially symmetric, $0 \leq \omega \in L^1, \; 0 \leq \widehat{\omega} \in L^1, \; \intR |x| \omega(x) \; dx < \infty$ and $\omega$ itself and its Fourier Transform are radially decreasing.
\end{definition}

\begin{remark}
	With this definition, we see immediately that there is a constant $C$ such that $\omega \leq C/|\cdot|$. In particular we have $D_\omega(\rho) \leq CD_c(\rho)$ for all $\rho \geq 0$. Here
	\begin{equation*}
		D_\omega(\rho) \coloneqq \frac{1}{2} \intR \intR \rho(x) \rho(y) \omega(x-y) \; dx \; dy
	\end{equation*}
	is the classical interaction energy of a density $\rho$ and $D_c$ is the Coulomb interaction with $\omega = 1/|\cdot|$. For the definition of $\elda$ in Theorem \ref{thm:uniformgas} below, we only need that the direct term of the interaction $\omega$ in bounded by its Coulomb counterpart. However, then the definition is not as general as presented here. Moreover, the decreasing properties are needed for Theorem \ref{thm:lda} where we compare the Levy-Lieb functional to our LDA.
\end{remark}

A priori, $D_\omega(\rho)$ could be infinity, but we will mostly be talking about densities which belong to $L^1 \cap L^2$ which implies that the interaction energy is finite for short-range potentials and also for the Coulomb interaction by the Hardy-Littlewood-Sobolev inequality. We will also use the notation
\begin{equation}
	D_\omega(\rho_1,\rho_2) \coloneqq \frac{1}{2} \intR \intR \rho_1(x) \rho_2(y) \omega(x-y) \; dx \; dy
\end{equation}
for the interaction energy between two densities $\rho_1$ and $\rho_2$.

As we later want to define the minimal energy of all quantum mechanical states having a particular density, we first define what we mean by a state.

\begin{definition}
\label{def:state}
	A grand-canonical state $\Gamma$ (commuting with the particle number operator) is a collection $\Gamma = \bigoplus_{n \geq 0} \Gamma_n$ of non-negative self-adjoint trace class operators, each acting on $\mathfrak{H}^n$, where
	\begin{equation*}
		\mathfrak{H}^n \coloneqq L^2_a(\Rbb^{3n},\Cbb)
	\end{equation*}
	is the $n$-particle space of antisymmetric square-integrable functions on $\Rbb^{3n}$. Furthermore
	\begin{equation}
		\Gamma_0 + \sum_{n \geq 1} \Tr_{\mathfrak{H}^n} \Gamma_n = 1
	\end{equation}
	where $\Gamma_0 \in [0,1]$ is the probability that there is no particle at all.
\end{definition}

Now we are ready to define the kinetic and interaction energies of a grand-canonical state:

\begin{definition}
\label{def:energies}
	For a grand-canonical state $\Gamma = \bigoplus_{n \geq 0} \Gamma_n$ on Fock space (commuting with the particle number operator) and an interaction potential $\omega$, we introduce the notation
	\begin{equation}
		\Tcal(\Gamma) \coloneqq \sum_{n \geq 1} \Tr_{\mathfrak{H}^n} \left( - \sum_{j=1}^n \Delta_{x_i} \right) \Gamma_n
	\end{equation}
	for the kinetic energy and
	\begin{equation}
		\Ccal_\omega(\Gamma) \coloneqq \sum_{n \geq 2} \Tr_{\mathfrak{H}^n} \left( \sum_{1 \leq j < k \leq n} \omega(x_j - x_k) \right) \Gamma_n
	\end{equation}
	for the interaction energy of the state $\Gamma$.
\end{definition}
If it is clear which interaction potential is used, we will just write $\Ccal(\Gamma)$. We define the density of $\Gamma = \bigoplus_{n \geq 0} \Gamma_n$ by defining the density for each $\Gamma_n$ as follows:
\begin{equation}
	\rho_{\Gamma_n}(x) = n \times \int_{\Rbb^{3(n-1)}} \Gamma_n(x,x_2,\dots,x_n;x,x_2,\dots,x_n) \; dx_2 \cdots dx_n
\end{equation}
where $\Gamma_n(x_1,\dots,x_n;x_1',\dots,x_n')$ is the kernel of the trace-class operator $\Gamma_n$. Now the density $\rho_\Gamma$ is defined by  $\rho_\Gamma = \sum_{n=1}^\infty \rho_{\Gamma_n}$. Sometimes, we will only discuss the lowest possible kinetic energy, which is given by
\begin{equation}
	T(\rho) \coloneqq \inf_{\substack{\Gamma_n = \Gamma_n^* \geq 0 \\ \sum_{n=0}^\infty \Tr(\Gamma_n) = 1 \\ \sum_{n=1}^\infty \rho_{\Gamma_n} = \rho}} \left\lbrace \sum_{n=1}^\infty \Tr_{\mathfrak{H}^n} \left( - \sum_{j=1}^n \Delta_{x_j} \right) \Gamma_n \right\rbrace  = \inf_{\substack{0 \leq \gamma = \gamma^* \leq 1 \\ \rho_\gamma = \rho}} \Tr (-\Delta)\gamma.
\end{equation}
Finally, we define the \textit{grand-canonical Levy-Lieb functional}:
\begin{definition}
\label{def:FLL}
For a density $\rho \in L^1(\Rbb^3,\Rbb_+)$ such that $\sqrt{\rho} \in H^1(\Rbb^3)$ and an interaction potential $\omega$, define
\begin{equation}
\label{eq:FLL}
\begin{split}
	F_{\mathrm{LL}}^\omega(\rho) & \coloneqq \inf_{\substack{\Gamma state\\ \rho_\Gamma = \rho}} \Tcal(\Gamma) + \Ccal_\omega(\Gamma) \\
	& \coloneqq \inf_{\substack{\Gamma_n = \Gamma_n^* \geq 0 \\ \sum_{n=0}^\infty \Tr(\Gamma_n) = 1 \\ \sum_{n=1}^\infty \rho_{\Gamma_n} = \rho}} \left\lbrace \sum_{n=1}^\infty \Tr_{\mathfrak{H}^n} \left( - \sum_{j=1}^n \Delta_{x_j} + \sum_{1 \leq j < k \leq n} \omega(x_j - x_k) \right) \Gamma_n \right\rbrace.
\end{split}
\end{equation}
	For our analysis, it is useful to subtract the direct term from $F_{\mathrm{LL}}$, hence to define
	\begin{equation}
		E^\omega(\rho) \coloneqq F_{\mathrm{LL}}^\omega(\rho) - \frac{1}{2} \intR \intR \rho(x) \rho(y) \omega(x-y) \; dx \; dy.
	\end{equation}
	Again, $\omega$ is dropped from the notation if it is clear which $\omega$ is meant.
\end{definition}

In our analysis, we will be using the grand-canonical version to be able to use states with arbitrary number of particles. For a detailed overview of different types of the Levy-Lieb functional, see for example \cite{Lewin_2019}. In their paper, the authors also prove the existence of optimal states in (\ref{eq:FLL}) which we can use here, since our interaction potential will be positive. Therefore, we do not need to think about minimizing sequences, which will again simplify our arguments.}

\subsection{Main results}\label{sec:statement}
{We will prove the following existence Theorem which we need for our Local Density Approximation.
\begin{main_theorem}[Existence of the Local Density functional]
\label{thm:uniformgas}
	Let $\omega$ be a short-range interaction potential. Let $\rho_0 > 0$ and $\{\Omega_N\} \subset \Rbb^3$ be a sequence of bounded connected domains with $|\Omega_N| \to \infty$, such that $\Omega_N$ has a uniformly regular boundary in the sense that
	\begin{equation*}
		|\partial \Omega_N + B_r| \leq Cr|\Omega|^{2/3}, \quad for \; all \; r \leq |\Omega|^{1/3}/C,		
	\end{equation*}
	for some constant $C > 0$. Let $\delta_N$ be any sequence such that $\delta_N/|\Omega_N|^{1/3} \to 0$ and $\delta_N|\Omega_N|^{1/3} \to \infty$. Let $\chi \in L^1(\Rbb^3)$ be a radial non-negative function of compact support such that $\intR \chi = 1$ and $\intR |\nabla \sqrt{\chi} |^2 < \infty$. Denote $\chi_\delta(x) = \delta^{-3} \chi(x/\delta)$. Then the following thermodynamic limit exists
	\begin{equation}
		\lim_{N \to \infty} \frac{E^\omega(\rho_0\mathds{1}_{\Omega_N} \ast \chi_{\delta_N})}{|\Omega_N|} = \elda^\omega(\rho_0)
	\end{equation}
	where the function $\elda^\omega$ is independent of the sequence $\{\Omega_N\}$, of $\delta_N$ and of $\chi$.
\end{main_theorem}

Now we are ready to state our main result.
\begin{main_theorem}[Validity of the Local Density Approximation]
\label{thm:lda}
	Let $\omega$ be a short-range interaction potential. Let $p > 3$ and $0 < \theta < 1$ such that
	\begin{equation}
		2 \leq p\theta \leq 1 + \frac{2}{5}p.
	\end{equation}
	Then there exists a constant $C = C(p,\theta,\omega)$ such that
	\begin{multline}
	\label{eq:lda}
		\left| E^\omega(\rho) - \intR \elda^\omega(\rho(x))dx \right| \leq \eps \intR (\rho(x) + \rho(x)^2)dx \\
		+ \frac{C}{\eps} \intR |\nabla \sqrt{\rho}(x)|^2 dx +\frac{C}{\eps^{5/2p-1}} \intR |\nabla \rho^\theta (x)|^p dx
	\end{multline}
	for every $\eps > 0$ and every non negative density $\rho \in L^1(\Rbb^3) \cap L^2(\Rbb^3)$ such that $\nabla \sqrt{\rho} \in L^2(\Rbb^3)$ and $\nabla \rho^\theta \in L^p(\Rbb^3)$.	
\end{main_theorem}

When we optimize over $\eps$, this bound yields good results when the gradient terms are much smaller than the local term

\begin{equation*}
\begin{dcases}
	\intR |\nabla \sqrt{\rho}(x)|^2 \; dx \ll \intR \left( \rho(x) + \rho(x)^2 \right) \; dx \\
	\intR |\nabla \rho^\theta(x)|^p \; dx \ll \intR \left( \rho(x) + \rho(x)^2 \right) \; dx 
\end{dcases}
\end{equation*}

In particular, when we take 
\begin{equation*}
	\rho_N(x) = \rho(N^{-1/3}x)
\end{equation*}
and $\eps = N^{-2/15}$, we obtain with $p = 4$, $\theta = 1/2$
\begin{multline}
\label{eq:main_N}
	\left| F_{\mathrm{LL}}^\omega(\rho_N) - \frac{N^2}{2}\intR\intR \rho(x) \rho(y) \omega(N^{1/3}(x-y))\; dx \; dy - N \intR \elda^\omega(\rho(x)) \; dx \right| \\
	\leq CN^{7/15} \intR |\nabla \sqrt{\rho}(x)|^2 \; dx + CN^{13/15} \intR \left( \rho(x) + \rho(x)^2 + |\nabla \sqrt{\rho} (x)|^4 \right) \; dx
\end{multline}

which justifies the LDA in the limit $F_{\mathrm{LL}}^\omega(\rho_N)/N$.

\begin{main_remark}
	These two theorems are in spirit of recent work by Lewin, Lieb and Seiringer \cite{Lewin_2020}, who proved similar bounds for Coulomb interaction $\omega = 1/|\cdot|$. Then the function $\elda$ is called $e_{\mathrm{UEG}}$, the energy of the uniform electron gas. This object is of mathematical interest itself and has been studied, for example, in \cite{Lewin_2017}. This function could be defined for a larger class of interaction potentials, but then it is not so easy to allow indicator functions of the form which is stated in Theorem \ref{thm:uniformgas}.
\end{main_remark}

\begin{main_remark}
	The main error term in our bound is $\frac{1}{\eps^{5/2p-1}} \intR |\nabla \rho^\theta (x)|^p dx$. It is responsible for the large power of $N$ in equation (\ref{eq:main_N}). However it is slightly better than the factor $1/\eps^{4p-1}$, which then leads to $N^{11/12}$ in equation (\ref{eq:main_N}), which was derived in \cite{Lewin_2020}.	However, it is expected \cite{Kirzhnits_1957, Lewin_2020} that the next order of expansion involves at least a gradient term as correction to the kinetic energy which has order $N^{1/3}$.

	One reason we are very far from this is the subadditivity estimate in Lemma \ref{lem:subadd} where we get an error of the form $\eps^{-1}D(\rho)$ which then turns into the error $\eps^{1-5/2p} \intR |\nabla \rho^\theta|^p$ in our case. Here we see why we could improve a little in comparison to the Coulomb case, we can use the boundedness of $\omega$ to estimate the direct term against the $L^2$ norm, whereas for the Coulomb interaction, one has to use the Hardy-Littlewood-Sobolev inequality with the $L^{6/5}$ norm. We see that even when we assume our interaction to be much more regular, we still only improved our error exponent from $11/12$ to $13/15$. This shows that there are certainly new ideas needed to improve the order of convergence, in particular, this is not a Coulomb specific problem.

	On the other hand, we will see that the regularity of the function $\elda$ is important for the validity of the LDA. While regularity is easy to prove for interactions with scaling properties such as Coulomb, a slightly weaker regularity still holds true if $\omega$ and $\hat{\omega}$ are radially decreasing. This allows us to investigate the local behaviour of $\elda$ without having to compute or finding numerical approximations to it. We require $\omega$, $\hat{\omega} \in L^1$ but it seems interesting to find less restrictive assumptions such that $\elda$ still has enough regularity.
\end{main_remark}
}

\subsection{Method of proof}\label{sec:method}
{We will now explain the general ideas of the proof and how this \type{} is structured. In general, we will follow the arguments from Lewin, Lieb and Seiringer \cite{Lewin_2020} and adapt their proofs when necessary.

At first, in Section \ref{chapter:known_bds} we will review some known bounds on $E(\rho)$ which we will use throughout his work. Then, in Section \ref{chapter:loc} we will explain how to use a partition of unity $\sum_k \chi_k = 1$ to approximate the full energy of a density by sums of energies of cut-off densities
\begin{equation*}
	E(\rho) \approx \sum_k E(\chi_k \rho).
\end{equation*}
This approximation will be useful in the proof of Theorem \ref{thm:lda} as well as in Section \ref{chapter:elda} where we will prove the existence of the function $\elda$. First, we will prove it only in the form
\begin{equation*}
	\elda(\rho_0) = \lim_{\ell \to \infty} \frac{E(\rho_0\indvar{\ell\Delta}{})}{|\ell\Delta|}
\end{equation*}
for a fixed tetrahedron $\Delta$ and a fixed regularizing function $\eta$, but we will bootstrap this argument to prove the general Theorem \ref{thm:uniformgas}. Moreover we will derive quantitative estimates that will justify the approximation
\begin{equation*}
	E(\rho_0\indvar{\ell\Delta}{}) \approx \intR\elda(\rho_0)(\indvar{\ell\Delta}{}).
\end{equation*}
Also in this section we investigate the continuity of $\elda$. Here, the challenge was that in \cite{Lewin_2020}, the scaling properties of the Coulomb interaction allowed to prove scaling estimates on $e_{\mathrm{UEG}}$. Since we do not have those properties for $\omega$, we will discuss how to use another property, namely that $\omega$ is decreasing to work around that problem.

In Section \ref{chapter:constden}, we will see how to relate $E(\chi_k \rho)$ to $E(\chi_k \rho_0)$ for some $\rho_0$ in the support of $\chi_k$. To this end, we will derive subadditivity estimates on the energy which unfortunately are responsible for the large error term in Theorem \ref{thm:lda}. Some of these estimates will also be used together with the continuity of $\elda$ to derive
\begin{equation*}
	\intR\elda(\rho_0)(\indvar{\ell\Delta}{}) \approx \intR\elda(\rho)(\indvar{\ell\Delta}{}).
\end{equation*}
We will put all our estimates together in Section \ref{chapter:mainthm} to prove Theorem \ref{thm:uniformgas}, which then has the form
\begin{equation*}
	E(\rho) \approx \intR\elda(\rho).
\end{equation*}

From here on, $C > 0$ will denote a positive constant. It will sometimes change from line to line, but it only depends on the interaction potential $\omega$ and the two parameters $p$ and $\theta$ which appear in the statement of Theorem \ref{thm:lda}. It should be possible to extract a value for $C$ from our proof, but it is highly non-optimal, so we did not try it.}

\section{Known bounds\texorpdfstring{ on $E(\rho)$}{}}\label{chapter:known_bds}
{
Here, we will present some well-known bounds on $E(\rho)$. It will turn out that for the upper bound, we use only the kinetic part, whereas for the lower bounds, the interaction potential plays an important role.

\begin{lemma}
\label{lem:intomega}
	For any radially symmetric function $\omega$ and any grand-canonical state $\Gamma$ in Fock space, we have
	\begin{align}
		i) &\quad \Ccal_\omega(\Gamma) - D_\omega(\rho_\Gamma) \geq - \frac{\omega(0)}{2} \intR \rho_\Gamma \quad \quad &when \; \widehat{\omega} \in L^1(\Rbb^3) \; and \; \widehat{\omega} \geq 0, \\
		ii) &\quad \Ccal_\omega(\Gamma) - D_\omega(\rho_\Gamma) \geq - \frac{\intR \omega}{2} \intR \left( \rho_\Gamma \right)^2 &when \; \omega \in L^1(\Rbb^3) \; and \; \omega \geq 0.
	\end{align}
\end{lemma}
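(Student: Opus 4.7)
My plan is to use the Onsager / Fefferman--de la Llave completion-of-squares trick in Fourier space. Since $\widehat\omega\ge 0$ and $\widehat\omega\in L^1(\Rbb^3)$, the inverse Fourier transform $\omega$ is continuous (so $\omega(0)$ is finite) and is a positive-definite kernel. For every configuration $(x_1,\dots,x_n)\in\Rbb^{3n}$ and every non-negative density $\rho$ I would start from the pointwise inequality
\[ \int_{\Rbb^3}\widehat\omega(k)\Bigl|\sum_{i=1}^n e^{-ik\cdot x_i}-\widehat\rho(k)\Bigr|^2 dk \ge 0, \]
and expand it via Plancherel into the $n$-particle identity
\[ 2\sum_{1\le j<k\le n}\omega(x_j-x_k)+n\,\omega(0)-2\sum_{i=1}^n(\omega*\rho)(x_i)+2D_\omega(\rho)\ge 0. \]

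\textbf{Lifting to the grand-canonical setting.} I would then set $\rho=\rho_\Gamma$ and apply $\sum_{n\ge 0}\Tr_{\mathfrak{H}^n}(\,\cdot\,\Gamma_n)$ to both sides. From the definition of the density one has $\sum_{n\ge 0}\Tr_{\mathfrak{H}^n}(n\,\Gamma_n)=\intR\rho_\Gamma$ and $\sum_{n\ge 1}\Tr_{\mathfrak{H}^n}\bigl(\sum_{i=1}^n(\omega*\rho_\Gamma)(x_i)\,\Gamma_n\bigr)=\intR\rho_\Gamma(\omega*\rho_\Gamma)=4D_\omega(\rho_\Gamma)$, while the first and last terms turn into $2\Ccal_\omega(\Gamma)$ and $2D_\omega(\rho_\Gamma)$ respectively. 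The expanded inequality then becomes
\[ 2\Ccal_\omega(\Gamma)+\omega(0)\intR\rho_\Gamma-4D_\omega(\rho_\Gamma)+2D_\omega(\rho_\Gamma)\ge 0, \]
which after division by $2$ is exactly the bound in (i).

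\textbf{Approach for (ii).} Here only $\omega\ge 0$ together with Young's convolution inequality is needed. Positivity of $\omega$ yields $\Ccal_\omega(\Gamma)\ge 0$, hence $\Ccal_\omega(\Gamma)-D_\omega(\rho_\Gamma)\ge -D_\omega(\rho_\Gamma)$. Writing $2D_\omega(\rho_\Gamma)=\langle\rho_\Gamma,\omega*\rho_\Gamma\rangle$, Cauchy--Schwarz combined with $\|\omega*\rho\|_{L^2}\le\|\omega\|_{L^1}\|\rho\|_{L^2}$ gives $D_\omega(\rho_\Gamma)\le\tfrac{1}{2}\|\omega\|_{L^1}\|\rho_\Gamma\|_{L^2}^2$, which is precisely the right-hand side of (ii). If $\rho_\Gamma\notin L^2$ the estimate is trivially true.

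\textbf{Main obstacle.} There is no genuinely hard step; the only point requiring care is the bookkeeping that lifts the pointwise $n$-particle identity of (i) to the grand-canonical level, i.e.\ justifying that each of the constant, one-body, and two-body pieces can be integrated term-by-term against $\Gamma_n$ and summed over $n$. This is standard from the definitions in Section \ref{sec:def}, and the sums all converge under the natural assumptions $\intR\rho_\Gamma<\infty$ and (for the one-body piece) $\intR\rho_\Gamma(\omega*\rho_\Gamma)<\infty$, which follows from the remark after Definition \ref{def:short-range}.
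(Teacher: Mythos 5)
Your proof takes essentially the same route as the paper: for (i) a completion of squares giving a pointwise $n$-particle Onsager-type inequality, traced against $\Gamma_n$ and summed over $n$, and for (ii) the trivial bound $\Ccal_\omega(\Gamma)\ge 0$ plus Young's inequality. The paper states the completion of squares in position space via $\sum_j\delta_{x_j}-\rho$ (the non-negativity coming from $\widehat\omega\ge 0$), while you write the equivalent Fourier-space version; these are the same estimate. One small bookkeeping slip: you assert $\intR\rho_\Gamma(\omega*\rho_\Gamma)=4D_\omega(\rho_\Gamma)$ — the correct identity is $\intR\rho_\Gamma(\omega*\rho_\Gamma)=2D_\omega(\rho_\Gamma)$ — but the displayed final inequality $2\Ccal_\omega(\Gamma)+\omega(0)\intR\rho_\Gamma-4D_\omega(\rho_\Gamma)+2D_\omega(\rho_\Gamma)\ge 0$ (where the $-4D_\omega$ is the trace of $-2\sum_i(\omega*\rho_\Gamma)(x_i)$) is arithmetically correct, so the conclusion is unaffected.
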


\begin{proof}
	Inequality $ii)$ follows from $\Ccal_\omega(\Gamma) \geq 0$ and $D_\omega(\rho_\Gamma) \leq \lVert \omega \rVert_{L^1} \lVert \rho_\Gamma \rVert_{L^2}^2/2$ by Young's inequality. For the first inequality, we write as in \cite{Lewin_2017}, for any fixed particle number $N$ and any $\rho$
	\begin{equation*}
	\begin{split}
	0 & \leq \frac{1}{2}	\intR \intR \omega(x-y) \left( \sum_{j=1}^N \delta_{x_j}(x) - \rho(x) \right)  \left( \sum_{j=1}^N \delta_{x_j}(y) - \rho(y) \right) dx \; dy \\
	& = \sum_{1 \leq j < k \leq N} \omega(x_j - x_k) + N\frac{\omega(0)}{2} - 2 \sum_{j=1}^N D_\omega	(\rho,\delta_{x_j}) + D_\omega(\rho,\rho),
	\end{split}
	\end{equation*}
	where the non-negativity of $D_\omega$ follows from $\hat{\omega} \geq 0$. Taking $\rho = \rho_\Gamma$, tracing this against the $N$-particle component of $\Gamma$ and summing up all those components yields $i)$.
\end{proof}

For a lower bound on $E(\rho)$, recall some estimates on the kinetic energy. An important tool is the Lieb-Thirring inequality \cite{LS_2010,LT_1975}, which states that there exists a positive constant $1/C > 0$, depending on the dimension $d$ such that
\begin{equation}
\label{eq:LT}
	\Tr(-\Delta)\gamma \geq \frac{1}{C} \int_{\Rbb^d} \rho_\gamma(x)^{1+\frac{2}{d}} \; dx
\end{equation}
for every self-adjoint operator $\gamma$ on $L^2(\Rbb^d)$ such that $0 \leq \gamma \leq 1$. In dimension $d \geq 3$, the constant $1/C$ has been conjectured to be
\begin{equation}
	\clt = \frac{4\pi^2d}{(d+2)} \left(\frac{d}{|\Sbb^{d-1}|} \right)^{\frac{2}{d}}. 
\end{equation}
This constant can be archieved by adding gradient corrections, which is done in \cite{Nam_2018} where the bound
\begin{equation}
	\Tr(-\Delta)\gamma \geq \clt(1-\eps)\int_{\Rbb^d} \rho_\gamma(x)^{1+\frac{2}{d}} \; dx - \frac{\kappa}{\eps^{3+\frac{4}{d}}} \int_{\Rbb^d} |\nabla \sqrt{\rho_\gamma}(x)|^2 \; dx
\end{equation}
for any $\eps > 0$ and some constant $\kappa = \kappa(d)$ for all space dimensions $d \geq 1$ was proved.
Another important bound is the Hoffmann-Ostenhoff inequality \cite{HOHO_1977}:
\begin{equation}
	\Tr(-\Delta)\gamma \geq \int_{\Rbb^d} |\nabla \sqrt{\rho_\gamma}(x)|^2 \; dx
\end{equation}
which does not require the fermionic constraint $0 \leq \gamma \leq 1$ and imposes that $\sqrt{\rho} \in H^1(\Rbb^d)$.
By Lemma \ref{lem:intomega} and the Lieb-Thirring inequality (\ref{eq:LT}), we immediately see

\begin{lemma}[Lower bound on $E(\rho)$]
\label{lem:lowerrough}
	Let $\omega$ be short-range. Then we have
	\begin{align}
		i) &\quad E(\rho) \geq \frac{1}{C} \intR \rho(x)^{\frac{5}{3}} dx - C \intR \rho(x) dx \\
		\intertext{and}
		ii) &\quad E(\rho) \geq \frac{1}{C} \intR \rho(x)^{\frac{5}{3}} dx - C \intR \rho(x)^2 dx
	\end{align}
	for every $\rho \geq 0$ such that $\sqrt{\rho} \in H^1(\mathbb{R}^3)$.
\end{lemma}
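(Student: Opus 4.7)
The plan is to fix any grand-canonical state $\Gamma$ with $\rho_\Gamma = \rho$, write
\begin{equation*}
	\Tcal(\Gamma) + \Ccal_\omega(\Gamma) - D_\omega(\rho) = \Tcal(\Gamma) + \bigl(\Ccal_\omega(\Gamma) - D_\omega(\rho)\bigr),
\end{equation*}
estimate the two pieces separately in terms of $\rho$ alone, and then take the infimum over admissible $\Gamma$. First I would check that all hypotheses of Lemma \ref{lem:intomega} are in force: by Definition \ref{def:short-range}, both $\omega$ and $\widehat\omega$ are nonnegative and integrable, and since $\widehat\omega \in L^1(\Rbb^3)$ the potential $\omega$ is continuous and bounded, with $\omega(0) = \lVert \omega \rVert_\infty < \infty$ by radial monotonicity. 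Hence both parts of Lemma \ref{lem:intomega} are available.

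Part $i)$ of Lemma \ref{lem:intomega} immediately yields $\Ccal_\omega(\Gamma) - D_\omega(\rho) \geq -\tfrac{\omega(0)}{2} \intR \rho$, producing the term $-C\intR \rho$ in bound $i)$; part $ii)$ of the same lemma instead gives $\Ccal_\omega(\Gamma) - D_\omega(\rho) \geq -\tfrac{\lVert \omega \rVert_{L^1}}{2} \intR \rho^2$, which is the source of the $-C\intR\rho^2$ term in bound $ii)$. For the kinetic part, I would reduce to a one-body description: associated with the grand-canonical state $\Gamma$ is its one-body density matrix $\gamma_\Gamma$, which by the Pauli principle satisfies $0 \leq \gamma_\Gamma \leq 1$, has density $\rho_{\gamma_\Gamma} = \rho_\Gamma = \rho$, and obeys $\Tcal(\Gamma) = \Tr(-\Delta)\gamma_\Gamma$. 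The Lieb-Thirring inequality (\ref{eq:LT}) in dimension $d = 3$ then applies directly and yields $\Tcal(\Gamma) \geq \tfrac{1}{C} \intR \rho(x)^{5/3}\,dx$.

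Adding the two estimates and passing to the infimum over $\Gamma$ with $\rho_\Gamma = \rho$ delivers both inequalities, after enlarging $C$ so that a single constant controls both $1/C$ and the prefactors in the negative terms. I expect the only non-mechanical step to be the passage from a grand-canonical Fock-space state to a one-body density matrix respecting the fermionic bound $\gamma \leq 1$: this is the bridge that makes the one-body Lieb-Thirring inequality compatible with the grand-canonical Levy-Lieb setting used throughout the paper, and although it is standard it deserves explicit mention. Everything else is a routine combination of Lemma \ref{lem:intomega} and (\ref{eq:LT}).
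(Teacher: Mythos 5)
Your proposal is correct and follows essentially the same path the paper takes: the paper's proof is the one-line remark that the lemma is immediate from Lemma \ref{lem:intomega} together with the Lieb-Thirring inequality \eqref{eq:LT}, which is exactly the decomposition you carry out. Your extra care about the one-body reduction $\Tcal(\Gamma)=\Tr(-\Delta)\gamma_\Gamma$ with $0\le\gamma_\Gamma\le 1$ is the right point to flag, and the paper itself makes this identification explicit in the display defining $T(\rho)$ just before Definition \ref{def:FLL}.
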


These will be the required lower bounds on $E(\rho)$. In many cases, the negative part is enough for us. For the upper bound, we only bound the kinetic part and then conclude by the positivity of $\omega$.

\begin{lemma}[Upper bound on $E(\rho)$]
\label{lem:upperrough}
	Let $\omega$ be positive. Then there exists a constant $C > 0$ such that
	\begin{equation}
		E(\rho) \leq \clt(1 + C\eps) \intR \rho(x)^{\frac{5}{3}} \; dx + \frac{C(1 + \sqrt{\eps})^2}{\eps} \intR {\left| \nabla \sqrt{\rho}(x) \right|}^2 \; dx
	\end{equation}
	for every $\eps > 0$ and $\rho \geq 0$ such that $\sqrt{\rho} \in H^1(\mathbb{R}^3)$.
\end{lemma}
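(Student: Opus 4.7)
Since $\omega \geq 0$, the natural strategy is to choose a trial state whose two-body energy is bounded by the classical direct term $D_\omega(\rho)$, so that the interaction contribution cancels and only a one-body kinetic term survives. I would pick a one-body density matrix $\gamma$ with $0 \leq \gamma \leq 1$ and $\rho_\gamma = \rho$ (such $\gamma$ exist whenever $\sqrt\rho \in H^1(\Rbb^3)$ by Lieb's representability theorem) and take $\Gamma$ to be the associated quasi-free grand-canonical state on Fock space. Wick's theorem gives
\begin{equation*}
\Ccal_\omega(\Gamma) = D_\omega(\rho) - \tfrac12 \iint \omega(x-y)|\gamma(x,y)|^2\, dx\, dy \leq D_\omega(\rho),
\end{equation*}
the last inequality by $\omega \geq 0$. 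Combined with $\Tcal(\Gamma) = \Tr(-\Delta)\gamma$, this yields
\begin{equation*}
E(\rho) \leq \Tcal(\Gamma) + \Ccal_\omega(\Gamma) - D_\omega(\rho) \leq \Tr(-\Delta)\gamma,
\end{equation*}
and the problem reduces to establishing a matching upper bound, in $\gamma$, for the gradient-corrected Lieb-Thirring inequality of \cite{Nam_2018}.

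To construct a suitable $\gamma$, I would use a semiclassical coherent-state ansatz at scale $\sqrt\eps$. Fix a smooth nonnegative $\phi$ with $\|\phi\|_{L^2}=1$ and let $\phi_\eps(y) = \eps^{-3/4}\phi(y/\sqrt\eps)$; the coherent states are $\phi_{\eps,x,p}(y) = e^{ip\cdot y}\phi_\eps(y-x)$. With $\tilde\rho := \rho * |\phi_\eps|^2$ and local Fermi momentum $p_F(x) := (6\pi^2 \tilde\rho(x))^{1/3}$, set
\begin{equation*}
\gamma_{\mathrm{sc}} = \int_{\Rbb^3 \times \Rbb^3} \mathds{1}(|p| \leq p_F(x))\, |\phi_{\eps,x,p}\rangle\langle\phi_{\eps,x,p}|\, \frac{dx\, dp}{(2\pi)^3}.
\end{equation*}
One checks that $0 \leq \gamma_{\mathrm{sc}} \leq 1$ automatically, that $\rho_{\gamma_{\mathrm{sc}}} = \tilde\rho$, and that a direct computation splits the kinetic energy into a momentum part and a localization part:
\begin{equation*}
\Tr(-\Delta)\gamma_{\mathrm{sc}} = \clt \intR \tilde\rho^{5/3}\, dx + \|\nabla\phi_\eps\|_{L^2}^2 \intR \tilde\rho\, dx,
\end{equation*}
with $\|\nabla\phi_\eps\|_{L^2}^2 = O(\eps^{-1})$. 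To restore the exact constraint $\rho_\gamma = \rho$, I would then write $\gamma = \gamma_{\mathrm{sc}} + \delta\gamma$ with $\delta\gamma$ a self-adjoint correction of density $\rho - \tilde\rho$, controlling its kinetic cost via Hoffmann-Ostenhoff applied to $\sqrt\rho - \sqrt{\tilde\rho}$.

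The final estimate follows by comparing the smoothed quantities with the original ones. Convexity of $t \mapsto t^{5/3}$ together with Jensen on the convolution at scale $\sqrt\eps$ yields $\int\tilde\rho^{5/3} \leq (1 + C\eps)\int\rho^{5/3} + C\eps^{-1}\int|\nabla\sqrt\rho|^2$, while the mixed coefficient $(1+\sqrt\eps)^2/\eps = \eps^{-1} + 2\eps^{-1/2} + 1$ in the statement emerges by Cauchy-Schwarz when one adds the $\eps^{-1}$ localization error of the coherent state to the $\eps^{-1/2}$ cross-term coming from $\delta\gamma$. The main obstacle I foresee is precisely this last balancing step: enforcing $\rho_\gamma = \rho$ exactly while preserving the Pauli constraint $\gamma \leq 1$ and without spoiling the sharp Thomas-Fermi constant $\clt$ in front of $\int\rho^{5/3}$. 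Once this is executed carefully, summing the contributions gives the claimed inequality for every $\eps > 0$, with a constant $C$ depending only on the fixed bump $\phi$.
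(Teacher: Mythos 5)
Your first step is exactly the paper's argument: pick a one-particle density matrix $\gamma$ with $0 \leq \gamma \leq 1$ and $\rho_\gamma = \rho$, pass to the associated quasi-free state $\Gamma_\gamma$, and use Wick's theorem together with $\omega \geq 0$ to conclude
\begin{equation*}
E(\rho) \leq \Tr(-\Delta)\gamma - \frac12\intR\intR\omega(x-y)\,|\gamma(x,y)|^2\,dx\,dy \leq \Tr(-\Delta)\gamma.
\end{equation*}
So far you are in lockstep with the paper.

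Where you diverge, and where there is a genuine gap, is the second half. The paper does not construct $\gamma$ itself; it cites Theorem 3 of \cite{Lewin_2020}, which provides a $\gamma$ with $0\le\gamma=\gamma^*\le1$, $\rho_\gamma=\rho$, and $\Tr(-\Delta)\gamma = \clt(1+C\eps)\intR\rho^{5/3} + C(1+\sqrt\eps)^2\eps^{-1}\intR|\nabla\sqrt\rho|^2$, and the lemma follows immediately. You instead try to rebuild that object from scratch with a coherent-state ansatz $\gamma_{\mathrm{sc}}$ of density $\tilde\rho=\rho*|\phi_\eps|^2$, and you flag yourself that the obstacle is the correction $\delta\gamma$ needed to enforce $\rho_\gamma=\rho$ while preserving the Pauli bound and the sharp constant $\clt$. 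That obstacle is precisely the nontrivial content of the cited result: $\rho-\tilde\rho$ has no definite sign, so $\delta\gamma$ cannot be a simple nonnegative perturbation, and a naive addition can push $\gamma$ outside $[0,1]$; moreover applying Hoffmann--Ostenhoff to $\sqrt\rho-\sqrt{\tilde\rho}$ produces a state of density $|\sqrt\rho-\sqrt{\tilde\rho}|^2$, which is not $\rho-\tilde\rho$, so the decomposition does not close. Your proof is therefore incomplete at exactly the step you identified; the remedy is the construction in \cite[Theorem~3]{Lewin_2020} (which involves a rescaling of the density to make room below the Pauli bound before adding the correction), and the paper simply uses that construction as a black box.
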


\begin{proof}
	In \cite[Theorem 3]{Lewin_2020} Lewin, Lieb and Seiringer construct a fermionic one-particle density matrix $\gamma$ (i.e., an operator satisfying $0 \leq \gamma =\gamma^*\leq 1$) such that it has the required density
	\begin{equation*}
		\rho_\gamma(x) = \rho(x)
	\end{equation*}
	and
	\begin{equation*}
		\Tr(-\Delta)\gamma = \clt(1 + C\eps) \intR \rho(x)^{\frac{5}{3}} \; dx + \frac{C(1 + \sqrt{\eps})^2}{\eps} \intR {\left| \nabla \sqrt{\rho}(x) \right|}^2 \; dx.
	\end{equation*}
	This density matrix is representable by a quasi-free state $\Gamma_\gamma$ in Fock space, see for example \cite{BLS_1994}. Then the two-particle density matrix is given by 
	\begin{equation*}
		\Gamma^{(2)}_\gamma(x_1,x_2,y_1,y_2) = \gamma(x_1,y_1)\gamma(x_2,y_2) - \gamma(x_1,x_2)\gamma(y_1,y_2), 
	\end{equation*}
	which implies that its interaction energy with the pair potential $\omega$ is
	\begin{equation*}
		\Ccal(\Gamma_\gamma) = \frac{1}{2} \Tr \left( \omega \Gamma^{(2)}_\gamma \right) =  \frac{1}{2} \int_{\Rbb^3} \int_{\Rbb^3} \omega(x-y) \left( \rho(x) \rho(y) - |\gamma(x,y)|^2 \right) dx \; dy
	\end{equation*}
	where $\omega$ is seen as the multiplication operator by $\omega(x_1-x_2)$ on the two-particle space $L^2(\Rbb^3) \wedge L^2(\Rbb^3)$.
	So, when taking the trial state $\Gamma_\gamma$, we get
	\begin{multline*}
		E(\rho) \leq \clt(1 + C\eps) \intR \rho(x)^{\frac{5}{3}} \; dx + \frac{C(1 + \sqrt{\eps})^2}{\eps} \intR {\left| \nabla \sqrt{\rho}(x) \right|}^2 \; dx \\
		- \frac{1}{2} \int_{\Rbb^3} \int_{\Rbb^3} \omega(x-y)|\gamma(x,y)|^2 dx \; dy.
	\end{multline*}
	We now conclude by the positivity of $\omega$.
\end{proof}
}

\section{Localizing the density}\label{chapter:loc}
{
Our goal in this Section is to relate the total energy of some density $\rho$ to the sum of the energies of localized densities.
}

\subsection{Upper bound}\label{sec:localupper}
{Here, we care about the upper bound. Since in our definition of short-range interactions, we assume our interaction to be bounded by the Coulomb interaction, we can almost directly use the arguments from \cite{Lewin_2020}.

Let $C_1 = (-1/2,1/2)^3$ be the unit cube and notice that it is the union of 24 disjoint identical tetrahedra $\Delta_1, \dots ,\Delta_{24}$ which have all volume 1/24. From this, we obtain a tiling of $\Rbb^3$ with tetrahedra
\begin{equation}
	\Rbb^3 = \bigcup_{z \in \Zbb^3} (\overline{C_1} + z) = \bigcup_{z \in \Zbb^3} \bigcup_{j = 1}^{24} (\overline{\Delta_j} + z)
\end{equation}
and a corresponding partition of unity
\begin{equation}
	\sum_{z \in \Zbb^3} \sum_{j = 1}^{24} \mathds{1}_{\ell\Delta_j}(x - \ell z) \equiv 1 \quad for \; a.e. \; x \in \Rbb^3,
\end{equation}
for any fixed tile-size $\ell > 0$. We denote any $\Delta_j$ as $\Delta_j = \mu_j \Delta$ where $\Delta$ is a reference tetrahedron with $0$ as its center of mass. From now on, $\mu_j = (z_j, R_j) \in C_1 \times SO(3)$ is the translation and rotation such that $\Delta_j = R_j \Delta - z_j$. Now define the characteristic function which will be used in our upper bound:
\begin{equation}\label{eq:unityupper}
	\chi_{\ell,\delta,j} \coloneqq \frac{1}{(1-\delta /\ell)^3}\mathds{1}_{\ell\mu_j(1-\delta /\ell)\Delta} \ast \eta_\delta.
\end{equation}
Here $\eta_\delta(x) = (10/\delta)^3\eta_1(10x/\delta)$ where $\eta_1$ is a fixed $C_c^\infty$ non-negative radial function with support in the unit ball and $\intR \eta_1 = 1$. When $\delta \leq \ell/2$, our function $\chi_{\ell,\delta,j}$ has its support inside $\ell \Delta_j$, at a distance proportional to $\delta$ from its boundary. Note that
\begin{equation*}
	\intR \chi_{\ell,\delta,j} = \left| \ell \Delta \right| = \frac{\ell^3}{24}.
\end{equation*}
Hence the function
\begin{equation*}
	\sum_{z \in \Zbb^3} \sum_{j = 1}^{24} \chi_{\ell,\delta,j} (\cdot - \ell z)
\end{equation*}
is our incomplete partition of unity in the sense that it is $(1-\delta/\ell)^{-3} > 1$ inside the tiles but vanishes in a neighbourhood of the boundary of the tiles. We obtain a full partition of unity after averaging over the translations of the tiling:
\begin{equation}
\label{eq:incomp_unity}
	\frac{1}{\ell^3} \int_{C_\ell} \sum_{z \in \Zbb^3} \sum_{j = 1}^{24} \chi_{\ell,\delta,j} (x - \ell z - \tau) d\tau = 1 \quad \textrm{for} \; a.e. \; x \in \Rbb^3.
\end{equation}
Here and subsequently $C_\ell = (-\ell/2,\ell/2)^3 = \ell C_1$ is the cube of side length $\ell$. We see this by noticing that for any $f \in L^1(\Rbb^3)$
\begin{equation*}
	\begin{split}
	\int_{C_\ell} \sum_{z \in \Zbb^3} f(x - \ell z - \tau)d\tau & = \int_{C_\ell} \sum_{z \in \ell \Zbb^3} f(x - z - \tau)d\tau \\
	& = \intR f(x - \tau)d\tau = \intR f(\tau) d\tau.
\end{split}
\end{equation*}

The main result in this section is the following upper bound which is taken almost directly from \cite{Lewin_2020}.
\begin{proposition}[Upper bound in terms of local densities]
\label{prop:energylocalupper}
	Let $\omega$ be an interaction potential such that $D_\omega(\rho) \leq C D_c(\rho)$ for a constant $C$ and any positive $\rho$. Then there exists a universal constant $C$ such that for any $\sqrt{\rho} \in H^1(\Rbb^3)$, any $0 < \delta < \ell/2$ and any $0 < \alpha < 1/2$,
	\begin{multline}
		E(\rho) \leq \left(\int_{1-\alpha}^{1+\alpha} \frac{ds}{s^4} \right)^{-1} \int_{1-\alpha}^{1+\alpha} \frac{dt}{t^4} \int_{SO(3)} dR \int_{C_{t\ell}} \frac{d\tau}{(t\ell)^3} \times \\
		\times \sum_{z \in \Zbb^3} \sum_{j = 1}^{24} E \left( \chi_{t\ell,t\delta,j}(R \cdot - t\ell z - \tau)\rho \right) +  C \delta^2 \log(\alpha^{-1}) \intR \rho^2.
	\end{multline}
	In particular, we can find $\ell' \in (\ell (1- \alpha), \ell (1+ \alpha))$, $\delta' \in (\delta(1- \alpha), \delta(1+ \alpha))$ and an isometry $(\tau, R) \in \Rbb^3 \times SO(3)$ such that
	\begin{equation}
		\sum_{z \in \Zbb^3} \sum_{j = 1}^{24} E \left( \chi_{\ell',\delta',j}(R \cdot - \ell' z - \tau)\rho \right) +  C \delta^2 \log(\alpha^{-1}) \intR \rho^2.
	\end{equation}
\end{proposition}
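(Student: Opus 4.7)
The plan is to combine an averaging argument with the convexity of the grand-canonical Levy-Lieb functional. For the probability average
$$\bigl\langle f\bigr\rangle := Z^{-1}\int_{1-\alpha}^{1+\alpha}\frac{dt}{t^4}\int_{SO(3)}dR\int_{C_{t\ell}}\frac{d\tau}{(t\ell)^3}\, f(t, R, \tau),\qquad Z := \int_{1-\alpha}^{1+\alpha}\frac{ds}{s^4},$$
(with normalized Haar measure on $SO(3)$) and the random density
$$\rho^{(t,R,\tau)}(x) := \Bigl(\sum_{z\in\Zbb^3}\sum_{j=1}^{24}\chi_{t\ell,t\delta,j}(Rx - t\ell z - \tau)\Bigr)\rho(x),$$
equation (\ref{eq:incomp_unity}) applied at scale $t\ell$ gives $\bigl\langle \rho^{(t,R,\tau)}(x)\bigr\rangle = \rho(x)$ for a.e. $x$. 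Since $\Tcal$, $\Ccal_\omega$ and the density map are all linear in the state $\Gamma$, averaging optimal states yields a valid state for $\rho$ whose energy is the average, giving the convexity bound
$$F_{\mathrm{LL}}^\omega(\rho) \leq \bigl\langle F_{\mathrm{LL}}^\omega(\rho^{(t,R,\tau)})\bigr\rangle.$$

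\textbf{Local estimate on a single realization.} For each fixed $(t,R,\tau)$ and $\delta < \ell/2$, the functions $\{\chi_{t\ell,t\delta,j}(R\cdot - t\ell z - \tau)\}_{z,j}$ have pairwise disjoint supports, since each lives in a mollified shrunken tetrahedron strictly inside its tile. Writing $\rho_{z,j}^{(t,R,\tau)} := \chi_{t\ell,t\delta,j}(R\cdot - t\ell z - \tau)\rho$ and picking an optimal state $\Gamma_{z,j}$ for each, the Fock-space tensor product $\Gamma := \bigotimes_{z,j}\Gamma_{z,j}$ has density $\rho^{(t,R,\tau)}$ and
$$\Tcal(\Gamma) + \Ccal_\omega(\Gamma) = \sum_{z,j}\bigl[\Tcal(\Gamma_{z,j}) + \Ccal_\omega(\Gamma_{z,j})\bigr] + \sum_{(z,j)\neq(z',j')} D_\omega\bigl(\rho_{z,j}^{(t,R,\tau)}, \rho_{z',j'}^{(t,R,\tau)}\bigr),$$
because the kinetic energy is additive and off-diagonal two-body interactions between disjointly supported pieces become classical. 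Subtracting $D_\omega(\rho^{(t,R,\tau)})$ and combining the diagonal and off-diagonal direct terms yields the clean local bound
$$E(\rho^{(t,R,\tau)}) \leq \sum_{z,j} E\bigl(\rho_{z,j}^{(t,R,\tau)}\bigr).$$

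\textbf{Assembly, variance estimate, and mean value.} Combining the last two displays,
$$E(\rho) \leq \Bigl\langle \sum_{z,j} E\bigl(\rho_{z,j}^{(t,R,\tau)}\bigr)\Bigr\rangle + \Bigl[\bigl\langle D_\omega(\rho^{(t,R,\tau)})\bigr\rangle - D_\omega(\rho)\Bigr].$$
The bracketed residual is the $D_\omega$-variance of $\rho^{(t,R,\tau)}$, nonnegative since $\widehat{\omega}\geq 0$, and representable as $\tfrac{1}{2}\bigl\langle\bigl\langle D_\omega(\rho^{(t,R,\tau)} - \rho^{(t',R',\tau')})\bigr\rangle\bigr\rangle$. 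Bounding this term is the main obstacle: using the hypothesis $D_\omega(\sigma)\leq C D_c(\sigma)$ reduces the estimate to the Coulomb case, after which the Lewin-Lieb-Seiringer variance bound \cite{Lewin_2020} applies. The scale averaging with weight $dt/t^4$ over the short interval $(1-\alpha,1+\alpha)$ is precisely what produces the $\log(\alpha^{-1})$ factor in the bound $C\delta^2\log(\alpha^{-1})\intR \rho^2$. The second statement of the proposition then follows by selecting $(t_*, R_*, \tau_*)$ at which the integrand is at most its $\mu$-average; setting $\ell' := t_*\ell$ and $\delta' := t_*\delta$ (and relabelling $R := R_*$, $\tau := \tau_*$) concludes the argument.
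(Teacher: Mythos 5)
Your proposal is correct and follows essentially the same route as the paper: average the tensor-product trial state over translations, rotations and the $t$-rescaling, use disjoint supports to split the energy across tiles, identify the residual as the direct-term variance, reduce to Coulomb via $D_\omega\leq CD_c$, and quote the Lewin--Lieb--Seiringer variance bound, then pick $(t,R,\tau)$ by a mean-value argument. One harmless stray remark: you assert the residual is nonnegative because $\widehat\omega\geq 0$, but the proposition's hypothesis is only $D_\omega\leq CD_c$; nonnegativity is not needed for the estimate, and the paper does not invoke it either.
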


\begin{proof}
	The proof follows directly \cite{Lewin_2020} in the sense that we repeat the proof with our interaction energy $D_\omega$. Using (\ref{eq:incomp_unity}), we write
	\begin{equation*}
		\rho(x) = \frac{1}{\ell^3} \int_{C_\ell} \sum_{z \in \Zbb^3} \sum_{j = 1}^{24} \chi_{\ell,\delta,j}(x-\ell z - \tau)\rho(x) \; d\tau.
	\end{equation*}
	Now let $\Gamma_{\tau,\ell,\delta} = \bigotimes_{z \in \Zbb^3} \bigotimes_{j=1}^{24} \Gamma_{\tau,\ell,\delta,z,j}$ for any $\tau \in C_\ell$, where $\Gamma_{\tau,\ell,\delta,z,j}$ is the minimizer for the Levy-Lieb energy with density
	\begin{equation*}
		\rho_{\Gamma_{\tau,\ell,\delta,z,j}}(x) = \chi_{\ell,\delta,j}(x-\ell z - \tau)\rho(x).
	\end{equation*}
	Because the states $\Gamma_{\tau,\ell,\delta,z,j}$ have disjoint support, we can anti-symmetrize the state $\Gamma_{\tau,\ell,\delta}$ in a standard manner. Denote this anti-symmetrized state with a slight abuse of notation again by $\Gamma_{\tau,\ell,\delta}$. Its energy and density are the same after anti-symmetrizing and we obtain
	\begin{equation*}
		\rho_{\Gamma_{\tau,\ell,\delta}}(x) = \sum_{z \in \Zbb^3} \sum_{j = 1}^{24} \chi_{\ell,\delta,j}(x-\ell z - \tau)\rho(x).
	\end{equation*}
	Now take the trial state
	\begin{equation*}
		\Gamma_{\ell,\delta} = \frac{1}{\ell^3} \int_{C_\ell} \Gamma_{\tau,\ell,\delta} \; d\tau
	\end{equation*}
	and notice that $\rho_{\Gamma_{\ell,\delta}} = \ell^{-3} \int_{C_\ell} \rho_{\Gamma_{\tau,\ell,\delta}} = \rho$. We get
	\begin{equation}
	\begin{split}
		E(\rho) & \leq \frac{1}{\ell^3} \int_{C_\ell} \sum_{z \in \Zbb^3} \sum_{j = 1}^{24} E \left( \chi_{\ell,\delta,j}(\cdot - \ell z - \tau)\rho \right) \; d\tau + \frac{1}{\ell^3} \int_{C_\ell} D_\omega(\rho_{\Gamma_{\tau,\ell,\delta}}) \; d\tau - D_\omega(\rho) \\
		& = \frac{1}{\ell^3} \int_{C_\ell} \sum_{z \in \Zbb^3} \sum_{j = 1}^{24} E \left( \chi_{\ell,\delta,j}(\cdot - \ell z - \tau)\rho \right) \; d\tau + \frac{1}{\ell^3} \int_{C_\ell} D_\omega(\rho_{\Gamma_{\tau,\ell,\delta}} - \rho) \; d\tau,
	\end{split}
	\end{equation}
	where we have used that the energy with the direct term subtracted of a tensor product of states of disjoint support is the sum of the energies of the pieces \cite{Lewin_2017}. This is actually always true for the interaction, for the kinetic energy we need once more that our states have disjoint support. Next, we average everything over rotations of the tiling and replace $\ell$ by $t\ell$ and $\delta$ by $t\delta$ before averaging over $t$, but most importantly we use that we can estimate our interaction energy against the Coulomb energy to get
	\begin{equation}
	\begin{split}
		& \left(\int_{1-\alpha}^{1+\alpha} \frac{ds}{s^4} \right)^{-1} \int_{1-\alpha}^{1+\alpha} \frac{dt}{t^4} \int_{SO(3)} dR \frac{1}{(t\ell)^3} \int_{C_{t\ell}} D_\omega(\rho_{\Gamma_{\tau,t\ell,t\delta,R}} - \rho) \; d\tau \\
		\leq & \, \left(\int_{1-\alpha}^{1+\alpha} \frac{ds}{s^4} \right)^{-1} \int_{1-\alpha}^{1+\alpha} \frac{dt}{t^4} \int_{SO(3)} dR \frac{1}{(t\ell)^3} \int_{C_{t\ell}} D_c(\rho_{\Gamma_{\tau,t\ell,t\delta,R}} - \rho) \; d\tau \\
		\leq & \, C \delta^2\log(\alpha^{-1}) \intR \rho^2
	\end{split}
	\end{equation}
	where the last estimate is directly from \cite{Lewin_2020}.
\end{proof}
}

\subsection{Lower bound}\label{sec:locallower}
{For the lower bound, we introduce, similarly to (\ref{eq:unityupper})
\begin{equation}
	\xi_{\ell, \delta, j} \coloneqq \mathds{1}_{\ell \mu_j \Delta} \ast \eta_\delta
\end{equation}
which forms a smooth partition of unity without holes
\begin{equation*}
	\sum_{z \in \Zbb^3} \sum_{j = 1}^{24} \xi_{\ell, \delta, j}(x - \ell z) = 1.
\end{equation*}
Recall that $\eta_1$ is a fixed $C_c^\infty$ non-negative radial function with support in the unit ball such that $\intR \eta_1 = 1$ and $\eta_\delta(x) = (10/\delta)^3\eta_1(10x/\delta)$.

\begin{proposition}
\label{prop:energylocallower}
	Let $\omega$ be a positive interaction such that $\omega \in L^1$ and $\intR |x|\omega(x) \; dx < \infty$. Then there exists a universal constant $C$ such that for all $\sqrt{\rho} \in H^1(\Rbb^3)$, tetrahedra $\Delta$ of size $1/24$ and $0 < \delta/\ell < 1/C$
	\begin{equation}
		E(\rho) \geq \frac{1}{\ell^3 |\Delta|} \int_{SO(3)} \intR E \left( \rho(\indld)(R \cdot - z) \right) dz \, dR - \frac{C}{\ell \delta}\intR \rho_\Gamma - C\frac{1+\delta}{\ell} \intR \rho_\Gamma^2.
	\end{equation}
\end{proposition}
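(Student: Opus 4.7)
I would fix a near-optimal grand-canonical state $\Gamma$ with $\rho_\Gamma=\rho$ saturating $\Tcal(\Gamma)+\Ccal_\omega(\Gamma)-D_\omega(\rho)\le E(\rho)+\varepsilon$, and denote by $\gamma$ its one-body density matrix and by $\rho^{(2)}$ its two-body pair density. The whole plan exploits the continuous partition of unity
\[
\frac{1}{\ell^{3}|\Delta|}\int_{SO(3)}\int_{\Rbb^{3}}(\indld)(Rx-z)\,dz\,dR\;\equiv\;1,
\]
which follows from $\int\indld=|\ell\Delta|$ and the normalisation of Haar measure. The one-variable version of this identity will localise the kinetic energy, and its two-variable analogue will localise the interaction.

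\paragraph{Kinetic energy via IMS.} Set $\chi_{z,R}(x):=\sqrt{(\indld)(Rx-z)}$, so that $(\ell^{3}|\Delta|)^{-1}\iint\chi_{z,R}^{2}\,dz\,dR=1$. The continuous IMS identity
\[
-\Delta \;=\; \frac{1}{\ell^{3}|\Delta|}\int\!\!\int\chi_{z,R}(-\Delta)\chi_{z,R}\,dz\,dR\;-\;\frac{1}{\ell^{3}|\Delta|}\int\!\!\int|\nabla\chi_{z,R}|^{2}\,dz\,dR
\]
traced against $\gamma$ yields $\Tcal(\Gamma)$ as an average of $\Tr(-\Delta)(\chi_{z,R}\gamma\chi_{z,R})$, minus a gradient remainder $(\ell^{3}|\Delta|)^{-1}\bigl(\int|\nabla(\indld)|^{2}/(4(\indld))\bigr)\int\rho$. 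The localised operator $\chi_{z,R}\gamma\chi_{z,R}$ has density $(\indld)(R\cdot-z)\rho$, matching the density wanted on the right-hand side. With $\eta_{1}$ chosen to vanish smoothly enough at the boundary of its support for $\sqrt{\eta_{1}}$ to be $C^{1}$, a direct inspection of the regularised tetrahedron gives $\int|\nabla(\indld)|^{2}/(\indld)\le C\ell^{2}/\delta$, contributing exactly the claimed kinetic error $\tfrac{C}{\ell\delta}\int\rho$.

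\paragraph{Interaction energy.} The pairwise identity is
\[
\frac{1}{\ell^{3}|\Delta|}\int\!\!\int(\indld)(Rx-z)(\indld)(Ry-z)\,dz\,dR\;=\;\frac{f(|x-y|)}{\ell^{3}|\Delta|},
\]
with $f$ the $SO(3)$-averaged autoconvolution of $\indld$. A geometric estimate gives $0\le 1-f(r)/(\ell^{3}|\Delta|)\le C\min(1,(r+\delta)/\ell)$, which is the "surface loss" when the tetrahedron is translated by $r$. Using this to split $\omega(x-y)$ and applying the result to both $\Ccal(\Gamma)$ and $D_\omega(\rho)$ produces
\[
\Ccal(\Gamma)-D_{\omega}(\rho)\;=\;\frac{1}{\ell^{3}|\Delta|}\int\!\!\int\!\bigl[\Ccal^{(z,R)}(\Gamma)-D_{\omega}(\xi_{z,R}\rho)\bigr]\,dz\,dR + E_{\mathrm{int}},
\]
where $\xi_{z,R}(x)=(\indld)(Rx-z)$ and $\Ccal^{(z,R)}(\Gamma):=\tfrac12\iint\omega(x-y)\xi_{z,R}(x)\xi_{z,R}(y)\rho^{(2)}(x,y)\,dx\,dy$. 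The residual $E_{\mathrm{int}}$ carries the weight $\omega(x-y)(1-f/(\ell^{3}|\Delta|))$ acting on $\rho^{(2)}-\rho\otimes\rho$. Bounding this in absolute value by $\rho^{(2)}+\rho\otimes\rho$ and applying Young's inequality with the $L^{1}$-kernels $\omega$ and $\omega(\cdot)|\cdot|$ (the latter finite by the short-range hypothesis $\int|x|\omega<\infty$), and using Lemma~\ref{lem:intomega} to dispose of the $\rho^{(2)}$ part, one obtains $|E_{\mathrm{int}}|\le\tfrac{C(1+\delta)}{\ell}\int\rho^{2}$.

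\paragraph{Assembling and main obstacle.} Granted that for each $(z,R)$ the localised pair $(\chi_{z,R}\gamma\chi_{z,R},\,(\chi_{z,R}\otimes\chi_{z,R})\rho^{(2)}(\chi_{z,R}\otimes\chi_{z,R}))$ are the $1$- and $2$-body reductions of a genuine grand-canonical state on Fock space with density $(\indld)(R\cdot-z)\rho$, the variational definition of the Levy-Lieb functional yields
\[
\Tr(-\Delta)(\chi_{z,R}\gamma\chi_{z,R})+\Ccal^{(z,R)}(\Gamma)-D_{\omega}(\xi_{z,R}\rho)\;\geq\; E\bigl(\rho(\indld)(R\cdot-z)\bigr),
\]
and summing the two decompositions and letting $\varepsilon\to 0$ proves the proposition. \emph{The main obstacle} is precisely this representability step: the naive multiplicative localisation $\prod_{j}\chi_{z,R}(x_{j})\,\Gamma\,\prod_{j}\chi_{z,R}(x_{j})$ has density $\xi_{z,R}(x)\,\widetilde{\rho}_{z,R}(x)\ne \xi_{z,R}\rho$ because of the cross-particle factors $\prod_{j\ge 2}\xi_{z,R}(x_{j})$, so constructing a local state with the correct density (while still realising the localised RDMs up to controllable errors) requires the more delicate argument of \cite{Lewin_2020}. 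In contrast to their Coulomb setting the Graf-Schenker screening inequality is unavailable, but the short-range hypothesis $\int|x|\omega(x)\,dx<\infty$, entering through Young's inequality in the previous paragraph, plays its role.
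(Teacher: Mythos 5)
Your strategy is essentially the paper's: an IMS decomposition for the kinetic energy, a Graf--Schenker-type average for the interaction with the $L^1$ short-range hypotheses replacing Coulomb screening, and control of the residuals by $\tfrac{C}{\ell\delta}\int\rho$ and $\tfrac{C(1+\delta)}{\ell}\int\rho^2$. Your pointwise estimate $0\le 1-\hti_{\ell,\delta}(x)\le C\min(1,(|x|+\delta)/\ell)$ is exactly the paper's combination of its claims (ii) and (iii), and integrating it against $\omega$ using $\omega\in L^1$ and $\int|x|\omega<\infty$ gives the same $\tfrac{C(1+\delta)}{\ell}$.

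The one place you go wrong is in what you call the \emph{main obstacle}. You are right that the naive multiplicative conjugation $\prod_j\chi(x_j)\,\Gamma_n\,\prod_j\chi(x_j)$ on each $n$-particle sector produces the wrong density, but this is not the construction one should use, and nothing ``up to controllable errors'' is needed. For any $0\le f\le1$ there is a standard \emph{geometric localization} $\Gamma_{|f}$ on Fock space (the state does \emph{not} preserve particle number: particles outside the localization region are traced out) whose $k$-particle reduced density matrices are \emph{exactly} $f^{\otimes k}\Gamma^{(k)}f^{\otimes k}$; in particular, taking $f=\sqrt{\xi_{z,R}}$ gives a genuine grand-canonical state with one-body RDM $f\gamma f$ and density \emph{exactly} $\xi_{z,R}\rho_\Gamma$, and two-body interaction energy exactly $\tfrac12\iint\omega\,\xi_{z,R}(x)\xi_{z,R}(y)\rho^{(2)}(x,y)\,dx\,dy$. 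This is the result cited in the paper (Hainzl--Lewin--Solovej, Appendix A.1.2; Lewin, Section 3) and is purely a statement about Fock-space states, independent of the form of $\omega$. So the representability step is not where short range versus Coulomb matters, and it does not require anything from Lewin--Lieb--Seiringer's paper; the Coulomb-specific Graf--Schenker screening is only an alternative route to the \emph{interaction} error bound, which you already handled by Young's inequality. Once you replace ``requires the more delicate argument of \cite{Lewin_2020}, with RDMs realised up to controllable errors'' by ``invoke geometric localization on Fock space, which realizes the localized RDMs exactly,'' your proof closes.
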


\begin{proof}
	Since we have a fixed tetrahedron $\Delta$, let $\xi_{\ell, \delta} \coloneqq \indld$. We start with the kinetic energy. By the IMS formula, on $\Rbb^{3N}$, we have as in \cite{Lewin_2017}
	\begin{multline*}
		-\sum_{i=1}^N \Delta_{x_i} = \frac{1}{|\ell\Delta|} \int_{SO(3)} \intR \sqrt{\xi_{\ell, \delta}(R \cdot - z)} \left( -\sum_{i=1}^N \Delta_{x_i} \right) \sqrt{\xi_{\ell, \delta}(R \cdot - z)} \\
		- \frac{N}{|\ell\Delta|} \intR |\nabla \sqrt{\xi_{\ell, \delta}(R \cdot - z)}|^2
	\end{multline*}
	(see also \cite[Eq. (30)]{Hainzel_2009_2}). Using
	\begin{equation*}
		\frac{1}{|\ell\Delta|} \intR |\nabla \sqrt{\mathds{1}_{\ell\Delta} \ast \eta_\delta (R \cdot - z)}|^2 \leq \frac{C}{\ell\delta},
	\end{equation*}
	we obtain
	\begin{equation}
	\label{eq:energylocallower1}
		\Tcal(\Gamma) \geq \frac{1}{|\ell\Delta|} \int_{SO(3)} \intR \Tcal(\Gamma_{|\sqrt{\xi_{\ell, \delta}(R \cdot - z)}}) - \frac{C}{\ell\delta} \intR \rho_\Gamma,
	\end{equation}
	where we use the notion of quantum localized states $\Gamma_{|f}$, that is the unique state which has the $k$-particle reduced density matrices $f^{\otimes k}\Gamma^{(k)}f^{\otimes k}$. This can be recalled, for example, in \cite[Appendix A.1.2]{Hainzel_2009_2} or in \cite[Section 3]{Lewin_2011} for the truncated Fock space. By definition, the density of the localized state is $\rho_{\Gamma_{|\sqrt{\mathds{1}_{\ell\Delta} \ast \eta_\delta (R \cdot - z)}}} = \rho_{\Gamma}\mathds{1}_{\ell\Delta} \ast \eta_\delta (R \cdot - z)$. Now we prove a similar lower bound for the interaction potential. It will turn out that we can still use a variant of the Graf-Schenker inequality \cite{GS_1995}, even though we don't have a Coulomb potential. For any $\ell$ and $\delta$, define the function $\hti_{\ell,\delta}$ by
	\begin{equation*}
		\hti_{\ell,\delta}(x) \coloneqq \frac{1}{|\ell\Delta|} \int_{SO(3)} \left( \indld \right) \ast \left( \indvar{-\ell\Delta}{\delta} \right) (Rx) \; dR.
	\end{equation*}
	Notice that
	\begin{equation*}
		\hti_{\ell,\delta}(x-y) = \frac{1}{|\ell\Delta|} \int_{SO(3)} \intR (\indvar{R^{-1}\ell\Delta + z}{\delta})(y)(\indvar{R^{-1}\ell\Delta + z}{\delta})(x) \; dz \; dR.
	\end{equation*}
	Since $0\leq \hti_{\ell,\delta} \leq 1$ this implies that for $\widetilde{\omega}_{\ell,\delta}(x) \coloneqq \omega(x) - \hti_{\ell,\delta}(x)\omega(x) \geq 0$ we have the bound, again as in \cite{Lewin_2020}
	\begin{multline}
	\label{eq:energylocallower2}
		\Ccal_\omega(\Gamma) - D_\omega(\rho_\Gamma) - \frac{1}{|\ell\Delta|} \int_{SO(3)} \intR \left\lbrace \Ccal_\omega \left( \Gamma_{|\sqrt{\xi_{\ell, \delta}(R \cdot - z)}} \right) - D_\omega(\rho_{\Gamma}\xi_{\ell, \delta}(R \cdot - z) \right\rbrace dz \; dR \\
		= \Ccal_{\widetilde{\omega}_{\ell,\delta}}(\Gamma) - D_{\widetilde{\omega}_{\ell,\delta}}(\rho_\Gamma) \geq -\frac{1}{2}\intR \widetilde{\omega}_{\ell,\delta} \intR \rho_\Gamma^2.
	\end{multline}
	For the last inequality, we used Lemma \ref{lem:intomega} $ii)$. We claim that under our assumption on $\omega$, there is a constant $C$, which is independent of $\ell$ and $\delta$, such that
	\begin{equation}
	\label{eq:energylocallower3}
		\intR \widetilde{\omega}_{\ell,\delta} \leq \frac{C(1+\delta)}{\ell}.
	\end{equation}
	If we prove this, then equations (\ref{eq:energylocallower1}) and (\ref{eq:energylocallower2}) together will yield our Proposition. To see (\ref{eq:energylocallower3}), we write
	\begin{multline*}
		\intR \widetilde{\omega}_{\ell,\delta}(x) \; dx = \intR (1 - \hti_{\ell,\delta}(x))\omega(x) \; dx \\
		\leq \intR |1-\hti_{\ell,\delta}(0)|\omega(x) \; dx + \intR |\hti_{\ell,\delta}(0) - \hti_{\ell,\delta}(x)|\omega(x) \; dx
	\end{multline*}
	and claim that the following formulae hold:
	\begin{equation*}
	\begin{split}
		i) &\quad \hti_{\ell,\delta}(x) = \hti_{1,\delta/\ell}(x/\ell) \\
		ii)  &\quad |\hti_{1,\delta/\ell}(0) - \hti_{1,\delta/\ell}(x)| \leq C|x| \\
		iii) &\quad |1 - \hti_{1,\delta/\ell}(0)| \leq C \delta/\ell.
	\end{split}
	\end{equation*}
	Formula $i)$ is just a routine computation, which we omit. For formula $iii)$, we write
	\begin{equation*}
		\hti_{1,\delta/\ell}(0) = \frac{1}{|\Delta|} \intR \left( (\indvar{\Delta}{\delta/\ell})(y) \right)^2 \; dy
	\end{equation*}
	and
	\begin{equation*}
		1 = \frac{1}{|\Delta|} \intR \mathds{1}_\Delta(y) \; dy.
	\end{equation*}
	Now, for all $y \in \Rbb^3$ for which $\dist(y,\partial \Delta) > C\delta/\ell$, we have $\left( (\indvar{\Delta}{\delta/\ell})(y) \right)^2 = \mathds{1}_\Delta(y)$. In this context, the constant $C$ only depends on the chosen function $\eta$. The set where this fails is a set of order $\delta/\ell$ since we always have $\delta/\ell$ small. So we obtain
	\begin{equation*}
	\begin{split}
		|1 - \hti_{1,\delta/\ell}(0)| & \leq \frac{1}{|\Delta|} \int_{\dist(y,\partial \Delta) \leq C\delta/\ell} \left| \left( (\indvar{\Delta}{\delta/\ell})(y) \right)^2 - \mathds{1}_\Delta(y) \right| \; dy \\
		& \leq C \int_{\dist(y,\partial \Delta) \leq C\delta/\ell} dy \\
		& \leq C\frac{\delta}{\ell}.
	\end{split}
	\end{equation*}
	For formula $ii)$, we claim that $\hti_{1,\delta/\ell}$ is Lipschitz with a Lipschitz constant independent of $\ell$ and $\delta$. To see that, first notice that the function $\mathds{1}_\Delta \ast \mathds{1}_{-\Delta}$ is clearly Lipschitz. Now we argue that everything we are doing to generate $\hti$  from this function does not change the Lipschitz constant. Indeed, if $f$ is Lipschitz and $g \in L^1$ is non-negative, we have:
	\begin{equation*}
		|(g \ast f)(x) - (g \ast f)(y)| \leq \int g(z)|f(x-z)-f(y-z)| \; dz \leq C \left( \int g(z) \; dz \right) |x-y|.
	\end{equation*}
	Here, we use $g = \eta_{\delta/\ell} \ast \eta_{\delta/\ell}$ for which $\int g = 1$ holds independently of $\ell$ and $\delta$. Therefore, $\left( \indvar{\Delta}{\delta/\ell} \right) \ast \left( \indvar{-\Delta}{\delta/\ell} \right)$ is Lipschitz and integrating over $SO(3)$ leaves this invariant. This concludes $ii)$ and thereby the whole proof.
\end{proof}

\begin{corollary}[Lower bound in terms of local densities]
\label{cor:energylocallower}
	Let $\omega$ be a positive interaction such that $\omega \in L^1$ and $\intR |x|\omega(x) \; dx < \infty$. Then there exists a constant $C$ such that for any $\sqrt{\rho} \in H^1(\Rbb^3)$ and any $0 < \delta/\ell < 1/C$
	\begin{equation}
		E(\rho) \geq \frac{1}{\ell^3} \sum_{z \in \Zbb^3} \sum_{j = 1}^{24} \int_{SO(3)} \int_{C_\ell} E \left( \xi_{\ell, \delta, j}(R \cdot - \ell z - \tau) \rho \right) d\tau \, dR - \frac{C}{\ell \delta} \intR \rho - \frac{C(1+\delta)}{\ell} \intR \rho^2.
	\end{equation}
	In particular, we can find an isometry $(\tau, R) \in \Rbb^3 \times SO(3)$ such that
	\begin{equation}
		E(\rho) \geq \sum_{z \in \Zbb^3} \sum_{j = 1}^{24} E \left( \xi_{\ell, \delta, j}(R \cdot - \ell z - \tau) \rho \right) - \frac{C}{\ell \delta} \intR \rho - \frac{C(1+\delta)}{\ell} \intR \rho^2.
	\end{equation}
\end{corollary}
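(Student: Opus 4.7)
The plan is to deduce Corollary \ref{cor:energylocallower} from Proposition \ref{prop:energylocallower} by rewriting the integral over $(R, z) \in SO(3) \times \Rbb^3$ that appears there as a sum over $(j, z') \in \{1,\dots,24\} \times \Zbb^3$ with a residual average over $(\tau, R) \in C_\ell \times SO(3)$, and then passing from the average to a pointwise choice.

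First I would convert the single-tile integrand in Proposition \ref{prop:energylocallower} into its $24$ rotated-translated copies. Using $\Delta_j = R_j \Delta - z_j$ together with the radiality of $\eta_\delta$, a direct computation gives
\begin{equation*}
\xi_{\ell,\delta,j}(y) \;=\; (\mathds{1}_{\ell\Delta} \ast \eta_\delta)\bigl(R_j^{-1}(y + \ell z_j)\bigr), \qquad j=1,\dots,24.
\end{equation*}
Fix $j$ and substitute $\widetilde R = R_j^{-1} R$ (invariance of Haar measure) together with $\widetilde z = R_j^{-1}(\ell z' + \tau - \ell z_j)$. Since $\{\ell z' + C_\ell\}_{z' \in \Zbb^3}$ tiles $\Rbb^3$ and $R_j$ is an isometry, the family $\{R_j^{-1}(\ell z' - \ell z_j + C_\ell)\}_{z' \in \Zbb^3}$ is also a tiling with Jacobian one, so the $(z', \tau)$ sum-integral collapses into an integral of $\widetilde z$ over $\Rbb^3$:
\begin{equation*}
\sum_{z' \in \Zbb^3} \int_{C_\ell} \int_{SO(3)} E\bigl(\xi_{\ell,\delta,j}(R \cdot - \ell z' - \tau)\rho\bigr)\, dR\, d\tau \;=\; \int_{SO(3)} \intR E\bigl((\mathds{1}_{\ell\Delta} \ast \eta_\delta)(\widetilde R \cdot - \widetilde z)\rho\bigr)\, d\widetilde z\, d\widetilde R,
\end{equation*}
which is manifestly independent of $j$. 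Summing over $j$ multiplies the right-hand side by $24 = 1/|\Delta|$, matching exactly the prefactor $1/(\ell^3|\Delta|)$ appearing in Proposition \ref{prop:energylocallower}. Inserting this identity into the proposition yields the first displayed inequality of the corollary.

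For the pointwise version, I would use that $\int_{C_\ell} d\tau / \ell^3 = 1$ and that the Haar measure on $SO(3)$ is a probability measure, so the first inequality exhibits the quantity $E(\rho) + C(\ell\delta)^{-1}\int \rho + C(1+\delta)\ell^{-1}\int \rho^2$ as an upper bound for the average over $(\tau, R) \in C_\ell \times SO(3)$ of the nonnegative function $(\tau, R) \mapsto \sum_{j, z'} E\bigl(\xi_{\ell,\delta,j}(R \cdot - \ell z' - \tau)\rho\bigr)$. Because the essential infimum of an integrable function is no larger than its average, there exists $(\tau, R)$ for which the sum is bounded by this quantity, giving the second inequality. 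No serious obstacle appears; the only slightly delicate bookkeeping is verifying that all $24$ contributions are genuinely equal, which is precisely what the rotational averaging in Proposition \ref{prop:energylocallower} was designed to guarantee.
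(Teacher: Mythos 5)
Your proof is correct and spells out explicitly the change-of-variables bookkeeping that the paper compresses into a one-line appeal to the tiling identity $\intR f(z)\,dz = \int_{C_\ell}\sum_{z\in\Zbb^3} f(\ell z+\tau)\,d\tau$; in particular, the observation that the $24$ rotated-translated copies $\xi_{\ell,\delta,j}$ each contribute identically after averaging over $SO(3)\times\Rbb^3$ (so the $j$-sum exactly supplies the factor $1/|\Delta|=24$) is precisely the missing link that the paper glosses over. One small inaccuracy: the averaged function $(\tau,R)\mapsto\sum_{j,z}E\bigl(\xi_{\ell,\delta,j}(R\cdot-\ell z-\tau)\rho\bigr)$ need not be nonnegative (by Lemma \ref{lem:lowerrough}, $E$ can be negative), but this is immaterial, since the essential infimum of a measurable function is bounded above by its mean regardless of sign.
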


\begin{proof}
	This follows immediately from Proposition \ref{prop:energylocallower} and the fact that for any $f \in L^1$
	\begin{equation*}
		\intR f(z) \, dz = \int_{C_\ell} \sum_{z \in \Zbb^3} f(\ell z + \tau) \, d\tau.
	\end{equation*}
\end{proof}
}

\section{The LDA in the thermodynamic limit}\label{chapter:elda}
{
In this Section, we define the function $\elda$ which is needed to formulate our density functional. We will derive some properties and estimates for which we need different assumptions on $\omega$. For simplicity, from now on $\omega$ will be a fixed short-range interaction potential in the sense of Definition \ref{def:short-range}, but some results hold true with weaker assumptions as we will see. Most of those properties will be used to prove Theorem \ref{thm:uniformgas}, which is also done in this Section.

\begin{definition}
\label{def:energyvolume}
	For $0 < \ell$, $\delta$ we define the energy per unit volume of a tetrahedron at constant density $\rho_0$ by
	\begin{equation}
		e_\Delta(\rho_0,\ell,\delta) \coloneqq \left| \ell \Delta \right|^{-1} E \left( \rho_0 \indld \right),
	\end{equation}
	where $\eta_\delta(x) = (10/\delta)^3\eta(10x/\delta)$ with $\eta$ a fixed $C_c^\infty$ non-negative radial function with support in the unit ball and such that $\intR \eta = 1$.
\end{definition}

Now, we define the energy of the free Fermi gas in a natural way, namely as the energy we would get if the density is constant. As a constant, non-zero function is not in $L^2(\Rbb^3)$, we take the thermodynamic limit of a smeared out characteristic function.

\begin{lemma}
\label{lem:euegex}
	Let $\Delta$ be any tetrahedron in $\Rbb^3$ and $\eta \in C_c^\infty(\Rbb^3, \Rbb^+)$ be a radial function with $\intR \eta = 1$. For any $\rho_0 > 0$, the limits 
	\begin{equation}
		\lim_{\substack{L_N \to \infty \\ \frac{\ell_N}{L_N} \to 0}} \left( \inf_{\substack{\sqrt{\rho} \in H^1(\Rbb^3) \\ \rho_0\mathds{1}_{(L_N - \ell_N)\Delta} \leq \rho \leq \rho_0\mathds{1}_{(L_N + \ell_N)\Delta}}} \frac{E(\rho)}{(L_N)^3 |\Delta|} \right) = \lim_{\ell \to \infty} \frac{E(\rho_0 \mathds{1}_{\ell\Delta} \ast \eta)}{\ell^3 |\Delta|} \eqqcolon \elda(\rho_0)
	\end{equation}
	exist, coincide and are independent of the tetrahedron $\Delta$, $\eta$ and the sequences $L_N$, $\ell$, $\ell_N \to \infty$.
\end{lemma}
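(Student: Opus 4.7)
My plan is to first establish the existence of $\lim_{\ell\to\infty} f(\ell)$ with $f(\ell) := E(\rho_0 \mathds{1}_{\ell\Delta}\ast \eta)/(\ell^3 |\Delta|)$ for a fixed tetrahedron $\Delta$ and mollifier $\eta$ via an approximate subadditivity argument, and then bootstrap to independence of $\Delta$, $\eta$ and of the scaling sequences, and finally identify the infimum form with the same value. Lemmas \ref{lem:lowerrough} and \ref{lem:upperrough} immediately give that $f(\ell)$ is bounded in $\ell$, so the task reduces to proving $\limsup_{\ell\to\infty} f(\ell) \le \liminf_{\ell_0\to\infty} f(\ell_0)$ and its reverse.

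For the upper direction, I fix $\ell_0$ large and $\delta_0 \ll \ell_0$, and apply Proposition \ref{prop:energylocalupper} to $\rho = \rho_0 \mathds{1}_{\ell\Delta}\ast\eta$ at tile-scale $\ell_0$. For every tile whose center lies at distance much larger than $\mathrm{diam}(\supp \eta) + \delta_0$ from $\partial(\ell\Delta)$, the density $\rho$ is identically $\rho_0$ on the support of the cutoff $\chi_{\ell_0,\delta_0,j}$, so the local energy is $E\bigl(\rho_0\,\chi_{\ell_0,\delta_0,j}(R\cdot - \ell_0 z - \tau)\bigr)$. Since $\chi_{\ell_0,\delta_0,j}$ is a rotated translate of $(1-\delta_0/\ell_0)^{-3}\mathds{1}_{\ell_0(1-\delta_0/\ell_0)\mu_j\Delta}\ast\eta_{\delta_0}$, this equals, up to the mild rescaling discussed in the last paragraph, $e_\Delta(\rho_0, \ell_0, \delta_0)\,|\ell_0\Delta|$. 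The number of such bulk tiles is $\ell^3|\Delta|/|\ell_0\Delta|$ with a boundary correction of order $(\ell/\ell_0)^2$, while the boundary tiles each carry a bounded compactly supported density of diameter $O(\ell_0)$, so Lemma \ref{lem:upperrough} bounds each of their energies by $O(\ell_0^3)$, yielding a negligible total contribution of $O(\ell^2\ell_0)$ after summation. Dividing by $\ell^3 |\Delta|$ and letting first $\ell\to\infty$ and then $\ell_0\to\infty$ gives the required inequality. The reverse inequality proceeds identically via Corollary \ref{cor:energylocallower}, whose error terms $\frac{C}{\ell_0\delta_0}\intR \rho$ and $\frac{C(1+\delta_0)}{\ell_0}\intR \rho^2$ are of order $\ell^3/(\ell_0\delta_0)$ and $\ell^3/\ell_0$ respectively, hence negligible after $(\ell^3|\Delta|)^{-1}$-normalization with an appropriate choice of $\delta_0 = \delta_0(\ell_0)$.

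Once $\elda(\rho_0) := \lim f(\ell)$ is established for one $(\Delta, \eta)$, independence of $\Delta$ follows from the same tiling machinery: given another tetrahedron $\Delta'$, one tiles the bulk of $\ell\Delta$ by small copies of $\ell_0\Delta'$ and conversely, the discrepancy being an $O(\ell^2)$-volume boundary layer whose energy is again controlled by Lemma \ref{lem:upperrough}. Independence of $\eta$ is easier, since any two admissible mollifiers produce densities coinciding outside a boundary shell of volume $O(\ell^2)$. For the infimum form, the convolution $\rho_0\mathds{1}_{L_N\Delta}\ast\eta$ with $\eta$ supported in a ball of radius $\le \ell_N$ is admissible and supplies the upper bound, while for the lower bound any admissible $\rho$ is identically $\rho_0$ throughout the inner tetrahedron $(L_N-\ell_N)\Delta$ and can be fed into Corollary \ref{cor:energylocallower} at a scale $\ell_0 \ll L_N$ to recover $\elda(\rho_0)$ in the limit. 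The main technical obstacle is the mismatch between the partition cutoff $\chi_{\ell_0,\delta_0,j}$ arising from Proposition \ref{prop:energylocalupper} and the mollified indicator $\mathds{1}_{\ell_0\Delta}\ast\eta_{\delta_0}$ used to define $e_\Delta$: they differ by an amplitude rescaling $(1-\delta_0/\ell_0)^{-3}$ and a shrinkage of $\Delta$ by $1-\delta_0/\ell_0$, corresponding to a slightly different density level and tile size. For the qualitative convergence in Lemma \ref{lem:euegex} this can be absorbed by taking $\delta_0/\ell_0\to 0$ after $\ell\to\infty$, but making this quantitative will require the continuity of $\elda$ in $\rho_0$ developed later in the section.
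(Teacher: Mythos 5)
Your plan takes a genuinely different route from the paper. The paper's proof of Lemma~\ref{lem:euegex} is deliberately one-sided: it never invokes Proposition~\ref{prop:energylocalupper}. Instead it observes that the mollified indicator is admissible in the infimum (giving the trivial inequality $u_N \leq v(L_N)$ with your $f(\ell)=v(\ell)$) and then applies \emph{only} the lower localization, Proposition~\ref{prop:energylocallower}, to an arbitrary admissible $\rho$ to get $\liminf_N u_N \geq v(\ell) - C(\rho_0+\rho_0^2)/\ell$. Together these two inequalities simultaneously establish existence of both limits and their equality. A key payoff of this design is that the lower-bound cutoffs $\mathds{1}_{\ell\Delta}\ast\eta_\delta$ carry no amplitude factor, so the bulk tiles literally produce $v(\ell)|\ell\Delta|$ with no rescaling bookkeeping.

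Your two-sided ``approximate subadditivity'' scheme forces you to confront precisely that bookkeeping: after Proposition~\ref{prop:energylocalupper}, the bulk tiles contribute $E\bigl(\rho_0\chi_{\ell_0,\delta_0,j}\bigr)$, and $\rho_0\chi_{\ell_0,\delta_0,j}$ has amplitude $(1-\delta_0/\ell_0)^{-3}\rho_0$ on a shrunken tetrahedron, which is \emph{not} the integrand in $e_\Delta(\rho_0,\ell_0,\delta_0)$. Your final paragraph acknowledges this but proposes to absorb it in the qualitative limit ordering, and to use ``continuity of $\elda$ in $\rho_0$'' for the quantitative version. The latter would be circular here: Lemmas~\ref{lem:euegupper}, \ref{lem:eueglower}, and Corollary~\ref{cor:eueglip} all presuppose that $\elda$ already exists (and indeed use Proposition~\ref{prop:ThLim}, which in turn invokes Lemma~\ref{lem:euegex}). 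What you actually need to compare $E\bigl((1-\delta_0/\ell_0)^{-3}\rho_0\,\mathds{1}_{(1-\delta_0/\ell_0)\ell_0\Delta}\ast\eta_{\delta_0}\bigr)$ with $E\bigl(\rho_0\,\mathds{1}_{\ell_0\Delta}\ast\eta_{\delta_0}\bigr)$ is the energy scaling estimate of Lemma~\ref{lem:scalinglower} (which does \emph{not} depend on Lemma~\ref{lem:euegex} and could be moved earlier), but you never cite it, and even with it the qualitative ``take $\delta_0/\ell_0\to 0$ after $\ell\to\infty$'' argument is not spelled out to the point where one can verify the two limits commute in the needed direction. This is the genuine gap: the missing ingredient is Lemma~\ref{lem:scalinglower}, not $\elda$-continuity, and the handling of the resulting amplitude mismatch needs to be explicit rather than deferred.

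A secondary issue: for independence of $\Delta$ you propose to tile $\ell\Delta$ by copies of a \emph{different} tetrahedron $\ell_0\Delta'$, but neither Proposition~\ref{prop:energylocalupper} nor Corollary~\ref{cor:energylocallower} is set up for a heterogeneous tiling; they both tile with the 24 congruent pieces coming from the cube decomposition. The paper's claim of $\Delta$-independence is the much weaker (and one-line) statement that $\elda$ is invariant under rotation/translation of $\Delta$ and under rescaling (which is subsumed by the $\ell\to\infty$ limit), which is all that is actually used.
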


\begin{proof}
	For shortness of notation, define
	\begin{equation*}
		u_N = \inf_{\substack{\sqrt{\rho} \in H^1(\Rbb^3) \\ \rho_0\mathds{1}_{(L_N - \ell_N)\Delta} \leq \rho \leq \rho_0\mathds{1}_{(L_N + \ell_N)\Delta}}} \frac{E(\rho)}{(L_N)^3 |\Delta|}
	\end{equation*}
	and
	\begin{equation*}
		v(\ell) = \frac{E(\rho_0 \mathds{1}_{\ell\Delta} \ast \eta)}{\ell^3 |\Delta|}.
	\end{equation*}
	Because we have $\eta$ fixed, we see immediately that $u_N \leq v(L_N)$ for $N$ large enough. Also because $\eta$ is fixed, for all densities $\rho$ satisfying the constraints in the definition of $u_N$ the inequality from Proposition \ref{prop:energylocallower} takes the form
	\begin{equation*}
		E(\rho) \geq \frac{1}{|\ell \Delta|} \int_{SO(3)} \intR E(\rho (\mathds{1}_{\ell\Delta} \ast \eta)(R\cdot -z)) dz \; dR - \frac{c}{\ell} \intR \rho_\Gamma - \frac{c}{\ell} \intR \rho_\Gamma^2.
	\end{equation*}
	Now take $\ell \ll L_N$. Then the measure of all $R$ and $z$ such that $\rho = \rho_0$ on the support of $\mathds{1}_{\ell\Delta} \ast \eta (R \cdot - z)$ has order $|L_N \Delta|$ but it is always smaller. Denote the set where this happens by $\Omega$. If this is the case, we can say
	\begin{equation}
		E(\rho (\mathds{1}_{\ell\Delta} \ast \eta)(R \cdot - z)) = E(\rho_0 (\mathds{1}_{\ell\Delta} \ast \eta)(R \cdot - z)) = E(\rho_0 \mathds{1}_{\ell\Delta} \ast \eta) = v(\ell)|\ell \Delta|
	\end{equation}
	by the rotation and translation invariance of $E$. When $\rho_\Gamma = 0$ on the support of $\mathds{1}_{\ell\Delta} \ast \eta (R \cdot - z)$, the energy is zero. In the case where $\rho_\Gamma$ is not one or zero, we use our lower bound from Lemma \ref{lem:lowerrough} to get $E(\rho (\mathds{1}_{\ell\Delta} \ast \eta)(R \cdot - z)) \geq -C\rho_0 |\ell \Delta|$. This happens in a set which has distance of order $\ell + \ell_N$ to the boundary of $L_N \Delta$, so this set has measure bounded by $C(\ell + \ell_N)(L_N)^2$.
	In total, we get
	\begin{equation*}
		\frac{E(\rho)}{(L_N)^3|\Delta|} \geq \frac{|\Omega|}{|L_N \Delta|} v(\ell) - C\rho_0 \frac{\ell + \ell_N}{L_N} - C\rho_0 \frac{(L_N + \ell_N)^3}{\ell (L_N)^3} - C\rho_0^2 \frac{(L_N + \ell_N)^3}{\ell (L_N)^3} 
	\end{equation*}
	Now there are two cases: if $v(\ell)$ is negative, we use $|\Omega| \leq |L_N \Delta|$, if it is positive, we use $|\Omega| \geq |L_N \Delta| - C(\ell + \ell_N)(L_N)^2$ again, because the set where something goes wrong is a $C(\ell + \ell_N)$-neighbourhood of the boundary $L_N \Delta$. In both cases, after minimizing over $\rho$
	\begin{equation*}
		\liminf_{N \to \infty} u_N \geq v(\ell) - C \frac{\rho_0 + \rho_0^2}{\ell}
	\end{equation*}
	and thus
	\begin{equation*}
		\liminf_{N \to \infty} u_N \geq \limsup_{\ell \to \infty} v(\ell).
	\end{equation*}
	Since $u_N \leq v(L_N)$ we see that both sequences have the same limit independent of $\ell$ and $L_N$.
	
	Furthermore, $\eta$ does not appear in the definition on the left, so it doesn't change $\elda$. Additionally, $\ell_N$ does not appear on the right side. Obviously, our definition is invariant under rotations and translations of the tetrahedron and scaling amounts to changing $\ell$, which is invariant.
\end{proof}

\begin{remark}
	Note that we have only used Proposition \ref{prop:energylocallower} and the lower bound from Lemma \ref{lem:lowerrough}. This means that we could define the function $\elda$ for a larger class of interaction potentials. However, it is not clear if this function is then the same if one does not take tetrahedras as the underlying set. We will see how to prove this in our case in the rest of this chapter.
\end{remark}

Now that we have defined $\elda$, we can derive some global bounds on it, where we use the known estimates from section \ref{chapter:known_bds}.

\begin{lemma}[Rough bounds on $\elda$]
\label{lem:euegrough}
	There exists a constant $C > 0$ such that for any $\rho_0 > 0$, we have
	\begin{align}
		  i) & \quad \elda(\rho_0) \leq C \rho_0^{5/3}, \\
		 ii) & \quad \elda(\rho_0) \geq - C \rho_0, \\
		iii) & \quad \elda(\rho_0) \geq - C \rho_0^2. 
	\end{align}
\end{lemma}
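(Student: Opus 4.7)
The plan is to insert the specific test density $\rho_\ell := \rho_0(\mathds{1}_{\ell\Delta}\ast\eta)$ appearing in the definition of $\elda$ (see Lemma \ref{lem:euegex}) into the global bounds recorded in Section \ref{chapter:known_bds}, divide by $|\ell\Delta|=\ell^{3}|\Delta|$, and send $\ell\to\infty$. Since $\eta$ is a fixed nonnegative mollifier with $\intR\eta=1$ and support of some radius $r>0$ independent of $\ell$, the convolution satisfies $0\leq\rho_\ell\leq\rho_0$, equals $\rho_0$ on the interior of $\ell\Delta$ at distance $\geq r$ from $\partial(\ell\Delta)$, and vanishes outside $\ell\Delta+B_r$. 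A volume comparison (interior vs.\ boundary shell of thickness $r$) therefore yields, for every $p\geq 1$,
\begin{equation*}
  \intR\rho_\ell^{\,p}\,dx \;=\; \rho_0^{\,p}\,|\ell\Delta|\bigl(1+O(\ell^{-1})\bigr),
\end{equation*}
the error coming from a boundary shell of volume $O(\ell^{2}r)$.

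For $(i)$ I would apply Lemma \ref{lem:upperrough} with $\varepsilon=1$ to $\rho_\ell$, giving
\begin{equation*}
  E(\rho_\ell)\;\leq\; C\intR\rho_\ell^{5/3}+C\intR\bigl|\nabla\sqrt{\rho_\ell}\bigr|^{2}.
\end{equation*}
Because $\eta\in C_c^{\infty}$, the function $\rho_\ell$ is smooth and vanishes to infinite order at the outer boundary of its support, so $\sqrt{\rho_\ell}$ is Lipschitz with $|\nabla\sqrt{\rho_\ell}|$ bounded uniformly in $\ell$ and supported in the boundary shell of volume $O(\ell^{2})$. Hence $\int|\nabla\sqrt{\rho_\ell}|^{2}=O(\ell^{2})=o(\ell^{3})$, and dividing by $|\ell\Delta|$ and passing to $\ell\to\infty$ produces $\elda(\rho_0)\leq C\rho_0^{5/3}$.

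For $(ii)$ and $(iii)$ I would use Lemma \ref{lem:lowerrough}; dropping the nonnegative Lieb--Thirring term yields
\begin{equation*}
  E(\rho_\ell)\;\geq\; -C\intR\rho_\ell \qquad\text{and}\qquad E(\rho_\ell)\;\geq\; -C\intR\rho_\ell^{2}.
\end{equation*}
Combined with the mass and second-moment estimates above, these give $E(\rho_\ell)\geq -C\rho_0|\ell\Delta|(1+O(\ell^{-1}))$ and $E(\rho_\ell)\geq -C\rho_0^{2}|\ell\Delta|(1+O(\ell^{-1}))$ respectively; dividing by $|\ell\Delta|$ and sending $\ell\to\infty$ delivers the two stated lower bounds separately.

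The only technical point needing care is making sure $\sqrt{\rho_\ell}\in H^{1}$ with $\int|\nabla\sqrt{\rho_\ell}|^{2}$ bounded uniformly in $\ell$ up to a $O(\ell^{2})$ factor; this is automatic because convolution with a $C_c^{\infty}$ mollifier flattens $\rho_\ell$ to every order at the outer boundary of its support, so the potential singularity of $1/\sqrt{\rho_\ell}$ is harmless. Everything else is a direct substitution into the bounds already at our disposal.
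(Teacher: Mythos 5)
Your proposal is correct and takes essentially the same route as the paper: parts $(ii)$ and $(iii)$ follow by applying Lemma \ref{lem:lowerrough} to the trial density $\rho_0(\mathds{1}_{\ell\Delta}\ast\eta)$, dividing by $|\ell\Delta|$, and sending $\ell\to\infty$; part $(i)$ follows from Lemma \ref{lem:upperrough} once one shows $\frac{1}{|\ell\Delta|}\intR|\nabla\sqrt{\rho_0(\mathds{1}_{\ell\Delta}\ast\eta)}|^2\to 0$. One small remark: your accounting of the gradient term as $O(\ell^2)$ (boundary shell of thickness $O(1)$, area $O(\ell^2)$) is in fact more accurate than the paper's passing claim that the integral is ``constant as long as $\ell$ is large enough''; either way the term is $o(\ell^3)$, which is all that is needed.
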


\begin{proof}
	Parts $ii)$ and $iii)$ are a straight-forward application of Lemma \ref{lem:lowerrough}. For part $i)$, we notice that for any fixed $\eta$ in the definition of $\elda(\rho_0)$
	\begin{equation*}
		\lim_{\substack{\ell \to \infty}} \frac{1}{|\ell\Delta|} \intR \left| \nabla \sqrt{\rho_0 (\indvar{\ell\Delta}{})} \right|^2 = 0.
	\end{equation*}
	This is easy to see, as the integral is actually constant as long as $\ell$ is large enough. We then conclude by Lemma \ref{lem:upperrough}.
\end{proof}

The previous Lemma gives a global characterization of $\elda$. We also want a local characterization, i.e. a notion of continuity. It will turn out that under our assumptions on $\omega$, $\elda$ is locally Lipschitz. For this, we are interested in the behaviour of $E(\rho)$ under the unitary transformation $\rho \mapsto \alpha\rho(\alpha^{1/3}\cdot)$. This is done in the following two Lemmata.

\begin{lemma}[Energy scaling I]
\label{lem:scalingupper}
	For all densities $\rho$ and $\alpha \in (0,1]$, we have
	\begin{equation}
		E \left( \alpha \rho(\alpha^{1/3} \cdot) \right) \leq E(\rho) +  \frac{(1 - \alpha)\intR \omega}{2} \intR \rho^2
	\end{equation}
\end{lemma}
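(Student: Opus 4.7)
The plan is to build a trial grand-canonical state for $\tilde\rho(x) \coloneqq \alpha\rho(\alpha^{1/3}x)$ by unitary rescaling of an optimal state for $\rho$, then transfer the scaling mismatch from the interaction into a difference of direct terms bounded by Young's inequality.

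First I would pick a (near-)minimizer $\Gamma = \bigoplus_{n \geq 0} \Gamma_n$ of $F_{\mathrm{LL}}^\omega(\rho)$ (which exists by the results cited from \cite{Lewin_2019}, since $\omega \geq 0$). On each sector $\mathfrak{H}^n$, consider the unitary operator $(U_\alpha \psi)(x_1,\dots,x_n) \coloneqq \alpha^{n/2}\psi(\alpha^{1/3}x_1,\dots,\alpha^{1/3}x_n)$ and set $\tilde\Gamma_n \coloneqq U_\alpha \Gamma_n U_\alpha^*$. A routine change of variables shows $\rho_{\tilde\Gamma}(x) = \alpha\rho_\Gamma(\alpha^{1/3}x) = \tilde\rho(x)$, so $\tilde\Gamma$ is admissible in the definition of $F_{\mathrm{LL}}^\omega(\tilde\rho)$.

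Next I would evaluate the two energy pieces. Because $\nabla_i U_\alpha \psi$ picks up a factor $\alpha^{1/3}$ while the Jacobians cancel the $\alpha^n$ prefactor, a direct computation gives $\Tcal(\tilde\Gamma) = \alpha^{2/3}\Tcal(\Gamma) \leq \Tcal(\Gamma)$ using $\alpha \leq 1$. For the interaction, the same change of variables yields $\Ccal_\omega(\tilde\Gamma) = \Ccal_{\omega_\alpha}(\Gamma)$, where $\omega_\alpha(x) \coloneqq \omega(\alpha^{-1/3}x)$. This is the step where the radial monotonicity of $\omega$ (part of Definition \ref{def:short-range}) matters: for $\alpha \leq 1$ one has $|\alpha^{-1/3}x| \geq |x|$, hence $\omega_\alpha \leq \omega$ pointwise, and therefore $\Ccal_{\omega_\alpha}(\Gamma) \leq \Ccal_\omega(\Gamma)$. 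Combining these,
\begin{equation*}
F_{\mathrm{LL}}^\omega(\tilde\rho) \;\leq\; \Tcal(\tilde\Gamma) + \Ccal_\omega(\tilde\Gamma) \;\leq\; \Tcal(\Gamma) + \Ccal_\omega(\Gamma) \;=\; F_{\mathrm{LL}}^\omega(\rho).
\end{equation*}

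Finally I would pass from $F_{\mathrm{LL}}$ to $E$ by subtracting direct terms. The same substitution that produced $\omega_\alpha$ shows $D_\omega(\tilde\rho) = D_{\omega_\alpha}(\rho)$, so
\begin{equation*}
E(\tilde\rho) - E(\rho) \;\leq\; D_\omega(\rho) - D_\omega(\tilde\rho) \;=\; D_{\omega - \omega_\alpha}(\rho).
\end{equation*}
Since $\omega - \omega_\alpha \geq 0$, Young's inequality (as in the proof of Lemma \ref{lem:intomega} $ii)$) gives $D_{\omega - \omega_\alpha}(\rho) \leq \tfrac{1}{2}\|\omega - \omega_\alpha\|_{L^1}\|\rho\|_{L^2}^2$, and the scaling $\int \omega_\alpha = \alpha \int \omega$ yields $\|\omega - \omega_\alpha\|_{L^1} = (1-\alpha)\int\omega$. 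Combining the last two displays delivers the claimed bound. The main conceptual obstacle is simply recognizing that the two effects of the scaling have opposite signs, namely the kinetic term gains (since $\alpha^{2/3}\leq 1$) while the interaction loses under $\omega \mapsto \omega_\alpha$, and the latter loss is precisely what the radial monotonicity of $\omega$ allows us to absorb into a controlled $L^1$ error.
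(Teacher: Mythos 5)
Your proof is correct and is essentially the paper's argument: both exploit the unitary dilation to turn $E(\alpha\rho(\alpha^{1/3}\cdot))$ into a minimization over states of density $\rho$ with $\omega$ replaced by $\omega(\alpha^{-1/3}\cdot)$, then use the radial monotonicity to get $\omega(\alpha^{-1/3}\cdot)\le\omega$ together with the $L^1$--$L^2$ bound on $D_{\omega-\omega(\alpha^{-1/3}\cdot)}$. The only cosmetic difference is that the paper packages the positivity of $\Ccal_{\tilde\omega}$ and the Young bound on $D_{\tilde\omega}$ into a single application of Lemma \ref{lem:intomega} $ii)$, whereas you unbundle them into the two steps $F_{\mathrm{LL}}(\tilde\rho)\le F_{\mathrm{LL}}(\rho)$ and $D_\omega(\rho)-D_\omega(\tilde\rho)\le\frac{(1-\alpha)\int\omega}{2}\int\rho^2$.
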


\begin{proof}
	By our assumption on $\omega$ for all $x \in \Rbb^3$ we have $\omega(\alpha^{-1/3}x) \leq \omega(x)$. This means that the interaction potential $\tilde{\omega} = \omega - \omega(\alpha^{-1/3} \cdot)$ is positive and by Lemma \ref{lem:intomega} $ii)$
	\begin{equation*}
		\Ccal_{\omega(\alpha^{-1/3} \cdot)}(\Gamma) - D_{\omega(\alpha^{-1/3} \cdot)}(\rho_\Gamma) \leq \Ccal_\omega(\Gamma) - D_\omega(\rho_\Gamma) + \frac{(1 - \alpha)\intR \omega}{2} \intR \rho_\Gamma^2
	\end{equation*}
	for all admissible $\Gamma$. Now by scaling
	\begin{equation*}
	\begin{split}
		E \left( \alpha \rho(\alpha^{1/3} \cdot) \right) & = \inf_\Gamma \left\{ \alpha^{2/3} \Tcal(\Gamma) + \Ccal_{\omega(\alpha^{-1/3} \cdot)}(\Gamma) - D_{\omega(\alpha^{-1/3} \cdot)}(\rho) \right\}\\
		& \leq \inf_\Gamma \left\{ \Tcal(\Gamma) + \Ccal_\omega(\Gamma) - D_\omega(\rho) \right\} +\frac{(1 - \alpha)\intR \omega}{2} \intR \rho^2 \\
		& = E(\rho) +  \frac{(1 - \alpha)\intR \omega}{2} \intR \rho^2.
	\end{split}
	\end{equation*}
\end{proof}

For the lower bound, we use a similar argument.

\begin{lemma}[Energy scaling II]
\label{lem:scalinglower}
	For all densities $\rho$ and $\alpha \in (0,1]$, we have
	\begin{equation}
		\frac{1}{\alpha} E \left( \alpha \rho(\alpha^{1/3} \cdot) \right) \geq E(\rho) -  \frac{\omega(0)}{2} \left(\frac{1}{\alpha} - 1 \right) \intR \rho.
	\end{equation}
\end{lemma}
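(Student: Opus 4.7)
The plan is to mirror the proof of Lemma \ref{lem:scalingupper} but start from the variational problem for the left-hand side $E(\alpha\rho(\alpha^{1/3}\cdot))$ and invoke Lemma \ref{lem:intomega} $i)$ in place of $ii)$; this replacement is precisely why $\omega(0)\intR\rho$ (rather than $\intR\omega \cdot \intR\rho^2$) shows up in the error, and it is also what forces the density term to be linear.

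Given any admissible grand-canonical state $\Gamma$ with $\rho_\Gamma = \alpha\rho(\alpha^{1/3}\cdot)$, I apply the unitary dilation $(U_\alpha\psi_n)(x_1,\dots,x_n) = \alpha^{-n/2}\psi_n(\alpha^{-1/3}x_1,\dots,\alpha^{-1/3}x_n)$ on each $n$-particle component to obtain a state $\Gamma' := U_\alpha \Gamma U_\alpha^*$ whose density is $\rho$. A routine change of variables yields the standard scaling relations
\begin{equation*}
\Tcal(\Gamma) = \alpha^{2/3}\Tcal(\Gamma'),\qquad \Ccal_\omega(\Gamma) = \Ccal_{\omega(\alpha^{-1/3}\cdot)}(\Gamma'),\qquad D_\omega(\rho_\Gamma) = D_{\omega(\alpha^{-1/3}\cdot)}(\rho).
\end{equation*}
Since $\alpha \in (0,1]$ and $\Tcal(\Gamma') \geq 0$, the kinetic piece satisfies $\alpha^{2/3}\Tcal(\Gamma') \geq \alpha\Tcal(\Gamma')$, which lines up with the factor of $\alpha$ that will appear in front of $E(\rho)$.

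The key step is the decomposition $\omega(\alpha^{-1/3}\cdot) = \alpha\omega + \omega''$ with $\omega'' := \omega(\alpha^{-1/3}\cdot) - \alpha\omega$. Its Fourier transform is $\widehat{\omega''}(k) = \alpha\bigl(\hat\omega(\alpha^{1/3}k) - \hat\omega(k)\bigr)$, which is non-negative because $\hat\omega$ is radially decreasing and $\alpha^{1/3} \leq 1$; moreover $\widehat{\omega''} \in L^1$. Applying Lemma \ref{lem:intomega} $i)$ to $\omega''$ and $\Gamma'$ yields
\begin{equation*}
\Ccal_{\omega''}(\Gamma') - D_{\omega''}(\rho) \geq -\frac{\omega''(0)}{2}\intR\rho = -\frac{(1-\alpha)\omega(0)}{2}\intR\rho,
\end{equation*}
and expanding $\Ccal_{\omega''}-D_{\omega''} = \bigl[\Ccal_{\omega(\alpha^{-1/3}\cdot)}-D_{\omega(\alpha^{-1/3}\cdot)}\bigr] - \alpha\bigl[\Ccal_\omega - D_\omega\bigr]$ rearranges to the desired lower bound on $\Ccal_{\omega(\alpha^{-1/3}\cdot)}(\Gamma') - D_{\omega(\alpha^{-1/3}\cdot)}(\rho)$ in terms of $\alpha\bigl[\Ccal_\omega(\Gamma') - D_\omega(\rho)\bigr]$.

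Assembling the pieces gives $\Tcal(\Gamma) + \Ccal_\omega(\Gamma) - D_\omega(\rho_\Gamma) \geq \alpha E(\rho) - \tfrac{(1-\alpha)\omega(0)}{2}\intR\rho$ for every admissible $\Gamma$; taking the infimum and dividing by $\alpha$, using $(1-\alpha)/\alpha = 1/\alpha - 1$, closes the estimate. The main obstacle --- and the only non-mechanical step --- is identifying the right decomposition: the naive analogue of Lemma \ref{lem:scalingupper} would apply Lemma \ref{lem:intomega} to $\omega - \omega(\alpha^{-1/3}\cdot)$, but this difference has a Fourier transform of indefinite sign, so part $i)$ is unavailable. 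Splitting $\omega(\alpha^{-1/3}\cdot)$ against $\alpha\omega$ restores positivity of $\widehat{\omega''}$, and the prefactor $\alpha$ is exactly the one that matches the kinetic scaling $\alpha^{2/3}\Tcal \geq \alpha\Tcal$, so that both halves combine cleanly into $\alpha E(\rho)$.
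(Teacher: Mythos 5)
Your proof is correct and follows essentially the same route as the paper's: the paper applies Lemma \ref{lem:intomega}(i) to $\tilde\omega := \tfrac{1}{\alpha}\omega(\alpha^{-1/3}\cdot) - \omega$, which is exactly your $\omega''/\alpha$, and uses $\alpha^{-1/3}\Tcal \geq \Tcal$ in place of your equivalent $\alpha^{2/3}\Tcal \geq \alpha\Tcal$; the two arguments differ only in whether the overall division by $\alpha$ is performed before or after the estimate.
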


\begin{proof}
	This is similar to the proof to the upper bound. First, let $\alpha \in (0,1]$ and define $\tilde{\omega}(x) \coloneqq \frac{1}{\alpha} \omega(\alpha^{-1/3} x) - \omega(x)$.  Then, because $\widehat{\omega}$ is decreasing, the Fourier Transform of $\tilde{\omega}$ is positive. Consequently, by Lemma \ref{lem:intomega} $i)$
	\begin{equation*}
		\Ccal_{\frac{1}{\alpha}\omega(\alpha^{-1/3} \cdot)}(\Gamma) - D_{\frac{1}{\alpha}\omega(\alpha^{-1/3} \cdot)}(\rho_\Gamma) \geq \Ccal_\omega(\Gamma) - D_\omega(\rho_\Gamma) + \frac{\omega(0)}{2} \left(\frac{1}{\alpha} - 1 \right) \intR \rho_\Gamma
	\end{equation*}
	for all admissible $\Gamma$. Again, similarly to the upper bound
	\begin{equation*}
	\begin{split}
		\frac{1}{\alpha} E \left( \alpha \rho(\alpha^{1/3} \cdot) \right) & = \inf_\Gamma \left\{ \alpha^{-1/3} \Tcal(\Gamma) + \Ccal_{\frac{1}{\alpha}\omega(\alpha^{-1/3} \cdot)}(\Gamma) - D_{\frac{1}{\alpha}\omega(\alpha^{-1/3} \cdot)}(\rho) \right\}\\
		& \geq \inf_\Gamma \left\{ \Tcal(\Gamma) + \Ccal_\omega(\Gamma) - D_\omega(\rho) \right\} - \frac{\omega(0)}{2} \left(\frac{1}{\alpha} - 1 \right) \intR \rho \\
		& = E(\rho) -  \frac{\omega(0)}{2} \left(\frac{1}{\alpha} - 1 \right) \intR \rho.
	\end{split}
	\end{equation*}
\end{proof}

While Lemma \ref{lem:euegex} gives us the existence of the function $\elda$ and a first characterization, we will now bootstrap this argument to first give a quantitative estimate on the convergence of the thermodynamic limit before later proving that $\elda$ actually doesn't need tetrahedra in its definition, one can take other (sufficiently regular) sets as well.

\begin{proposition}[Thermodynamic limit for tetrahedra]
\label{prop:ThLim}
	There exists a constant $C$ such that for all $0 < \rho_0$ and $\delta \leq \ell/C$ we have the upper bound
	\begin{equation}
	\label{eq:ThLim0}
		e_\Delta(\rho_0,\ell,\delta) \leq \elda(\rho_0) + C\frac{\rho_0}{\ell \delta} + C\frac{\rho_0^2(1+\delta)}{\ell}
	\end{equation}
	and the averaged lower bound for $\alpha \in (0,1/2)$
	\begin{equation}
	\label{eq:ThLim1}
		\left(\int_{1-\alpha}^{1+\alpha} \frac{ds}{s^4} \right)^{-1} \int_{1-\alpha}^{1+\alpha} e_\Delta(\rho_0,t\ell,t\delta) \frac{dt}{t^4} \geq \elda(\rho_0) - C\rho_0^2\delta^2 - C\frac{\delta}{\ell}\left( \rho_0+\rho_0^2 \right) - \frac{C}{\ell^2}\rho_0.
	\end{equation}
	Additionally
	\begin{equation}
		\lim_{\substack{\delta/\ell \to 0 \\ \ell\delta \to \infty}} e_\Delta(\rho_0,\ell,\delta) = \elda(\rho_0).
	\end{equation}
\end{proposition}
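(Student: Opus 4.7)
The strategy is to apply the localization tools from Section~\ref{chapter:loc} to an approximately constant ``bulk'' density $\rho_L \coloneqq \rho_0\,\mathds{1}_{L\Delta}\ast\eta$ (with an arbitrary fixed mollifier $\eta$) and exploit that Lemma~\ref{lem:euegex} already guarantees $E(\rho_L)/|L\Delta|\to\elda(\rho_0)$ as $L\to\infty$. For the upper bound \eqref{eq:ThLim0} I would apply Corollary~\ref{cor:energylocallower} to $\rho_L$ with tile size~$\ell$: the change of variables $\frac{1}{\ell^3}\sum_{z\in\Zbb^3}\int_{C_\ell} F(\ell z+\tau)\,d\tau=\frac{1}{\ell^3}\intR F(\tau)\,d\tau$ collapses the $(z,\tau)$ sum into an integral over~$\Rbb^3$. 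Splitting this into interior $(j,R,\tau)$, where $\supp \xi_{\ell,\delta,j}(R\cdot-\tau)\subset\{\rho_L=\rho_0\}$ (a set whose complement has measure $O((\ell+\delta)L^2)$), and boundary configurations, each interior piece contributes $E(\rho_0\,\xi_{\ell,\delta,j}) = E(\rho_0\,\mathds{1}_{\ell\Delta}\ast\eta_\delta) = |\ell\Delta|\,e_\Delta(\rho_0,\ell,\delta)$ by the isometry invariance of~$E$, while the boundary pieces are bounded from below by $-C\rho_0|\ell\Delta|$ via Lemma~\ref{lem:lowerrough}(i). Dividing by $|L\Delta|$, sending $L\to\infty$, and rearranging yields \eqref{eq:ThLim0}, with error terms inherited directly from Corollary~\ref{cor:energylocallower}.

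For the averaged lower bound \eqref{eq:ThLim1} I would apply Proposition~\ref{prop:energylocalupper} to $\rho_L$ and mirror the decomposition. The interior contribution now carries the rescaled indicator~\eqref{eq:unityupper}: using isometry invariance,
\begin{equation*}
E(\rho_0\,\chi_{t\ell,t\delta,j}) = E\!\bigl((1-\delta/\ell)^{-3}\rho_0\,\mathds{1}_{t\ell(1-\delta/\ell)\Delta}\ast\eta_{t\delta}\bigr).
\end{equation*}
With $\alpha \coloneqq (1-\delta/\ell)^3 \in (0,1]$, the right-hand side is obtained from $|t\ell\Delta|\,e_\Delta(\rho_0,t\ell,t\delta)=E(\rho_0\,\mathds{1}_{t\ell\Delta}\ast\eta_{t\delta})$ by a density-scaling of the form $\rho\mapsto\alpha^{-1}\rho(\alpha^{-1/3}\cdot)$ (up to a compensating change of mollifier width that will be absorbed into the error). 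The scaling Lemmas~\ref{lem:scalingupper} and~\ref{lem:scalinglower}, applied with this~$\alpha$ which differs from~$1$ by $O(\delta/\ell)$, then convert one into the other at the cost of corrections of order $(\delta/\ell)\intR(\rho_0+\rho_0^2)$ on the support of $\chi_{t\ell,t\delta,j}$. The $C\delta^2\log(\alpha^{-1})\intR\rho_L^2$ tail of Proposition~\ref{prop:energylocalupper}, after division by $|L\Delta|$ and with $\alpha\in(0,1/2)$ fixed so that $\log(\alpha^{-1})$ is~$O(1)$, yields the $C\rho_0^2\delta^2$ term, while the boundary and scaling corrections account for the remaining $C\delta(\rho_0+\rho_0^2)/\ell$ and $C\rho_0/\ell^2$ contributions.

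The pointwise limit then follows quickly. The upper bound \eqref{eq:ThLim0} gives $\limsup e_\Delta(\rho_0,\ell,\delta)\le\elda(\rho_0)$ directly, since both $\rho_0/(\ell\delta)$ and $\rho_0^2(1+\delta)/\ell$ vanish under the constraints $\delta/\ell\to 0$, $\ell\delta\to\infty$. For the matching $\liminf$, one combines \eqref{eq:ThLim1} with the continuity of $t\mapsto e_\Delta(\rho_0,t\ell,t\delta)$ on $(1-\alpha,1+\alpha)$---itself a consequence of the scaling Lemmas~\ref{lem:scalingupper}--\ref{lem:scalinglower} applied with scaling parameter close to~$1$---so that sending $\alpha\to 0$ turns the averaged bound into the pointwise inequality $e_\Delta(\rho_0,\ell,\delta)\ge\elda(\rho_0)-\text{(errors)}$, and all errors vanish in the prescribed double limit.

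The main technical obstacle is the conversion step in the averaged lower bound: the partition of unity behind Proposition~\ref{prop:energylocalupper} produces local densities tied to the shrunk indicators $\chi_{t\ell,t\delta,j}$ rather than the canonical $\mathds{1}_{t\ell\Delta}\ast\eta_{t\delta}$ appearing in the definition of $e_\Delta$. Because we do not have the exact scaling properties of the Coulomb interaction exploited in~\cite{Lewin_2020}, bridging this gap forces us to rely on Lemmas~\ref{lem:scalingupper}--\ref{lem:scalinglower}, and tracking their correction terms down to the sharp powers of $\rho_0$, $\delta$, and $\ell$ recorded in~\eqref{eq:ThLim1} is where the bulk of the careful bookkeeping lies.
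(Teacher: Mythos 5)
Your treatment of the two quantitative bounds~\eqref{eq:ThLim0} and~\eqref{eq:ThLim1} matches the paper's strategy: apply the localization inequalities to the large bulk density $\rho_0\mathds{1}_{\ell'\Delta}\ast\eta_{\delta'}$, decompose into interior and boundary tetrahedra, send $\ell'\to\infty$, and (for the averaged lower bound) repair the discrepancy between the shrunk indicators $\chi_{t\ell,t\delta,j}$ and the canonical $\mathds{1}_{t\ell\Delta}\ast\eta_{t\delta}$ via the scaling Lemma~\ref{lem:scalinglower}. One small remark: only Lemma~\ref{lem:scalinglower} is needed for that conversion step, since the inequality is one-sided; invoking both scaling lemmas is harmless but superfluous.

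The gap is in your derivation of the pointwise $\liminf$. You propose to send $\alpha\to 0$ in~\eqref{eq:ThLim1}, relying on continuity of $t\mapsto e_\Delta(\rho_0,t\ell,t\delta)$ to collapse the average to a pointwise bound. This runs into two problems. First, the constant in~\eqref{eq:ThLim1} is not uniform in $\alpha$: the error in Proposition~\ref{prop:energylocalupper} carries an explicit $\log(\alpha^{-1})$ factor, so $C\rho_0^2\delta^2$ actually degrades as $\alpha\to 0$, and the limit of the averaged bound does not yield the pointwise one for free. Second, the claimed continuity in $t$ does not follow from the scaling lemmas in the form you invoke them: the substitution $\alpha=t^{-3}$ rewrites $\rho_0\mathds{1}_{t\ell\Delta}\ast\eta_{t\delta}$ as $\alpha\,\widetilde\rho(\alpha^{1/3}\cdot)$ only for $\widetilde\rho=t^{3}\rho_0\,\mathds{1}_{\ell\Delta}\ast\eta_{\delta}$, whose amplitude differs from $\rho_0$ by the factor $t^3$. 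The scaling lemmas therefore relate $e_\Delta(\rho_0,t\ell,t\delta)$ to $e_\Delta(t^3\rho_0,\ell,\delta)$, not to $e_\Delta(\rho_0,\ell,\delta)$, and closing that loop would require a continuity-in-density statement for $e_\Delta$ at finite $(\ell,\delta)$ that is not available at this stage (only $\elda$ is shown to be continuous, later, in Corollary~\ref{cor:eueglip}).

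The paper sidesteps both issues. It fixes $\alpha$ (so $\log(\alpha^{-1})$ is an absolute constant), averages the two-scale inequality~\eqref{eq:ThLim2} over $t\in(1-\alpha,1+\alpha)$ so that the $t$-average of $e_\Delta(\rho_0,t\ell,t\delta)$ appears on the right, plugs in~\eqref{eq:ThLim1} to replace that average by $\elda(\rho_0)$ minus errors, and then sends $\ell'\to\infty$ with the link $\ell=(\ell')^{3/5}$, $\delta=\ell^{-1/3}$ so that all error terms vanish simultaneously. No continuity in $t$ and no $\alpha\to0$ limit are needed. You should replace your last step with this two-scale argument.
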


\begin{proof}
	Consider a large tetrahedron $\ell'\Delta$ smeared at a scale $\delta'$. We use Proposition \ref{prop:energylocallower} to relate its energy to the energy of smaller tetrahedrons of size $\ell \ll \ell'$, smeared at scale $\delta$:
	\begin{equation*}
	\begin{split}
		e_\Delta(\rho_0,\ell',\delta') = & \frac{E(\rho_0 \mathds{1}_{\ell'\Delta} \ast \eta_{\delta'})}{|\ell'\Delta|} \\
		\geq & \frac{1}{|\ell'\Delta||\ell\Delta|} \int_{SO(3)} \intR E \left( \rho_0 ((\indld)(R \cdot -z))(\mathds{1}_{\ell'\Delta} \ast \eta_{\delta'}) \right) dz \, dR \\
		& - \frac{C\rho_0}{\ell \delta} - \frac{C\rho_0^2(1+\delta)}{\ell}.
	\end{split}
	\end{equation*}
	Now the argument is similar to the one in Lemma \ref{lem:euegex}. Denote the set of $z$ such that $((\indld)(R \cdot -z))(\mathds{1}_{\ell'\Delta} \ast \eta_{\delta'}) = (\indld)(R \cdot -z))$ by $\Omega$. This set has measure which is smaller than $|\ell'\Delta|$ but it is always larger then a constant times the measure of the $(\ell + \delta + \delta')$-boundary of $\ell'\Delta$. For any $z \in \Omega$, we have
	\begin{equation*}
		E \left( \rho_0 ((\indld)(R \cdot -z))(\mathds{1}_{\ell'\Delta} \ast \eta_{\delta'}) \right) = E \left( \rho_0 ((\indld)(R \cdot -z)) \right) = |\ell\Delta|e_\Delta(\rho_0,\ell,\delta)
	\end{equation*}
	by the translation and rotation invariance of our energy. For $z$ such that $z \notin \Omega$ nor the energy is zero, we use
	\begin{equation*}
	\begin{split}
		E \left( \rho_0 ((\indld)(R \cdot -z))(\mathds{1}_{\ell'\Delta} \ast \eta_{\delta'}) \right) & \geq - \intR \rho_0 ((\indld)(R \cdot -z))(\mathds{1}_{\ell'\Delta} \ast \eta_{\delta'}) \\
		& \geq - \rho_0 \intR \indld \\
		& = - \rho_0 |\ell \Delta|
	\end{split}
	\end{equation*}
	where we used Lemma \ref{lem:lowerrough}. Again, the set where this happens has measure of order $(\ell + \delta + \delta')\ell'^2$. In total, we obtain
	\begin{equation}
	\label{eq:ThLim2}
		e_\Delta(\rho_0,\ell',\delta') \geq \frac{|\Omega|}{|\ell'\Delta|} e_\Delta(\rho_0,\ell,\delta) - C\rho_0 \frac{\ell+\delta+\delta'}{\ell'} -\frac{C\rho_0}{\ell \delta} - \frac{C\rho_0^2(1+\delta)}{\ell}.
	\end{equation}
	As in Lemma \ref{lem:euegex}, there are two cases: if $e_\Delta(\rho_0,\ell,\delta)$ is negative, we use $|\Omega| \leq |\ell'\Delta|$, if it is positive, we use $|\Omega| \geq |\ell'\Delta| - C(\ell + \delta + \delta')\ell'^2$. Now, as we have already shown that the limit exists, take $\ell'\to \infty$ at fixed $\ell,\delta,\delta'$ to get in both cases:
	\begin{equation*}
		\elda(\rho_0) \geq e_\Delta(\rho_0,\ell,\delta) -\frac{C\rho_0}{\ell \delta} - \frac{C\rho_0^2(1+\delta)}{\ell}
	\end{equation*}
	which is what we wanted to show.
	
	For the upper bound, again consider a tetrahedron $\ell'\Delta$ which is supposed to be large, $\ell$ small relative to $\ell'$ and $\delta \leq \ell/C$. Let $\alpha \in (0,1/2)$. Then, by  Proposition \ref{prop:energylocalupper}, with
	\begin{equation*}
		\chi_{\ell,\delta} \coloneqq \frac{1}{(1-\delta/\ell)^3}\mathds{1}_{\ell(1-\delta/\ell)\Delta} \ast \eta_\delta
	\end{equation*}
	we have
	\begin{equation*}
	\begin{split}
		e_\Delta(\rho_0,\ell',\delta') = & \frac{E(\rho_0 \mathds{1}_{\ell'\Delta} \ast \eta_{\delta'})}{|\ell'\Delta|} \\
		\leq & \left(\int_{1-\alpha}^{1+\alpha} \frac{ds}{s^4} \right)^{-1} \int_{1-\alpha}^{1+\alpha} \frac{dt}{t^4} \int_{SO(3)} dR \\
		& \frac{1}{|t\ell\Delta||\ell'\Delta|} \intR dz \; E(\rho_0 \chi_{t\ell,t\delta}(R\cdot -z) (\mathds{1}_{\ell'\Delta} \ast \eta_{\delta'})) + C \rho_0^2 \delta^2.
	\end{split}
	\end{equation*}
	Again denote by $\Omega$ the set of all $z$  where $\chi_{t\ell,t\delta}(R\cdot -z)(\mathds{1}_{\ell'\Delta} \ast \eta_\delta') = \chi_{t\ell,t\delta}(R\cdot -z) $. For $\Omega$, the same reasoning as for the lower bound applies: Its measure is always bounded  by $|\ell'\Delta|$ from above and by $|\ell'\Delta| - C(\ell + \delta + \delta')\ell'^2$ from below, where we already removed the $t-$dependence for simplicity. For $z \in \Omega$, by Lemma \ref{lem:scalinglower} it follows that
	\begin{equation*}
	\begin{split}
		 E(\rho_0 \chi_{t\ell,t\delta}&(R\cdot -z) (\mathds{1}_{\ell'\Delta} \ast \eta{_\delta'})) = E \left( \rho_0 \frac{1}{(1-\delta/\ell)^3} \mathds{1}_{t\ell(1-\delta/\ell)\Delta} \ast \eta_{t\delta} \right)\\
		 & \leq \frac{1}{(1-\delta/\ell)^3} E(\rho_0 \mathds{1}_{t\ell\Delta} \ast \eta_{t\delta/(1-\delta/\ell)}) + C\frac{\delta}{\ell} \rho_0 |t\ell\Delta| \\
		 & \leq E(\rho_0 \mathds{1}_{t\ell\Delta} \ast \eta_{t\delta/(1-\delta/\ell)}) + C\frac{\delta}{\ell} \left( \rho_0^{5/3}|t\ell\Delta| + \rho_0\frac{\ell^2}{\delta} \right) + C\frac{\delta}{\ell} \rho_0 |t\ell\Delta|,
	\end{split}
	\end{equation*}
	where the last inequality is obvious if the energy is negative. If it is positive, we use $(1-\delta/\ell)^{-3} \leq 1+C\delta/\ell$ and Lemma \ref{lem:upperrough}. Rewrite the error term as
	\begin{equation*}
		C\frac{\delta}{\ell} \left( \rho_0^{5/3}|t\ell\Delta| + \rho_0\frac{\ell^2}{\delta} \right) + C\frac{\delta}{\ell} \rho_0 |t\ell\Delta| \leq C\frac{\delta}{\ell} \left(\rho_0 + \rho_0^2 \right)|t\ell\Delta| + C\ell\rho_0
	\end{equation*}
	for a shorter notation.	For the set of $z$ such that $z \notin \Omega$ nor the energy is zero, we have a slightly more involved argument compared to the lower bound. Denote this set with a slight abuse of notation $\partial\Omega$. Now, for all $z \in \partial\Omega$, we have by Lemma \ref{lem:upperrough}
	\begin{multline*}
		E(\rho_0 \chi_{t\ell,t\delta}(R\cdot -z) (\mathds{1}_{\ell'\Delta} \ast \eta_{\delta'})) \leq C|t\ell\Delta|\rho_0^{5/3} + C\rho_0 \intR \left| \nabla \sqrt{\chi_{t\ell,t\delta}(Rx -z)} \right|^2 \; dx\\
		+ C \rho_0 \intR \chi_{t\ell,t\delta}(Rx -z) \left| \nabla \sqrt{\mathds{1}_{\ell'\Delta} \ast \eta_{\delta'}(x)} \right|^2 \; dx.
	\end{multline*}
	We bound the first integral by $\ell^2/\delta$ independently of $z$. We carry out the $dz$ integration on the last integral first, to get
	\begin{equation*}
	\begin{split}
		\frac{1}{|t\ell\Delta||\ell'\Delta|} \int_{\partial\Omega} &\intR \chi_{t\ell,t\delta}(Rx -z) \left| \nabla \sqrt{\mathds{1}_{\ell'\Delta} \ast \eta_{\delta'}(x)} \right|^2  \; dx \; dz\\
		&\leq \frac{1}{|t\ell\Delta||\ell'\Delta|} \intR \intR \chi_{t\ell,t\delta}(Rx -z) \left| \nabla \sqrt{\mathds{1}_{\ell'\Delta} \ast \eta_{\delta'}(x)} \right|^2 \; dz \; dx\\
		& = \frac{1}{|\ell'\Delta|} \intR \left|\nabla \sqrt{\mathds{1}_{\ell'\Delta} \ast \eta_{\delta'}(x)}\right|^2 \; dx\\
		& \leq \frac{C}{\ell'\delta'}.
	\end{split}
	\end{equation*}
	Now, using that $\partial\Omega$ is a set of order $(\ell+\delta+\delta')\ell'^2$, we get 
	\begin{multline*}
		e_\Delta(\rho_0,\ell',\delta') \\
		\leq \left(\int_{1-\alpha}^{1+\alpha} \frac{ds}{s^4} \right)^{-1} \int_{1-\alpha}^{1+\alpha} \frac{dt}{t^4} \frac{|\Omega|}{|\ell'\Delta|} \left( e_\Delta(\rho_0,t\ell,t\delta/(1-\delta/\ell)) + C\frac{\delta}{\ell}(\rho_0 + \rho_0^2) +C\frac{1}{\ell^2}\rho_0\right) \\
		+ C\frac{\ell+\delta + \delta'}{\ell'} \left( \rho_0^{5/3}+\rho_0/(\ell\delta) \right) + C \frac{\rho_0}{\ell'\delta'} + C\rho_0^2\delta^2.
	\end{multline*}
	As before, we take $\ell' \to \infty$ to obtain
	\begin{equation*}
		\left(\int_{1-\alpha}^{1+\alpha} \frac{ds}{s^4} \right)^{-1} \int_{1-\alpha}^{1+\alpha} e_\Delta(\rho_0,t\ell,t\delta/(1-\delta/\ell)) \frac{dt}{t^4} \geq \elda(\rho_0) - C\rho_0^2\delta^2 - C\frac{\delta}{\ell}(\rho_0+\rho_0^2) - C\frac{1}{\ell^2}\rho_0
	\end{equation*}
	and by taking $\delta = \tilde{\delta}/(1+\tilde{\delta}/\ell)$, we obtain (\ref{eq:ThLim1}).
	
	Now, for the convergence, equation (\ref{eq:ThLim0}) shows 
	\begin{equation*}
		\limsup_{\substack{\delta/\ell \to 0 \\ \ell\delta \to \infty}} e_\Delta(\rho_0,\ell,\delta)\leq \elda(\rho_0).
	\end{equation*}
	For the liminf, we take $\ell'$ and $\delta'$ with $\delta' / \ell' \to 0$ and $\ell' \delta' \to \infty$. We then use for $\ell \ll \ell'$  $e_\Delta(\rho_0,\ell,\delta) \leq C\rho_0^{5/3} + C\rho_0 /\delta\ell$ and (\ref{eq:ThLim2}) with the usual bounds on $\Omega$ to get
	\begin{equation*}
		e_\Delta(\rho_0,\ell',\delta') \geq e_\Delta(\rho_0,\ell,\delta) - C\frac{\ell+\delta+\delta'}{\ell'}\left(\rho_0^{5/3} + \rho_0/\delta\ell + \rho_0 \right) - C\frac{\rho_0}{\ell\delta} - C\frac{\rho_0^2(1+\delta)}{\ell}.
	\end{equation*}
	If we average this bound over $t \in (1-\alpha,1+\alpha)$ and use our lower bound (\ref{eq:ThLim1}), we have
	\begin{multline*}
		e_\Delta(\rho_0,\ell',\delta') \geq \elda(\rho_0) - C\frac{\ell+\delta+\delta'}{\ell'} \left(\rho_0^{5/3} + \rho_0/\delta\ell + \rho_0 \right) \\
		- C\frac{\rho_0}{\ell\delta} - C\frac{\rho_0^2(1+\delta)}{\ell} - C\rho_0^2\delta^2 - C\frac{\delta}{\ell}(\rho_0 + \rho_0^2) -C\frac{1}{\ell^2}\rho_0.
	\end{multline*}
	and we conclude by taking $\ell = (\ell')^{3/5}$ and $\delta = \ell^{-1/3}$.
\end{proof}

We are now ready to provide the

\begin{proof}[Proof of Theorem \ref{thm:uniformgas}]
	As in the previous proof, we use Proposition \ref{prop:energylocallower} to get
	\begin{multline*}
		\frac{E(\rho_0 \mathds{1}_{\Omega_N} \ast \eta_{\delta_N})}{|\Omega_N|} \geq \frac{1}{|\Omega_N||\ell\Delta|} \int_{SO(3)} \intR E(\rho_0 ((\indld)(R \cdot -z)) (\mathds{1}_{\Omega_N} \ast \eta_{\delta_N})) dz \, dR \\
		- C\frac{\rho_0}{\ell\delta} - C\frac{\rho_0^2(1+\delta)}{\ell}.
	\end{multline*}
	Again, denote the set where $\ell\Delta$ is truly inside $\Omega_N$ by $\Omega$. As before, its measure is less than $|\Omega_N|$ and more than $|\Omega_N|$ minus the measure of the $C(\ell+\delta+\delta_N)$-boundary of $\Omega_N$. Therefore by our assumption on the regularity of the boundary and the fact that $e_\Delta(\rho_0,\ell,\delta) \leq C(\rho_0^{5/3}+\rho_0/(\delta\ell))$, we have
	\begin{equation*}
		\frac{|\Omega|}{|\Omega_N|} e_\Delta(\rho_0,\ell,\delta) \geq e_\Delta(\rho_0,\ell,\delta) - C\frac{\ell+\delta+\delta_N}{|\Omega_N|^{1/3}}(\rho_0^{5/3}+\rho_0/(\delta\ell)).
	\end{equation*}
	With the same reasoning as before and exactly the same lower bound for tetrahedra close to the boundary it follows
	\begin{multline*}
		\frac{E(\rho_0 \mathds{1}_{\Omega_N} \ast \eta_{\delta_N})}{|\Omega_N|} \geq e_\Delta(\rho_0,\ell,\delta) - C\frac{\ell+\delta+\delta_N}{|\Omega_N|^{1/3}}(\rho_0^{5/3}+\rho_0/(\delta\ell)) - C\frac{\ell+\delta+\delta_N}{|\Omega_N|^{1/3}}\rho_0 \\
		- C\frac{\rho_0}{\ell\delta} - C\frac{\rho_0^2(1+\delta)}{\ell},
	\end{multline*}
	where we set for instance $\delta$ constant and $\ell = |\Omega_N|^{1/6}$ to get
	\begin{equation}
	\label{eq:uniformgas1}
		\liminf_{N \to \infty} \frac{E(\rho_0 \mathds{1}_{\Omega_N} \ast \eta_{\delta_N})}{|\Omega_N|} \geq \elda(\rho_0).
	\end{equation}
	
	For the lower bound, we use the same notation as in the previous Proposition and obtain:
	\begin{multline*}
		\frac{E(\rho_0 \mathds{1}_{\Omega_N} \ast \eta_{\delta_N})}{|\Omega_N|}	\leq \left(\int_{1-\alpha}^{1+\alpha} \frac{ds}{s^4} \right)^{-1} \int_{1-\alpha}^{1+\alpha} \frac{dt}{t^4} \int_{SO(3)} dR \\
		\frac{1}{|t\ell\Delta||\Omega_N|} \intR dz \; E(\rho_0 \chi_{t\ell,t\delta}(R\cdot -z) (\mathds{1}_{\Omega_N} \ast \eta_{\delta_N})) + C \rho_0^2 \delta^2.
	\end{multline*}
	Again, when the support of $\chi_{t\ell,t\delta}(R\cdot -z)$ is well inside the support of $\mathds{1}_{\Omega_N} \ast \eta_{\delta_N}$, we see as before
	\begin{equation*}
	\begin{split}
		E(\rho_0 \chi_{t\ell,t\delta}(R\cdot -z) (\mathds{1}_{\Omega_N} \ast \eta_{\delta_N})) & = E(\rho_0 \chi_{t\ell,t\delta}(R\cdot -z)) \\
		&\leq E(\rho_0 \mathds{1}_{t\ell\Delta} \ast \eta_{t\delta/(1-\delta/\ell)}) + C \frac{\delta}{\ell} \left(\rho_0 + \rho_0^2 \right) |t\ell\Delta| + C\ell\rho_0
	\end{split}
	\end{equation*}
	and with our usual case distinction and approximations of the measure of $\Omega$, we get
	\begin{multline*}
		\left(\int_{1-\alpha}^{1+\alpha} \frac{ds}{s^4} \right)^{-1} \int_{1-\alpha}^{1+\alpha} \frac{dt}{t^4} \frac{|\Omega|}{|\Omega_N|} \left\lbrace e_\Delta(\rho_0,t\ell,t\delta/(1-\delta/\ell)) + C\frac{\delta}{\ell} \left(\rho_0 + \rho_0^2 \right) + \frac{C}{\ell^2}\rho_0 \right\rbrace\\
		\leq \left(\int_{1-\alpha}^{1+\alpha} \frac{ds}{s^4} \right)^{-1} \int_{1-\alpha}^{1+\alpha} \frac{dt}{t^4} e_\Delta(\rho_0,t\ell,t\delta/(1-\delta/\ell)) + C\frac{\delta}{\ell}\left(\rho_0 + \rho_0^2 \right) + \frac{C}{\ell^2}\rho_0 \\
		C\frac{\ell+\delta+\delta_N}{|\Omega_N|^{1/3}}\rho_0.
	\end{multline*}
	For the tetrahedra at the boundary of $\Omega$, we get as in \cite{Lewin_2020} and before
	\begin{multline*}
		E(\rho_0 \chi_{t\ell,t\delta}(R\cdot -z) (\mathds{1}_{\Omega_N} \ast \eta_{\delta_N})) \\
		\leq C\ell^3 \left( \rho_0^{5/3}+\rho_0/(\ell\delta) \right) + C\rho_0 \intR \chi_{t\ell,t\delta}(R\cdot -z) \left|\nabla\sqrt{\mathds{1}_{\Omega_N} \ast \eta_{\delta_N}}\right|^2
	\end{multline*}
	by Lemma \ref{lem:upperrough}. For the left part of this upper bound, we use again that the measure of the boundary can be estimated by $C(\ell+\delta+\delta_N)|\Omega_N|^{2/3}$, for the right part, we first carry out the $dz$-integral over the whole $\Rbb^3$, using that the $\intR\chi = |t\ell\Delta|$. From this, it follows that
	\begin{equation*}
		\frac{1}{|\Omega_N|} \intR |\nabla \sqrt{\mathds{1}_{\Omega_N} \ast \eta_{\delta_N}} |^2 \leq \frac{C}{\delta_N |\Omega_N|^{1/3}}.
	\end{equation*}
	In total we find
	\begin{multline*}
		\frac{E(\rho_0 \mathds{1}_{\Omega_N} \ast \eta_{\delta_N})}{|\Omega_N|}	\leq \left(\int_{1-\alpha}^{1+\alpha} \frac{ds}{s^4} \right)^{-1} \int_{1-\alpha}^{1+\alpha} \frac{dt}{t^4} e_\Delta(\rho_0,t\ell,t\delta/(1-\delta/\ell)) + C\frac{\delta}{\ell}\left(\rho_0 + \rho_0^2 \right) \\
		+ \frac{C}{\ell^2}\rho_0 + C\frac{\ell+\delta+\delta_N}{|\Omega_N|^{1/3}}\rho_0 + C\frac{\ell+\delta+\delta_N}{|\Omega_N|^{1/3}} \left( \rho_0^{5/3}+\rho_0/(\ell\delta) \right) + C\frac{\rho_0}{\delta_N|\Omega_N|^{1/3}} + C\rho_0^2\delta^2.
	\end{multline*}
	Now, if we choose for instance $\ell = |\Omega_N|^{1/6}$ and $\delta = |\Omega_N|^{-1/12}$, we get
	\begin{equation}
	\label{eq:uniformgas2}
		\limsup_{N \to \infty} \frac{E(\rho_0 \mathds{1}_{\Omega_N} \ast \eta_{\delta_N})}{|\Omega_N|} \leq \elda(\rho_0).
	\end{equation}
	Equations (\ref{eq:uniformgas1}) and (\ref{eq:uniformgas2}) yield the desired result.
\end{proof}

\begin{lemma}[Upper local bound for $\elda$]
\label{lem:euegupper}
	There exist a constant $C$ such that for every $0 \leq \rho_1 < \rho_0$ we have
	\begin{equation}
		\elda(\rho_0 - \rho_1) \leq \elda(\rho_0) + C\rho_1 \rho_0.
	\end{equation}
\end{lemma}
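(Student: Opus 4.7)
The plan is to apply the scaling Lemma \ref{lem:scalingupper} to the explicit tetrahedral densities that realize $\elda$ in the thermodynamic limit, and then pass to the limit. Set $\alpha \coloneqq (\rho_0-\rho_1)/\rho_0 \in (0,1]$. For fixed $\ell$ and $\delta$ (with $\delta/\ell$ small), take $\rho \coloneqq \rho_0 (\mathds{1}_{\ell\Delta}\ast \eta_\delta)$. A routine change of variables using the scaling $\eta_\delta(\alpha^{1/3}\cdot) = \alpha^{-1}\eta_{\alpha^{-1/3}\delta}$ gives
\begin{equation*}
	\alpha\, \rho(\alpha^{1/3}\,\cdot) = (\rho_0-\rho_1)\bigl(\mathds{1}_{\alpha^{-1/3}\ell\Delta}\ast \eta_{\alpha^{-1/3}\delta}\bigr).
\end{equation*}
Lemma \ref{lem:scalingupper} combined with $\mathds{1}_{\ell\Delta}\ast\eta_\delta \leq 1$ and $\intR \mathds{1}_{\ell\Delta}\ast\eta_\delta = |\ell\Delta|$ then yields
\begin{equation*}
	E\bigl((\rho_0-\rho_1)\mathds{1}_{\alpha^{-1/3}\ell\Delta}\ast \eta_{\alpha^{-1/3}\delta}\bigr) \leq E(\rho_0 \mathds{1}_{\ell\Delta}\ast\eta_\delta) + \frac{(1-\alpha)\intR \omega}{2}\,\rho_0^2 |\ell\Delta|.
\end{equation*}

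Next I would divide both sides by $|\alpha^{-1/3}\ell\Delta| = \alpha^{-1}|\ell\Delta|$ to obtain
\begin{equation*}
	\frac{E\bigl((\rho_0-\rho_1)\mathds{1}_{\alpha^{-1/3}\ell\Delta}\ast \eta_{\alpha^{-1/3}\delta}\bigr)}{|\alpha^{-1/3}\ell\Delta|} \leq \alpha\, \frac{E(\rho_0 \mathds{1}_{\ell\Delta}\ast\eta_\delta)}{|\ell\Delta|} + \alpha(1-\alpha)\frac{\intR \omega}{2}\,\rho_0^2.
\end{equation*}
Now I would pass to the thermodynamic regime $\delta/\ell\to 0$, $\ell\delta\to\infty$; since this regime is preserved under the scaling $(\ell,\delta)\mapsto(\alpha^{-1/3}\ell,\alpha^{-1/3}\delta)$, Proposition \ref{prop:ThLim} shows that the LHS converges to $\elda(\rho_0-\rho_1)$ and the ratio on the RHS to $\elda(\rho_0)$, giving
\begin{equation*}
	\elda(\rho_0-\rho_1) \leq \alpha\,\elda(\rho_0) + \alpha(1-\alpha)\frac{\intR \omega}{2}\,\rho_0^2.
\end{equation*}

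To rewrite this in the claimed form, I would split $\alpha\,\elda(\rho_0) = \elda(\rho_0) - (1-\alpha)\elda(\rho_0)$. If $\elda(\rho_0)\geq 0$, the correction term is non-positive. If $\elda(\rho_0)<0$, the rough lower bound $\elda(\rho_0)\geq -C\rho_0^2$ from Lemma \ref{lem:euegrough}(iii) yields $-(1-\alpha)\elda(\rho_0)\leq C(1-\alpha)\rho_0^2 = C\rho_1\rho_0$. The remaining term is bounded by $(1-\alpha)\frac{\intR\omega}{2}\rho_0^2 = \frac{\intR\omega}{2}\rho_1\rho_0 \leq C\rho_1\rho_0$. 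Combining these estimates gives $\elda(\rho_0-\rho_1) \leq \elda(\rho_0) + C\rho_1\rho_0$, as desired.

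No step is genuinely difficult. The only real subtlety is matching the scaling of the mollifier $\eta_\delta$ under the dilation $x \mapsto \alpha^{1/3}x$ so that the image of the tetrahedral test density is again a tetrahedral test density at the new scale, which is what allows the thermodynamic limit to identify both sides with values of $\elda$. The quadratic-in-$\rho_0$ lower bound from Lemma \ref{lem:euegrough}(iii) is crucial for absorbing the term $(1-\alpha)\elda(\rho_0)$; the weaker linear bound would not produce the right form $C\rho_1\rho_0$ for large $\rho_0$.
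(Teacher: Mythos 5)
Your proof is correct and follows essentially the same route as the paper's: apply the scaling Lemma~\ref{lem:scalingupper} to the tetrahedral test density $\rho_0\indld$, observe that the dilated density is again a tetrahedral test density at the rescaled parameters, pass to the thermodynamic limit via Proposition~\ref{prop:ThLim}, and absorb the $(1-\alpha)\elda(\rho_0)$ term using the quadratic lower bound from Lemma~\ref{lem:euegrough}~$iii)$. The only difference is cosmetic: you spell out the change of variables for $\eta_\delta(\alpha^{1/3}\cdot)$ explicitly, which the paper leaves implicit.
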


\begin{proof}
	Let $\alpha \in (0,1]$. By Lemma \ref{lem:scalingupper}, we obtain
	\begin{equation*}
	\begin{split}
		\frac{e_\Delta(\alpha \rho_0, \ell/\alpha^{1/3}, \delta/\alpha^{1/3})}{\alpha} & = \frac{E(\alpha \rho_0 \mathds{1}_{\ell/\alpha^{1/3}\Delta} \ast \eta_{\delta/\alpha^{1/3}})}{\left| \ell\Delta \right|} \\
		& \leq e_\Delta(\rho_0, \ell, \delta) + C (1 - \alpha) \frac{\intR (\rho_0 \indld)^2}{\left| \ell\Delta \right|}.
	\end{split}
	\end{equation*}
	When passing to the limit using Proposition \ref{prop:ThLim} and taking $\alpha =(1 - \rho_1/\rho_0)$, we see that
	\begin{equation*}
		\elda(\rho_0 - \rho_1) \leq \elda(\rho_0) - \frac{\rho_1}{\rho_0} \elda(\rho_0) + C \rho_1 \rho_0.
	\end{equation*}
	We conclude by using Lemma \ref{lem:euegrough} $iii)$.
\end{proof}

Note that we could have used Lemma \ref{lem:euegrough} $ii)$ as well to get a slightly different error term. However, in our application it does not make much of difference, so we opted for this shorter error.

\begin{lemma}[Lower local bound for $\elda$]
\label{lem:eueglower}
	There exist a constant $C$ such that for every $0 \leq \rho_1 < \rho_0$ we have
	\begin{equation}
		\elda(\rho_0 - \rho_1) \geq \elda(\rho_0) - C\rho_1 \rho_0^{2/3} - C\rho_1.
	\end{equation}
\end{lemma}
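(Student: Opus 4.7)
The plan mirrors the proof of Lemma \ref{lem:euegupper}, but substitutes Lemma \ref{lem:scalinglower} for Lemma \ref{lem:scalingupper}. Let $\alpha\in(0,1]$ and apply Lemma \ref{lem:scalinglower} to the density $\rho = \rho_0 \mathds{1}_{\ell\Delta}\ast\eta_\delta$. A short computation using the definition $\eta_\delta(x)=\delta^{-3}\eta(x/\delta)$ shows that $\rho(\alpha^{1/3}\cdot) = \rho_0\,\mathds{1}_{\ell/\alpha^{1/3}\Delta}\ast\eta_{\delta/\alpha^{1/3}}$, so that the scaled density on the left of Lemma \ref{lem:scalinglower} is $\alpha\rho_0\,\mathds{1}_{\ell/\alpha^{1/3}\Delta}\ast\eta_{\delta/\alpha^{1/3}}$. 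Since $\intR \rho = \rho_0|\ell\Delta|$, dividing by $|\ell\Delta|$ yields
\begin{equation*}
\frac{1}{\alpha^2}\,e_\Delta\!\left(\alpha\rho_0,\tfrac{\ell}{\alpha^{1/3}},\tfrac{\delta}{\alpha^{1/3}}\right) \;\geq\; e_\Delta(\rho_0,\ell,\delta) \;-\; \frac{\omega(0)}{2}\left(\frac{1}{\alpha}-1\right)\rho_0.
\end{equation*}

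Next I would pass to the thermodynamic limit with $\delta/\ell\to 0$ and $\ell\delta\to\infty$; the scaled parameters $\ell/\alpha^{1/3}$ and $\delta/\alpha^{1/3}$ satisfy the same conditions, so Proposition \ref{prop:ThLim} applies on both sides and gives
\begin{equation*}
\frac{1}{\alpha^2}\,\elda(\alpha\rho_0) \;\geq\; \elda(\rho_0) \;-\; \frac{\omega(0)}{2}\left(\frac{1}{\alpha}-1\right)\rho_0.
\end{equation*}
Multiplying by $\alpha^2$ and choosing $\alpha = 1-\rho_1/\rho_0\in(0,1]$, the error term becomes $\tfrac{\omega(0)}{2}\alpha(1-\alpha)\rho_0 \leq C\rho_1$.

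The remaining step is to compare $\alpha^2\elda(\rho_0)$ with $\elda(\rho_0)$. If $\elda(\rho_0)\leq 0$ then $\alpha^2\elda(\rho_0)\geq\elda(\rho_0)$ and nothing is lost. If $\elda(\rho_0)>0$, use Lemma \ref{lem:euegrough}(i) together with $1-\alpha^2=(1-\alpha)(1+\alpha)\leq 2\rho_1/\rho_0$ to obtain
\begin{equation*}
(1-\alpha^2)\elda(\rho_0) \;\leq\; \frac{2\rho_1}{\rho_0}\cdot C\rho_0^{5/3} \;=\; C\rho_1\rho_0^{2/3}.
\end{equation*}
Combining both cases gives $\alpha^2\elda(\rho_0)\geq \elda(\rho_0)-C\rho_1\rho_0^{2/3}$ and hence the claimed bound. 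The only mildly delicate point is the sign case distinction for $\elda(\rho_0)$, which is why the $\rho_0^{2/3}$ factor appears in the error (compared to the $\rho_0$ factor in Lemma \ref{lem:euegupper}); this is ultimately a reflection of the $\intR\rho$ rather than $\intR\rho^2$ error term in Lemma \ref{lem:scalinglower}.
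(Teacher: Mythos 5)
Your proposal is correct and follows essentially the same route as the paper: apply Lemma \ref{lem:scalinglower} to $\rho_0\mathds{1}_{\ell\Delta}\ast\eta_\delta$, pass to the thermodynamic limit via Proposition \ref{prop:ThLim} to obtain $\elda(\alpha\rho_0)\geq\alpha^2\elda(\rho_0)-C(1-\alpha)\rho_0$, then split on the sign of $\elda(\rho_0)$ and use Lemma \ref{lem:euegrough}(i) in the positive case before substituting $\alpha=1-\rho_1/\rho_0$. You spell out the scaling computation and the algebra around $1-\alpha^2$ in slightly more detail, but the underlying argument is the same.
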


\begin{proof}
	Again, let $\alpha \in (0,1]$. By Proposition \ref{prop:ThLim} and Lemma \ref{lem:scalinglower} we see that
	\begin{equation}
	\label{eq:eueglower1}
		\elda(\alpha \rho_0) \geq \alpha^2 \elda(\rho_0) - C (1 - \alpha) \rho_0.
	\end{equation}
	When taking $\alpha = (1-\eps)$ with $0 \leq \eps < 1$, this turns into
	\begin{equation*}
		\elda((1 - \eps) \rho_0) \geq (1 - \eps)^2 \elda(\rho_0) - C \eps \rho_0.
	\end{equation*}
	Now, when $\elda(\rho_0)$ is positive, we use $(1 - \eps)^2 \geq (1 - C\eps)$ and Lemma \ref{lem:euegrough} $i)$ to see $(1 - \eps)^2 \elda(\rho_0) \geq \elda(\rho_0) - C\eps \rho_0^{5/3}$.
	If $\elda(\rho_0)$ is negative, we can drop the pre-factor all-together. In both cases, we have
	\begin{equation*}
		\elda((1 - \eps) \rho_0) \geq \elda(\rho_0) - C\eps \rho_0^{5/3} - C \eps \rho_0.
	\end{equation*}
	and this time we conclude by taking $\eps = \rho_1/\rho_0$.
\end{proof}

We immediately obtain the following

\begin{corollary}[Lipschitz regularity of $\elda$]
\label{cor:eueglip}
	There exists a constant $C$ such that for all $0 < \rho_0$, $\rho_1$
	\begin{equation}
		\left| \elda(\rho_0) - \elda(\rho_1) \right| \leq C \left( \max(\rho_0, \rho_1) + 1 \right) \left| \rho_0 - \rho_1 \right|.
	\end{equation}
\end{corollary}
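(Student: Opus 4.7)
The plan is to get this corollary as a direct consequence of the two preceding local bounds. By symmetry we may assume $\rho_0 > \rho_1 \geq 0$, so that $\max(\rho_0,\rho_1)=\rho_0$ and $|\rho_0-\rho_1| = \rho_0 - \rho_1$. The idea is to identify the pair $(\rho_0,\rho_1)$ in Lemma \ref{lem:euegupper} and Lemma \ref{lem:eueglower} with the pair $(\rho_0,\rho_0-\rho_1)$, which is admissible because $0 \leq \rho_0-\rho_1 < \rho_0$.

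First I would apply Lemma \ref{lem:euegupper} with this substitution, obtaining
\[
\elda(\rho_1) = \elda\bigl(\rho_0 - (\rho_0-\rho_1)\bigr) \leq \elda(\rho_0) + C(\rho_0-\rho_1)\rho_0,
\]
which is the desired one-sided bound
\[
\elda(\rho_1) - \elda(\rho_0) \leq C\,\rho_0 \,|\rho_0-\rho_1|.
\]
Next I would apply Lemma \ref{lem:eueglower} with the same substitution to produce
\[
\elda(\rho_1) \geq \elda(\rho_0) - C(\rho_0-\rho_1)\rho_0^{2/3} - C(\rho_0-\rho_1),
\]
which, using $\rho_0^{2/3} \leq \rho_0 + 1$ (or simply absorbing the $\rho_0^{2/3}$ term into $\rho_0+1$ up to a constant), yields the matching lower bound
\[
\elda(\rho_0) - \elda(\rho_1) \leq C(\rho_0+1)\,|\rho_0-\rho_1|.
\]

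Combining the two sides and recalling that $\rho_0 = \max(\rho_0,\rho_1)$ in our convention gives the Lipschitz estimate with constant $C(\max(\rho_0,\rho_1)+1)$. There is no genuine obstacle here since all the work is hidden in the scaling estimates of Lemmas \ref{lem:scalingupper}--\ref{lem:scalinglower} and the rough bounds of Lemma \ref{lem:euegrough} which were used to prove the two local bounds; the only minor point is to make sure the roles of $\rho_0$ and $\rho_1$ in the hypotheses of those lemmas are correctly interpreted under the substitution $\rho_1 \leftrightarrow \rho_0-\rho_1$, and that the resulting $\rho_0^{2/3}$ term is controlled by $\rho_0+1$ so that the final bound has the clean form stated.
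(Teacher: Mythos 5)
Your proposal is correct and is exactly what the paper intends: the corollary is stated as an immediate consequence of Lemma \ref{lem:euegupper} and Lemma \ref{lem:eueglower}, and your substitution of the increment $\rho_0 - \rho_1$ into those two lemmas (together with the elementary bound $\rho_0^{2/3} \leq \rho_0 + 1$) is precisely the derivation the paper leaves to the reader.
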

}

\section{Locally constant densities}\label{chapter:constden}
{
In this Section, we relate the energy of a localized density $E(\rho\indld)$ to the energy of the almost constant density $E(\rho_0\indld)$ where $\rho_0 = \rho(x)$ for some $x \in \supp \indld$. We start with a subadditivity estimate to relate the energy $E(\rho_1 + \rho_2)$ to $E(\rho_1)$ where $\rho_2$ is thought of as a small perturbation. Again, $\omega$ is a fixed short-range interaction.

\begin{lemma}[Rough subadditivity estimate]
\label{lem:subadd}
	Let $\rho_1$, $\rho_2 \in L^1(\mathbb{R}^3, \mathbb{R}_+)$ be two densities such that $\sqrt{\rho_1}$, $\sqrt{\rho_2} \in H^1(\mathbb{R}^3)$. Then there exits a constant $C > 0$ such that
	\begin{multline}
		E(\rho_1 + \rho_2) \leq E(\rho_1) + C\eps \intR \left( \rho_1^{5/3} + \rho_1 \right) + C\eps^{-2/3} \intR \rho_2^{5/3} \\
	 	+ C \intR {\left| \nabla \sqrt{\rho_2 + \eps\rho_1} \right|}^2 + \frac{1 - \eps}{\eps} D(\rho_2)
	\end{multline}
	for all $0 < \eps \leq 1$.
\end{lemma}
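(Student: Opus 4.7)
The plan is to construct a trial state for $F_{\mathrm{LL}}(\rho_1+\rho_2)$ as a convex combination on Fock space of an optimiser at density $\rho_1$ and a quasi-free state at an \emph{inflated} density $\rho_1+\rho_2/\eps$, engineered so that the mixture has density exactly $\rho_1+\rho_2$. First, I would take an optimiser $\Gamma_1$ of $F_{\mathrm{LL}}(\rho_1)$, which exists because $\omega\geq 0$ (remark after Definition \ref{def:FLL}); by definition $\Tcal(\Gamma_1)+\Ccal(\Gamma_1) = E(\rho_1)+D(\rho_1)$. Second, invoking the construction from Lemma \ref{lem:upperrough} (with its free parameter fixed to a universal constant), I would build a quasi-free state $\Gamma_{\mathrm{qf}}$ on Fock space with density $\rho_1+\rho_2/\eps$ satisfying
\begin{equation*}
\Tcal(\Gamma_{\mathrm{qf}}) \leq C\intR \bigl(\rho_1+\rho_2/\eps\bigr)^{5/3} + C\intR \bigl|\nabla\sqrt{\rho_1+\rho_2/\eps}\bigr|^2,
\end{equation*}
and, since the exchange term of a quasi-free state is non-negative when $\omega\geq 0$, also $\Ccal(\Gamma_{\mathrm{qf}}) \leq D(\rho_1+\rho_2/\eps)$.

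Now I would set $\Gamma := (1-\eps)\Gamma_1 + \eps\,\Gamma_{\mathrm{qf}}$. This convex combination is a bona fide Fock-space state with density $(1-\eps)\rho_1 + \eps(\rho_1+\rho_2/\eps) = \rho_1+\rho_2$; linearity of $\Tcal$ and $\Ccal$ on grand-canonical states yields
\begin{equation*}
F_{\mathrm{LL}}(\rho_1+\rho_2) \leq (1-\eps)\bigl(E(\rho_1)+D(\rho_1)\bigr) + \eps\,\Tcal(\Gamma_{\mathrm{qf}}) + \eps\, D(\rho_1+\rho_2/\eps).
\end{equation*}

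To finish, I would subtract $D(\rho_1+\rho_2)$ from both sides and perform the direct-term bookkeeping. Expanding $\eps D(\rho_1+\rho_2/\eps) = \eps D(\rho_1)+2D(\rho_1,\rho_2)+\tfrac{1}{\eps}D(\rho_2)$ and $D(\rho_1+\rho_2)=D(\rho_1)+2D(\rho_1,\rho_2)+D(\rho_2)$, the coefficients of $D(\rho_1)$ and of $D(\rho_1,\rho_2)$ cancel exactly and one is left with the residual $\tfrac{1-\eps}{\eps}D(\rho_2)$. The gradient contribution simplifies via $\sqrt{\rho_1+\rho_2/\eps}=\eps^{-1/2}\sqrt{\eps\rho_1+\rho_2}$, so that $\eps\intR|\nabla\sqrt{\rho_1+\rho_2/\eps}|^2 = \intR|\nabla\sqrt{\eps\rho_1+\rho_2}|^2$. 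The Thomas--Fermi type term is split by the elementary inequality $(a+b)^{5/3}\leq C(a^{5/3}+b^{5/3})$ into $C\eps\intR\rho_1^{5/3} + C\eps^{-2/3}\intR\rho_2^{5/3}$. Finally $(1-\eps)E(\rho_1)=E(\rho_1)-\eps E(\rho_1)$, and the crude lower bound $E(\rho_1)\geq -C\intR\rho_1$ from Lemma \ref{lem:lowerrough} $ii)$ gives $-\eps E(\rho_1)\leq C\eps\intR\rho_1$. Assembling these pieces yields precisely the claimed inequality.

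There is no essential technical obstacle here: both ingredients (the minimiser $\Gamma_1$ and the Lewin--Lieb--Seiringer quasi-free trial state) are already at hand, and the only real content is spotting the convex-combination ansatz with the inflated density $\rho_1+\rho_2/\eps$, which is chosen precisely so that the $D(\rho_1)$ and $D(\rho_1,\rho_2)$ pieces cancel and the clean residual $\tfrac{1-\eps}{\eps}D(\rho_2)$ survives. The main subtlety is to be careful that all operations (convex combination of grand-canonical states, the gradient identity, and the splitting of the $5/3$-power) are carried out on the Fock-space level, but none of these steps is delicate.
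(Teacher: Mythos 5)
Your proof is correct and follows essentially the same route as the paper's: a convex combination $(1-\eps)\Gamma_1 + \eps\,\Gamma_2$ with the second state at the inflated density $\rho_1+\rho_2/\eps$, the direct-term bookkeeping that leaves exactly $\frac{1-\eps}{\eps}D(\rho_2)$, the scaling identity $\eps\intR|\nabla\sqrt{\rho_1+\rho_2/\eps}|^2 = \intR|\nabla\sqrt{\eps\rho_1+\rho_2}|^2$, and the crude bound $-\eps E(\rho_1)\le C\eps\intR\rho_1$. The only cosmetic difference is that you plug in the quasi-free trial state from the proof of Lemma \ref{lem:upperrough} directly, whereas the paper takes an optimiser for $\rho_1+\rho_2/\eps$ and then applies the statement of Lemma \ref{lem:upperrough} to $E(\rho_1+\rho_2/\eps)$; the two are equivalent.
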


Note that we can estimate
\begin{equation*}
	\intR {\left| \nabla \sqrt{\rho_2 + \eps\rho_1} \right|}^2 \leq \intR {\left| \nabla \sqrt{\rho_2} \right|}^2 + \eps \intR {\left| \nabla \sqrt{\rho_1} \right|}^2
\end{equation*}
by the convexity of $\rho \mapsto {\left| \nabla \sqrt{\rho} \right|}^2$.

\begin{proof}[Proof of Lemma \ref{lem:subadd}]
	As in \cite{Lewin_2020}, we fix an $\eps \in (0,1]$ and consider two optimal states $\Gamma_1$ and $\Gamma_2$ for $\rho_1$ and $\rho_2/\eps + \rho_1$ respectively. Then
	\begin{equation*}
		\Gamma \coloneqq (1 - \eps)\Gamma_1 + \eps \Gamma_2
	\end{equation*}
	is a proper quantum state which has the density
	\begin{equation*}
		\rho_\Gamma = (1 - \eps)\rho_1 + \eps \left( \frac{\rho_2}{\eps} + \rho_1 \right) = \rho_1 + \rho_2.
	\end{equation*}
	By inserting this trial state and using the convexity of the grand-canonical Levy-Lieb functional and the upper bound from Lemma \ref{lem:upperrough} for $E(\rho_2/\eps + \rho_1)$, we see that
	\begin{equation*}
	\begin{split}
		E(\rho_1 + \rho_2) \leq &(1 - \eps)E(\rho_1) + C\eps \intR \left(\rho_1 + \frac{\rho_2}{\eps} \right)^{5/3} + C\eps \intR {\left| \nabla \sqrt{\rho_2/\eps + \rho_1} \right|}^2 \\
		& -D(\rho_1 + \rho_2) + (1 - \eps) D(\rho_1) + \eps D(\rho_1 + \rho_2/\eps).
	\end{split}
	\end{equation*}
	Now
	\begin{equation*}
	\begin{split}
		& -D(\rho_1 + \rho_2) + (1 - \eps) D(\rho_1) + \eps D(\rho_1 + \rho_2/\eps) \\
		= & -D(\rho_1) - 2D(\rho_1,\rho_2) - D(\rho_2) + (1 - \eps)D(\rho_1) + \eps D(\rho_1) + 2\eps D(\rho_1,\rho_2/\eps) + \frac{\eps}{\eps^2}D(\rho_2)\\
		= &\frac{1 - \eps}{\eps} D(\rho_2)
	\end{split}
	\end{equation*}
	and we conclude by using the lower bound from Lemma \ref{lem:lowerrough} which gives $-\eps E(\rho_1) \leq C \eps \intR \rho_1$.
\end{proof}

Before we can actually use our subadditivity estimate, we need a small technical lemma which can probably be found elsewhere, but we prove it here for completeness.

\begin{lemma}
\label{lem:sob}
	Let $u$ be continuous and $\nabla u \in L^p(U)$ for a bounded, open subset of $\Rbb^3$ with $C^1$ boundary and $p > 3$. Assume that $u$ vanishes at some point in $U$. Then there exists a constant which only depends on $U$ and $p$ such that
	\begin{equation}
		\lVert u \rVert_{L^\infty(\overline{U})} \leq C \lVert \nabla u \rVert_{L^p(U)}.
	\end{equation}
\end{lemma}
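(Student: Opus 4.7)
The strategy is the standard one: invoke Morrey's embedding on the bounded $C^1$ domain $U$ to upgrade the $L^p$-gradient bound into a Hölder estimate on $\overline{U}$, then anchor this estimate at the zero of $u$ to convert it into a uniform bound.

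\textbf{Step 1: Morrey's inequality.} Since $U$ is bounded and $u$ continuous, $\overline{U}$ is compact, so $u \in L^\infty(U) \subset L^p(U)$ and therefore $u \in W^{1,p}(U)$. Morrey's embedding $W^{1,p}(U) \hookrightarrow C^{0,1-3/p}(\overline{U})$, which holds for $p > 3$ on bounded $C^1$ domains, applied to $u$, yields a constant $C_1 = C_1(U,p)$ such that
\begin{equation*}
|u(x) - u(y)| \leq C_1 \|\nabla u\|_{L^p(U)}\, |x - y|^{1 - 3/p}, \qquad x, y \in \overline{U}.
\end{equation*}
The crucial point is that only the gradient norm appears on the right. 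On a ball this is immediate from $u(x)-u(y) = \int_0^1 \nabla u(y+t(x-y))\cdot(x-y)\,dt$ and Hölder's inequality; for a general $C^1$ domain one passes to $u - \overline{u}_U$, uses Poincaré's inequality to control $\|u - \overline{u}_U\|_{L^p(U)}$ by $\|\nabla u\|_{L^p(U)}$, and invokes a Calderón--Stein extension operator so that Morrey can be applied on $\mathbb{R}^3$, with the geometric loss absorbed into $C_1$.

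\textbf{Step 2: Anchoring.} Let $x_0 \in U$ be a point at which $u(x_0) = 0$. Taking $y = x_0$ in the Hölder estimate and using $|x - x_0| \leq \operatorname{diam}(U)$, we obtain, for every $x \in \overline{U}$,
\begin{equation*}
|u(x)| = |u(x) - u(x_0)| \leq C_1 \operatorname{diam}(U)^{1-3/p} \|\nabla u\|_{L^p(U)},
\end{equation*}
which is the claimed inequality with constant $C = C_1 \operatorname{diam}(U)^{1-3/p}$ depending only on $U$ and $p$. Taking the supremum over $x \in \overline{U}$ yields the stated bound.

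\textbf{Main obstacle.} The only delicate point is the gradient-only form of Morrey's inequality on a non-convex domain: the fundamental-theorem-of-calculus argument that works on a ball can fail on a general open set. The $C^1$ regularity of $\partial U$ is used precisely here, either via the bounded extension operator $W^{1,p}(U) \to W^{1,p}(\mathbb{R}^3)$ or via a chain-of-balls argument exploiting the cone/Lipschitz condition on $\partial U$; this is the step where the dependence of $C$ on the geometry of $U$ enters, and everything afterwards is routine.
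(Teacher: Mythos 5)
Your proof is correct, and it takes a genuinely different route from the paper. The paper applies Morrey's embedding in its standard form, $\lVert u \rVert_{C^{0,1-3/p}(\overline{U})} \leq C \lVert u \rVert_{W^{1,p}(U)}$, and then separately establishes a Poincar\'e-type inequality $\lVert u \rVert_{L^p(U)} \leq C \lVert \nabla u \rVert_{L^p(U)}$ \emph{for functions vanishing at a point}; this tailored Poincar\'e inequality is proved by a compactness-and-contradiction argument (bounded sequence in H\"older norm via Morrey, Arzel\`a--Ascoli, weak convergence of gradients to zero, then extracting a limit point where the limit function vanishes). Your approach instead extracts from Morrey only the oscillation estimate, obtains a gradient-only version of it by subtracting the mean and invoking the \emph{standard} Poincar\'e--Wirtinger inequality on the $C^1$ domain, and then anchors the H\"older estimate at the zero $x_0$ of $u$. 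This is more direct: it replaces the non-constructive compactness step by a citation to a textbook inequality, and the anchoring step is transparent. The trade-off is that you lean on Poincar\'e--Wirtinger as a black box, whereas the paper essentially re-derives a Poincar\'e-type inequality from scratch, which keeps the argument self-contained. One remark applying equally to both proofs: the conclusion and both arguments implicitly require $U$ to be connected (your Poincar\'e--Wirtinger step and the paper's ``$\nabla u = 0$ plus a zero implies $u\equiv 0$'' step both fail on a disconnected $U$, and the lemma itself is false there); since the lemma is only used for sets of the form $\Delta + B_{1/C}$, this is harmless, but it is worth noting that connectedness is being used.
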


\begin{proof}
	Recall that by Morrey's inequality (see, for instance \cite[§5.6. Theorem 5]{Evans_2010}) we have
	\begin{equation}
	\label{eq:sob1}
		\lVert u \rVert_{C^{0,1-3/p}(\overline{U})} \leq C \lVert u \rVert_{W^{1,p}(U)},
	\end{equation}
	where $C$ is an universal constant which only depends on $U$ and $p$. The Hölder norm is defined by
	\begin{equation*}
		\lVert u \rVert_{C^{0,\gamma}(\overline{U})} \coloneqq \sup_{x \in U} |u(x)| + \sup_{\substack{x,y \in U \\ x \neq y}} \left\lbrace \frac{|u(x) - u(y)|}{|x-y|^\gamma} \right\rbrace.
	\end{equation*}
	This equation holds for all $u \in W^{1,p}(U)$, the Sobolev space of functions in $L^p$ for which all first-order (weak) partial derivatives are in $L^p$. We now claim that for functions which vanish at some point in $U$, we have the Poincaré-type inequality
	\begin{equation*}
		\lVert u \rVert_{L^p(U)} \leq C \lVert \nabla u \rVert_{L^p(U)}
	\end{equation*}
	for which the constant only depends on $U$ and $p$. Indeed, assume that there is no such constant, i.e. that for all $n$, there exist functions $u_n$ such that
	\begin{equation*}
		\lVert u_n \rVert_{L^p(U)} \geq n \lVert \nabla u_n \rVert_{L^p(U)}.
	\end{equation*}
	By normalizing, we can achieve $\lVert u_n \rVert_{L^p(U)} = 1$ and $\lVert \nabla u_n \rVert_{L^p(U)} \leq 1/n$. In particular, the $u_n$ form a bounded sequence in $W^{1,p}(U)$ and by (\ref{eq:sob1}), this sequence is bounded in Hölder norm and consequently has a convergent subsequence by Arzela-Ascoli. We call the limit $u$ and see that (possibly after taking another subsequence) $\nabla u$ is the weak limit of $\nabla u_n$. This implies $\nabla u = 0$. Also, since the $L^\infty$ norm is stronger than any $L^p$ norm in bounded sets, $u$ is the strong limit of the $u_n$'s and therefore $\lVert u \rVert_{L^p(U)} = 1$. If we can find $x \in \overline{U}$ such that $u(x)=0$, we found our contradiction and therefore have proved the lemma.
	
	To see this, we pass to another subsequence such that the zeros of the $u_n$'s, which we call $x_n$, converge to some $x \in \overline{U}$. Now we write
	\begin{equation}
		|u(x)| = |u(x) - u_n(x_n)| \leq |u(x) - u_n(x)| + |u_n(x) - u_n(x_n)|
	\end{equation}
	where the first term can be made small by choosing $n$ large enough such that $\lVert u - u_n \rVert_{L^\infty(\overline{U})}$ is small and the second term is small by the uniform continuity of the $u_n$'s.
\end{proof}

From now on, denote by 
\begin{equation*}
		\rhob \coloneqq \min_{x \in \supp (\indld)} \rho(x) \quad \quad \textrm{and} \quad \quad \brho \coloneqq \max_{x \in \supp (\indld)} \rho(x)
\end{equation*}
	the minimal and maximal value of $\rho$ on the support of $\indld$. In our applications, it will always be clear which $\ell$ and $\delta$ are meant, so we do not denote it in our notation to keep it simple. Since we only assumed $\rho \in H^1(\Rbb^3)$ so far and this is not enough for continuity, we will need a stronger assumption on $\rho$. The next lemma will contain such an assumption and it will be an important tool in our proof of the main theorem.

\begin{lemma}
\label{lem:rhoalbet}
	Let $p > 3$, $0 < \theta$, $a \leq 1$, $1 \leq \alpha$ and $0 \leq \beta \leq 1$. Furthermore, assume $\alpha a \leq p$. Then there exists a constant $C > 0$ such that for $\delta \leq \ell/C$, we have
	\begin{equation}
	\label{eq:rhoalbet1}
		\intR \left(\rho - \rhob \right)^\alpha \rho^\beta (\indld) \leq C\eps^{\alpha-1} \left(\frac{\ell^p}{\eps^{\frac{p}{a}-1}} \int_{2\ell\Delta} \left| \nabla \rho^\theta \right|^p + \eps \intR \rho^{\frac{\alpha + \beta - \theta a \alpha}{1- \frac{a \alpha}{p}}} (\indld) \right)
	\end{equation}
	and
	\begin{equation}
	\label{eq:rhoalbet2}
		\intR \left(\brho - \rho \right)^\alpha \rho^\beta (\indld) \leq C\eps^{\alpha-1} \left(\frac{\ell^p}{\eps^{\frac{p}{a}-1}} \int_{2\ell\Delta} \left| \nabla \rho^\theta \right|^p + \eps \intR \brho^{\frac{\alpha + \beta - \theta a \alpha}{1- \frac{a \alpha}{p}}} (\indld) \right)
	\end{equation}
	for any $\eps > 0$.
\end{lemma}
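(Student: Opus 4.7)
The backbone of the argument is Morrey's inequality in the form of Lemma~\ref{lem:sob} applied to $\rho^\theta$, combined with a mean-value estimate that converts bounds on $\rho^\theta - \rhob^\theta$ into bounds on $\rho - \rhob$. First, I would use that the restriction $\delta \leq \ell/C$ ensures $\supp(\indld) \subset 2\ell\Delta$, so $\rho^\theta - \rhob^\theta$ vanishes somewhere in $2\ell\Delta$. Applying Lemma~\ref{lem:sob} on the reference domain $2\Delta$ and scaling by $\ell$ (which multiplies the $L^p$-norm of the gradient by $\ell^{1-3/p}$) then gives
\begin{equation*}
  \|\rho^\theta - \rhob^\theta\|_{L^\infty(\supp(\indld))} \leq C \ell^{1 - 3/p} \|\nabla \rho^\theta\|_{L^p(2\ell\Delta)}.
\end{equation*}
For $\theta \leq 1$ (the regime used in Theorem~\ref{thm:lda}), the mean value theorem yields $\rho - \rhob \leq \theta^{-1} \rho^{1-\theta}(\rho^\theta - \rhob^\theta)$. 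Raising to the power $\alpha a$ and pairing with the trivial factor $(\rho - \rhob)^{\alpha(1-a)} \leq \rho^{\alpha(1-a)}$ (which uses $a \leq 1$) leads to the pointwise bound on $\supp(\indld)$:
\begin{equation*}
  (\rho - \rhob)^\alpha \rho^\beta \leq C \, \ell^{\alpha a(1 - 3/p)} \, \|\nabla \rho^\theta\|_{L^p(2\ell\Delta)}^{\alpha a} \, \rho^{\alpha + \beta - \theta a \alpha}.
\end{equation*}

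The next step is a pointwise weighted Young inequality applied to the factor $\|\nabla \rho^\theta\|_{L^p}^{\alpha a}$ (a constant in $x$) against $\rho^{\alpha + \beta - \theta a \alpha}$, using conjugate exponents $r = p/(\alpha a)$ and $r' = 1/(1 - \alpha a/p)$, which are well-defined under the hypothesis $\alpha a \leq p$. Written as $uv \leq \lambda^r u^r / r + \lambda^{-r'} v^{r'}/r'$, this produces $\|\nabla \rho^\theta\|_{L^p}^p$ on one side and $\rho^{(\alpha + \beta - \theta a \alpha)/(1 - a\alpha/p)}$ on the other. Integrating against $\indld$ (using $\int \indld \leq C\ell^3$ on the constant first term) and making the choice
\begin{equation*}
  \lambda^r = \eps^{\alpha - p/a} \, \ell^{(p-3)(p - \alpha a)/p}
\end{equation*}
makes the powers of $\eps$ and of $\ell$ on each side match the form in \eqref{eq:rhoalbet1}; factoring out $\eps^{\alpha - 1}$ from both terms then gives the first inequality.

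The proof of \eqref{eq:rhoalbet2} will be parallel, using instead $\brho - \rho \leq \theta^{-1} \brho^{1-\theta}(\brho^\theta - \rho^\theta)$ (again valid for $\theta \leq 1$, bounding the MVT mean point $c \in [\rho, \brho]$ from above by $\brho$) together with $\rho^\beta \leq \brho^\beta$, so that the pointwise bound reads $(\brho - \rho)^\alpha \rho^\beta \leq C \ell^{\alpha a (1 - 3/p)} \|\nabla \rho^\theta\|_{L^p}^{\alpha a} \brho^{\alpha + \beta - \theta a \alpha}$. The same Young step with the same weight produces $\brho^{(\alpha + \beta - \theta a \alpha)/(1 - a\alpha/p)}$ as a constant, which after multiplication by $\int \indld$ fits inside the integral on the right-hand side of \eqref{eq:rhoalbet2}. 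The main technical obstacle is not any single step but the joint bookkeeping: the weight $\lambda$ must be a specific product of powers of $\eps$ and $\ell$ chosen so that both exponents land correctly on \textit{each} of the two output terms simultaneously.
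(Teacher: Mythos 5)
Your argument is correct and follows essentially the same route as the paper's: the mean-value bound $\rho-\rhob\leq C\rho^{1-\theta}(\rho^\theta-\rhob^\theta)$, interpolation with the trivial bound $\rho-\rhob\leq\rho$ to introduce the free exponent $a$, the scaled Morrey inequality from Lemma~\ref{lem:sob}, and a weighted Young/AM-GM step to split the product into the two terms of the right-hand side. The paper organizes this slightly differently (it first extracts $\lVert\rho^\theta-\rhob^\theta\rVert_{L^\infty}^{\alpha a}$, then splits $\int\rho^{\alpha+\beta-\theta a\alpha}(\indld)$ via H\"older with $(\indld)=(\indld)^{\alpha a/p}(\indld)^{1-\alpha a/p}$, and only then applies AM-GM to the resulting product of two scalars), whereas you apply Young pointwise before integrating; these are arithmetically equivalent, and your exponent bookkeeping — $\lambda^r=\eps^{\alpha-p/a}\ell^{(p-3)(p-\alpha a)/p}$ — does land on the stated powers $\eps^{\alpha-p/a}\ell^p$ and $\eps^{\alpha}$. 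One small technical point: you invoke Lemma~\ref{lem:sob} on the reference domain $2\Delta$, but that lemma is stated for sets with $C^1$ boundary, and a tetrahedron has corners. The paper avoids this by applying the lemma on the rounded set $\Delta+B_{1/C}$ (which contains $\supp(\indld)/\ell$ once $\delta\leq\ell/C$) and then enlarging only the gradient integral to $2\ell\Delta$; alternatively one could note that Morrey's inequality holds on Lipschitz domains. Either fix keeps your argument intact.
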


\begin{proof}
	First consider (\ref{eq:rhoalbet1}). For all $0 < \theta \leq 1$, we have
	\begin{equation*}
		\rho - \rhob \leq C \left(\rho^\theta - \rhob^\theta \right)\rho^{1-\theta},
	\end{equation*}
	so therefore
	\begin{equation*}
		(\rho - \rhob)^\alpha \leq C \left( \rho^\theta - \rhob^\theta \right)^{\alpha a} \rho^{\alpha (1-\theta a)}.
	\end{equation*}
	Thus, we get
	\begin{equation}
	\label{eq:rhoalbet3}
	\begin{split}
		& \intR (\rho - \rhob)^\alpha \rho^\beta (\indld) \\
		\leq & C \intR \left(\rho^\theta -\rhob^\theta \right)^{\alpha a} \rho^{\alpha + \beta - \theta a \alpha} (\indld) \\
		= & C \lVert \rho^\theta - \rhob^\theta \rVert_{L^\infty(\ell \Delta + B_\delta)}^{p\frac{\alpha a}{p}} \intR \rho^{\alpha + \beta - \theta a \alpha} (\indld).
	\end{split}
	\end{equation}
	
	Now we use Lemma \ref{lem:sob} for the set with $C^1$ boundary $\Delta + B_{1/C}$ and a scaling argument to see for every continuous $u$ which vanishes at least in one point in $\ell \Delta + B_\delta$
	\begin{equation*}
		\lVert u \rVert^p_{L^\infty(\overline{\ell \Delta + B_\delta})} \leq \lVert u \rVert^p_{L^\infty(\overline{\ell (\Delta + B_{1/C})})} \leq C\ell^{p-3} \int_{\ell (\Delta + B_{1/C})} \left| \nabla u \right|^p \leq C\ell^{p-3} \int_{2\ell\Delta} \left| \nabla u \right|^p,
	\end{equation*}
	where we chose $C$ large enough such that $\Delta + B_{1/C} \subset 2\Delta$.
	
	When using Hölder on the last integral in equation (\ref{eq:rhoalbet3}) with $q = p/\alpha a$, which is allowed by our assumptions, we obtain
	\begin{equation}
	\begin{split}
		& \intR (\rho - \rhob)^\alpha \rho^\beta (\indld) \\
		\leq & C \left(\ell^{p-3} \int_{\ell \Delta + B_\delta} \left| \nabla \rho^\theta \right|^p \right)^\frac{\alpha a}{p} \left( \intR \rho^{\frac{\alpha + \beta - \theta a \alpha}{1 - \frac{\alpha a}{p}}} (\indld) \right)^{1 - \frac{\alpha a }{p}} \left(\intR \indld \right)^\frac{\alpha a}{p} \\
		\leq & C \left(\ell^{p} \int_{\ell \Delta + B_\delta} \left| \nabla \rho^\theta \right|^p \right)^\frac{\alpha a}{p} \left( \intR \rho^{\frac{\alpha + \beta - \theta a \alpha}{1 - \frac{\alpha a}{p}}} (\indld) \right)^{1 - \frac{\alpha a }{p}}.
	\end{split}
	\end{equation}
	We conclude by realizing that for $0 \leq \tilde{p} \leq 1$ and any $\gamma$
	\begin{equation}
		x^{\tilde{p}} y^{1-\tilde{p}} \leq C(\eps^{-\gamma/\tilde{p}} x + \eps^{\gamma/1-\tilde{p}} y).
	\end{equation}
	Here, we choose $\gamma = \alpha(1 - \alpha a/p)$ to get the desired result. For (\ref{eq:rhoalbet2}), we use $\rho \leq \brho$ and
	\begin{equation*}
		\brho - \rho \leq C \left(\brho^\theta - \rho^\theta \right)\brho^{1-\theta}
	\end{equation*}
	and then complete the proof using exactly the same arguments as for the first inequality.
\end{proof}

By using the previous two lemmata together, we can finally relate the energy of the cut-off density to a locally constant density in the following way:

\begin{proposition}
\label{prop:rholocalupper}
	Let $p > 3$ and $0 < \theta < 1$. Assume additionally
	\begin{equation*}
		1 \leq \frac{5p}{3p-5}(1 - \theta) \leq 2
	\end{equation*}
	and
	\begin{equation*}
		1 \leq \frac{2p}{p-2}(1 - \theta) \leq 2.
	\end{equation*}
	Then there exists a constant $C = C(p,\theta)$ such that, for $\delta \leq \ell/C$, we have
	\begin{equation}
	\begin{split}
		E\left(\rho (\indld) \right) \leq  & E \left( \rhob (\indld) \right) + C \eps \intR \left(\rho + \rho^2 \right) (\indld) \\
		& + C \intR \rho \left| \nabla \sqrt{\indld} \right|^2 + \frac{C}{\eps} \intR \left| \nabla \sqrt{\rho} \right|^2 (\indld) \\
		& + C \frac{\ell^p}{\eps^{p-1}} \int_{2\ell \Delta} \left| \nabla \rho^\theta \right|^p
	\end{split}
	\end{equation}
	and
	\begin{equation}
	\begin{split}
		E\left(\rho (\indld) \right) \geq  & E \left( \brho (\indld) \right) - C \eps \ell^3(\brho + \brho^2) - \frac{C\ell^2}{\delta}\brho\\
		& - \frac{C}{\eps} \intR \left| \nabla \sqrt{\rho} \right|^2 (\indld) \\
		& - C \frac{\ell^p}{\eps^{p-1}} \int_{2\ell \Delta} \left| \nabla \rho^\theta \right|^p
	\end{split}
	\end{equation}
	for all $0 < \eps < 1$.
\end{proposition}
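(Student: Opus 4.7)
The plan is to derive both bounds from the subadditivity estimate of Lemma \ref{lem:subadd} combined with the interpolation Lemma \ref{lem:rhoalbet}. For the upper bound, split $\rho(\indld)=\rhob(\indld)+(\rho-\rhob)(\indld)$ and apply Lemma \ref{lem:subadd} with $\rho_1:=\rhob(\indld)$ and $\rho_2:=(\rho-\rhob)(\indld)\geq 0$. The leading term is then $E(\rho_1)=E(\rhob(\indld))$, exactly the object we want to compare to, and the four remaining contributions of Lemma \ref{lem:subadd} must each be matched to one of the error terms on the right-hand side of the Proposition.

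The first, $C\eps\intR(\rho_1^{5/3}+\rho_1)$, is controlled using $\rhob\leq\rho$ on $\supp(\indld)$, the pointwise bound $\indld\leq 1$, and $\rho^{5/3}\leq\rho+\rho^2$. The gradient contribution $C\intR|\nabla\sqrt{\rho_2+\eps\rho_1}|^2$ is the essential computation: with $g:=\rho-(1-\eps)\rhob$ one has $\rho_2+\eps\rho_1=g\cdot(\indld)$ and, crucially, $g\geq\eps\rho$ on $\supp(\indld)$ (since $\rho\geq\rhob$ there). The Leibniz-type inequality
\[
|\nabla\sqrt{fg}|^2\leq 2\bigl(g|\nabla\sqrt f|^2+f|\nabla\sqrt g|^2\bigr),
\]
applied with $f=\indld$, combined with $|\nabla g|^2=|\nabla\rho|^2$ and the identity $|\nabla\sqrt g|^2/|\nabla\sqrt\rho|^2=\rho/g\leq\eps^{-1}$, produces exactly the two targeted terms $C\intR\rho|\nabla\sqrt{\indld}|^2$ and $(C/\eps)\intR|\nabla\sqrt\rho|^2(\indld)$.

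The remaining two contributions, $\frac{1-\eps}{\eps}D(\rho_2)$ and $C\eps^{-2/3}\intR\rho_2^{5/3}$, account for the $\ell^p/\eps^{p-1}$-type error. Young's inequality (available since $\omega\in L^1$) gives $D(\rho_2)\leq C\intR(\rho-\rhob)^2(\indld)$; applying Lemma \ref{lem:rhoalbet}\,(\ref{eq:rhoalbet1}) with $\alpha=2$, $\beta=0$, $a=1$, and internal regularization parameter equal to $\eps$ yields
\[
\tfrac{1}{\eps}\intR(\rho-\rhob)^2(\indld)\leq\frac{C\ell^p}{\eps^{p-1}}\int_{2\ell\Delta}|\nabla\rho^\theta|^p+C\eps\intR\rho^{\frac{2p(1-\theta)}{p-2}}(\indld),
\]
and the hypothesis $1\leq\frac{2p(1-\theta)}{p-2}\leq 2$ is precisely what puts the residual density exponent into $[1,2]$, so that the second term is absorbed into $\eps\intR(\rho+\rho^2)(\indld)$. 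The $C\eps^{-2/3}\intR\rho_2^{5/3}$ term is handled identically with $\alpha=5/3$, the hypothesis $1\leq\frac{5p(1-\theta)}{3p-5}\leq 2$ playing the analogous role. The choice $a=1$, $\eps'=\eps$ is in fact the unique one that simultaneously reproduces the target $\ell^p/\eps^{p-1}$ and keeps the residual density integral absorbable.

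The lower bound follows by the symmetric variant: apply Lemma \ref{lem:subadd} with the roles swapped, $\rho_1:=\rho(\indld)$ and $\rho_2:=(\brho-\rho)(\indld)$, so that the lemma becomes an upper bound on $E(\brho(\indld))$ in terms of $E(\rho(\indld))$ plus errors, and rearrange. The only structural difference is that integrals against $\indld$ are now estimated by $\rho\leq\brho$ on $\supp(\indld)$, yielding $\brho^{\,\cdot}\ell^3$-type terms; the geometric cross term integrates to $\brho\intR|\nabla\sqrt{\indld}|^2\leq C\brho\ell^2/\delta$, which is the origin of the isolated boundary term $C\brho\ell^2/\delta$ in the statement, and inequality (\ref{eq:rhoalbet2}) of Lemma \ref{lem:rhoalbet} replaces (\ref{eq:rhoalbet1}). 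The main obstacle in both directions is the gradient step sketched above: the $\eps$-regularization added by the subadditivity argument is precisely what keeps $g$ (respectively $\brho-(1-\eps)\rho$) bounded below by $\eps\rho$, and it is this regularization that forces the prefactor $\eps^{-1}$ in front of the Hoffmann--Ostenhoff term and simultaneously pins down the scaling $a=1$, $\eps'=\eps$ in the application of Lemma \ref{lem:rhoalbet}.
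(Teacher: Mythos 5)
Your proposal is correct and follows essentially the same approach as the paper: the same decomposition $\rho = \rhob + (\rho - \rhob)$ (resp.\ $\brho = \rho + (\brho - \rho)$), the same application of Lemma \ref{lem:subadd}, the same Leibniz-type inequality with the observation $g = \rho - (1-\eps)\rhob \geq \eps\rho$, and the same calls to Lemma \ref{lem:rhoalbet} with $(a,\alpha,\beta) = (1, 5/3, 0)$ and $(1, 2, 0)$. The paper's proof is nearly identical, including the bound $\intR |\nabla\sqrt{\indld}|^2 \leq C\ell^2/\delta$ that gives the isolated $C\brho\ell^2/\delta$ term in the lower bound.
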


\begin{proof}
	We write $\rho = \rhob + (\rho - \rhob)$ and apply Lemma \ref{lem:subadd}. This gives
	\begin{equation}
	\begin{split}
		E\left(\rho (\indld) \right) \leq & E \left( \rhob (\indld) \right) +  C\eps \intR \left( \rho^{5/3} + \rho \right) (\indld) \\
		& + \frac{C}{\eps^{2/3}} \intR (\rho - \rhob)^{5/3} (\indld) + \frac{1}{\eps} D \left( (\rho - \rhob) (\indld) \right) \\
		& + C \intR \left| \nabla \sqrt{(\indld)(\rho - (1 - \eps)\rhob)} \right|^2.
	\end{split}
	\end{equation}
	In the first line, we have used that $\rhob \leq \rho$ on the support of $\indld$ and $\indld \leq 1$. First, we bound $\rho^{5/3}$ by $C \left(\rho + \rho^2 \right)$. Then, we follow \cite{Lewin_2020} by using
	\begin{equation*}
		\left| \nabla \sqrt{fg} \right|^2 = \frac{\left| \nabla (fg) \right|^2}{4fg} \leq \frac{f\left| \nabla g \right|^2}{2g} + \frac{g\left| \nabla f \right|^2}{2f}
	\end{equation*}
	and seeing that $\nabla (\rho - (1-\eps)\rhob) = \nabla \rho = 2\sqrt{\rho}\nabla \sqrt{\rho}$. Therefore, we have pointwise
	\begin{equation*}
		\left| \sqrt{(\indld)(\rho - (1-\eps)\rhob)} \right|^2 \leq (\indld)\frac{2\rho\left|\nabla \sqrt{\rho} \right|^2}{\rho - (1-\eps)\rhob} + 2\rho \left| \nabla \sqrt{\indld} \right|^2.
	\end{equation*}
	Then, from $\rho \geq \rhob$ it follows that $\eps \rho \leq \rho - (1-\eps)\rhob$, we have
	\begin{equation*}
		\frac{\rho}{\rho - (1-\eps)\rhob} \leq \frac{1}{\eps}
	\end{equation*}
	and thus
	\begin{equation*}
		\intR \left| \nabla \sqrt{(\indld)(\rho - (1 - \eps)\rhob)} \right|^2 \leq \frac{C}{\eps} \intR \left| \nabla \sqrt{\rho} \right|^2 (\indld) + C \intR \rho \left| \nabla \sqrt{\indld} \right|^2.
	\end{equation*}
	Now, by using Lemma \ref{lem:rhoalbet} with $a = 1$, $\alpha = 5/3$ and $\beta = 0$, we get
	\begin{equation*}
	\frac{C}{\eps^{2/3}} \intR (\rho - \rhob)^{5/3} (\indld) \leq C \frac{\ell^p}{\eps^{p-1}} \int_{2\ell \Delta} \left| \nabla \rho^\theta \right|^p + C \eps \intR \rho^{\frac{5p}{3p-5}(1 - \theta)} (\indld)
	\end{equation*}
	where we can bound the last term by an integral of the form $\intR \left(\rho + \rho^2 \right) (\indld)$ by our assumptions on $p$ and $\theta$. Similarly, we can bound our interaction term
	\begin{equation*}
	\begin{split}
		D \left( (\rho - \rhob) (\indld) \right) & \leq C \left\lVert \left(\rho - \rhob \right)\left(\indld \right) \right\rVert^2_{L^2} \\
		& \leq C \intR (\rho - \rhob)^2(\indld) \\
		& \leq C \eps \left( \frac{\ell^p}{\eps^{p-1}} \int_{2\ell \Delta} \left| \nabla \rho^\theta \right|^p + \eps \intR \rho^{\frac{2p}{p-2}(1 - \theta)} (\indld) \right).
	\end{split}
	\end{equation*}
	Here, we used that $\omega$ is short-range in the first line, $\indld \leq 1$ in the second line and Lemma \ref{lem:rhoalbet} with $a = 1$, $\alpha = 2$ and $\beta = 0$ in the last inequality. We conclude by bounding our last term as before.
	For our lower bound, writing $\brho = \rho + (\brho -\rho)$ in Lemma \ref{lem:subadd} gives
	\begin{equation}
	\begin{split}
		E\left(\brho (\indld) \right) \leq & E \left( \rho (\indld) \right) +  C\eps \intR \left( \rho^{5/3} + \rho \right) (\indld) \\
		& + \frac{C}{\eps^{2/3}} \intR (\brho - \rho)^{5/3} (\indld) + \frac{1}{\eps} D \left( (\brho - \rho) (\indld) \right) \\
		& + C \intR \left| \nabla \sqrt{(\indld)(\brho - (1 - \eps)\rho)} \right|^2.
	\end{split}
	\end{equation}
	We now use $\rho \leq \brho$, inequality (\ref{eq:rhoalbet2}) and estimate the direct interaction term as before. Also, with the same arguments as before and $\eps\rho \leq \brho - (1-\eps)\rho$ we estimate
	\begin{equation*}
		\intR \left| \nabla \sqrt{(\indld)(\brho - (1 - \eps)\rho)} \right|^2 \leq \frac{2}{\eps} \intR \left| \nabla \sqrt{\rho} \right|^2 (\indld) + \frac{C\ell^2}{\delta} \brho.
	\end{equation*}
	We conclude using our assumptions on the parameter $p$ and $\theta$.
\end{proof}
}

\section{Proof of the main Theorem}\label{chapter:mainthm}
{
We are now able to prove our main Theorem.

\begin{proof}[Proof of Theorem \ref{thm:lda}]
	In Section \ref{chapter:known_bds} we saw that
	\begin{equation*}
		|E(\rho)| \leq \clt(1+\eps) \intR \rho^{5/3} + \frac{\omega(0)}{2} \intR \rho + \frac{C(1+\eps)}{\eps} \intR |\nabla \sqrt{\rho} |^2.
	\end{equation*}
	Consequently, we have
	\begin{equation*}
		|\elda(\rho_0)| \leq \clt\rho_0^{5/3} + \frac{\omega(0)}{2}\rho_0.
	\end{equation*}
	Note that we could have written the error term with $\intR \omega$ as well, since we assumed $\omega$ and $\hat{\omega}$ to be positive.
	In both cases, we see that inequality (\ref{eq:lda}) is obvious for large $\eps$, so we only have to consider $\eps$ small. This is done in the following two Propositions.
\end{proof}

\begin{proposition}[Upper bound for small $\eps$]
\label{prop:energyupper}
	Let $\omega$ be a short-range interaction and $\eps > 0$ small. Assume additionally
	\begin{equation*}
		2 \leq p\theta \leq \frac{2}{5}p + 1.
	\end{equation*}
	Then there exists a constant $C$ such that
	\begin{equation}
		E(\rho) \leq \intR \elda(\rho) + C \eps \intR (\rho + \rho^2) + \frac{C}{\eps} \intR |\nabla \sqrt{\rho}|^2 + \frac{C}{\eps^{5/2p-1}} \intR |\nabla \rho^\theta |^p.
	\end{equation}
	for any $\rho \geq 0$.
\end{proposition}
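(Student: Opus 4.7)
The plan is to chain together the main tools developed in the paper: the localization upper bound (Proposition \ref{prop:energylocalupper}), the tile-wise replacement by a locally constant density (Proposition \ref{prop:rholocalupper}), the quantitative thermodynamic limit (Proposition \ref{prop:ThLim}, equation (\ref{eq:ThLim0})), and the local upper bound on $\elda$ from Lemma \ref{lem:euegupper} that converts $\elda(\rhob_k)$ back to $\elda(\rho(x))$. Writing $\chi_k$ for the smooth tile cutoff $\chi_{\ell,\delta,j}(R\cdot-\ell z-\tau)$ and $\rhob_k=\min_{x\in\supp\chi_k}\rho(x)$, the argument will take the schematic form
\[
E(\rho)\ \le\ \sum_k E(\chi_k\rho)\ \le\ \sum_k E(\rhob_k\chi_k)\ \le\ \sum_k|\ell\Delta|\,\elda(\rhob_k)\ \le\ \intR\elda(\rho)
\]
modulo explicit error terms at each step. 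Triple-averaging over the translation $\tau$, the rotation $R$, and the scale $t\in(1-\alpha,1+\alpha)$ as in Proposition \ref{prop:energylocalupper} turns $\sum_k\chi_k$ into the true partition of unity (\ref{eq:incomp_unity}), which is what makes the final inequality collapse onto $\intR\elda(\rho)$ after averaging.

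Concretely, after fixing $\ell$, $\delta$, $\alpha$ (to be chosen at the end), the averaged form of Proposition \ref{prop:energylocalupper} bounds $E(\rho)$ by the triple-averaged sum of the $E(\chi_k\rho)$ plus $C\delta^2\log(\alpha^{-1})\intR\rho^2$. On each tile I apply the upper bound of Proposition \ref{prop:rholocalupper}; the discrepancy between its cutoff $\indld$ and the localization cutoff $\chi_{\ell,\delta,j}$ (which differ by the factor $(1-\delta/\ell)^{-3}$) is absorbed into the density by a scaling argument in the spirit of Lemma \ref{lem:scalinglower}, costing only a multiplicative perturbation $1+O(\delta/\ell)$ on the main terms and an additive $O((\delta/\ell)\ell^3(\rhob_k+\rhob_k^2))$ per tile. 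Next, (\ref{eq:ThLim0}) of Proposition \ref{prop:ThLim} bounds $E(\rhob_k\chi_k)$ above by $|\ell\Delta|\,\elda(\rhob_k)+C\ell^2\rhob_k/\delta+C\ell^2\rhob_k^2(1+\delta)$. Finally Lemma \ref{lem:euegupper}, applied pointwise as
\[
\elda(\rhob_k)\le\elda(\rho(x))+C\rho(x)(\rho(x)-\rhob_k),\qquad x\in\supp\chi_k,
\]
integrated against $\chi_k$ and summed/averaged, produces $\intR\elda(\rho)$ plus a remainder $C\sum_k\int\chi_k\rho(\rho-\rhob_k)\,dx$.

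The tile-local errors from the three steps fall into standard forms: $\eps\int(\rho+\rho^2)\chi_k$, $\eps^{-1}\int|\nabla\sqrt\rho|^2\chi_k$, $\int\rho|\nabla\sqrt{\chi_k}|^2$, $\ell^p\eps^{-(p-1)}\int_{2\ell\Delta_k}|\nabla\rho^\theta|^p$, and three $(\rho-\rhob_k)^\alpha\rho^\beta\chi_k$-type integrals with $(\alpha,\beta)\in\{(5/3,0),(2,0),(1,1)\}$ coming respectively from the Lieb--Thirring term, the direct term, and the LDA replacement step. Lemma \ref{lem:rhoalbet} with $a=1$ converts each of the latter into an additional $\ell^p\eps^{-(p-1)}\int_{2\ell\Delta_k}|\nabla\rho^\theta|^p$ plus $\eps\int\rho^r\chi_k$, where the three exponents $r\in\{5p(1-\theta)/(3p-5),\ 2(1-\theta)p/(p-2),\ (2-\theta)p/(p-1)\}$ land in $[1,2]$ precisely under the two-sided hypothesis $2\le p\theta\le 2p/5+1$, so each $\int\rho^r$ is controlled by $\int(\rho+\rho^2)$. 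Summing over $k$ and using the bounded overlap of the enlarged tiles $2\ell\Delta_k$ then yields a global constant multiple of $\intR|\nabla\rho^\theta|^p$.

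The last step is parameter optimization: I would pick $\alpha$ a small absolute constant so $\log(\alpha^{-1})=O(1)$, together with $\delta=\sqrt\eps$ and $\ell=\eps^{-3/2}$. Then $\delta^2\asymp\eps$, $(\ell\delta)^{-1}\asymp\eps$, $(1+\delta)/\ell\asymp\eps^{3/2}$, $\delta/\ell\asymp\eps^2$, so every $\rho$- and $\rho^2$-type remainder collapses into $\eps\intR(\rho+\rho^2)$, while the dominant gradient factor $\ell^p/\eps^{p-1}=\eps^{1-5p/2}=\eps^{-(5p/2-1)}$ matches the exponent in the statement. The hardest part I expect is bookkeeping rather than a new idea: one has to verify that (i) the triple average $(t,R,\tau)$ reconstructs $\intR\elda(\rho)$ cleanly, without boundary-of-tile leftovers competing with the main term, and (ii) the $(1-\delta/\ell)^{-3}$ mismatch between the two cutoff families never multiplies a leading contribution by worse than $1+O(\eps^2)$, so that the induced relative error is absorbed into the target remainder.
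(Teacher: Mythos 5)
Your proposal follows the same route as the paper's own proof: Proposition~\ref{prop:energylocalupper} for the triple-averaged localization, Proposition~\ref{prop:rholocalupper} to replace each $E(\chi_k\rho)$ by $E(\rhob_k\chi_k)$ with the three $(\rho-\rhob)^\alpha\rho^\beta$ remainders handled via Lemma~\ref{lem:rhoalbet}, Proposition~\ref{prop:ThLim} (eq.~\eqref{eq:ThLim0}) to get to $\elda(\rhob_k)$, Lemma~\ref{lem:euegupper} to pass to $\elda(\rho(x))$, and the choices $\delta=\sqrt\eps$, $\ell=\eps^{-3/2}$, $\delta/\ell=\eps^2$. The one cosmetic difference is that the paper absorbs the $(1-\delta/\ell)^{-3}$ mismatch by setting $\tilde\rho=(1-\eps^2)^{-3}\rho$ throughout and undoing it at the end with the scaling estimate~\eqref{eq:eueglower1} from Lemma~\ref{lem:eueglower}, while you invoke Lemma~\ref{lem:scalinglower} per tile; both yield the same $1+O(\eps^2)$ relative error, and your exponent and parameter bookkeeping matches the paper's.
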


\begin{proof}
	Take $\ell = \eps^{-3/2}$ and $\delta = \sqrt{\eps}$. Then we have for $0 < \beta < 1/2$, by Proposition \ref{prop:energylocalupper}
	\begin{multline}
	\label{eq:energylocalupper1}
		E(\rho) \leq \left(\int^{1+\beta}_{1-\beta} \frac{ds}{s^4} \right)^{-1} \int_{1/2+\beta}^{3/2-\beta} \frac{dt}{t^4} \int_{SO(3)} dR \int_{C_{t\ell}} \frac{d\tau}{(t\ell)^3} \times \\
		\times \sum_{z \in \Zbb^3} \sum_{j = 1}^{24} E \left( \chi_{t\ell,t\delta,j}(R \cdot - t\ell z - \tau)\rho \right) +  C \eps \intR \rho^2.
	\end{multline}
	where $\chi_{\ell, \delta, j} \coloneqq (1-\eps^2)^{-3} \mathds{1}_{\ell\mu_j (1-\eps^2)\Delta} \ast \eta_\delta$. Note that $(t\delta)/(t\ell) = \delta/\ell = \eps^2$.
	For simplicity of notation, we will derive a bound for $E(\tilde{\rho} (\mathds{1}_{\tilde{\ell} \Delta} \ast \eta_{\tilde{\delta}})$ where $\Delta$ is a tetrahedron of volume 1/24, but it can be anywhere in $\Rbb^3$, with any rotation. Then we can use this bound in our equation (\ref{eq:energylocalupper1}) with $\tilde{\rho} = (1 - \eps^2)^{-3}\rho$, $\tilde{\ell} = t\ell(1-\eps^2)$ and $\tilde{\delta} = t\delta$. Now, by Proposition \ref{prop:rholocalupper}, we have by our assumptions on $p$ and $\theta$
	\begin{equation*}
	\begin{split}
		E(\tilde{\rho} (\mathds{1}_{\tilde{\ell} \Delta} \ast \eta_{\tilde{\delta}})) \leq & E(\underline{\tilde{\rho}} (\mathds{1}_{\tilde{\ell} \Delta} \ast \eta_{\tilde{\delta}})) + C \eps \intR \left(\tilde{\rho} + \tilde{\rho}^2 \right) (\mathds{1}_{\tilde{\ell} \Delta} \ast \eta_{\tilde{\delta}}) \\
		& + C \intR \tilde{\rho} \left| \nabla \sqrt{\mathds{1}_{\tilde{\ell} \Delta} \ast \eta_{\tilde{\delta}}} \right|^2 + \frac{C}{\eps} \intR \left| \nabla \sqrt{\tilde{\rho}} \right|^2 (\mathds{1}_{\tilde{\ell} \Delta} \ast \eta_{\tilde{\delta}}) \\
		& + C \frac{\tilde{\ell}^p}{\eps^{p-1}} \int_{2\tilde{\ell} \Delta} \left| \nabla \tilde{\rho}^\theta \right|^p.
	\end{split}
	\end{equation*}
	Recall that for any $\rho$, $\ell$, $\delta$, we defined $\rhob \coloneqq \min_{x \in \supp (\indld)} \rho(x)$. Consider the first part of the last inequality. With the inequality from Proposition \ref{prop:ThLim}, we have
	\begin{equation*}
	\begin{split}
		E(\underline{\tilde{\rho}} (\mathds{1}_{\tilde{\ell} \Delta} \ast \eta_{\tilde{\delta}})) & = e_\Delta(\underline{\tilde{\rho}}, \tilde{\ell}. \tilde{\delta}) |\tilde{\ell}\Delta| \\
		& = e_\Delta(\underline{\tilde{\rho}}, \tilde{\ell}, \tilde{\delta}) \intR \mathds{1}_{\tilde{\ell} \Delta} \ast \eta_{\tilde{\delta}} \\
		& \leq \elda(\underline{\tilde{\rho}}) \intR \mathds{1}_{\tilde{\ell} \Delta} \ast \eta_{\tilde{\delta}} + C\eps(\underline{\tilde{\rho}} + \underline{\tilde{\rho}}^2)  \intR \mathds{1}_{\tilde{\ell} \Delta} \ast \eta_{\tilde{\delta}}
	\end{split}
	\end{equation*}
	In the support of $\mathds{1}_{\tilde{\ell} \Delta} \ast \eta_{\tilde{\delta}}$, it follows by Lemma \ref{lem:euegupper}
	\begin{equation*}
		\elda(\underline{\tilde{\rho}}) \leq \elda(\underline{\tilde{\rho}}(x)) + C(\tilde{\rho}(x) - \underline{\tilde{\rho}})\tilde{\rho}(x)
	\end{equation*}
	hence
	\begin{equation*}
		\elda(\underline{\tilde{\rho}}) \intR \mathds{1}_{\tilde{\ell} \Delta} \ast \eta_{\tilde{\delta}} \leq \intR \elda(\tilde{\rho}) (\mathds{1}_{\tilde{\ell} \Delta} \ast \eta_{\tilde{\delta}}) + C \intR (\tilde{\rho} - \underline{\tilde{\rho}})\tilde{\rho} (\mathds{1}_{\tilde{\ell} \Delta} \ast \eta_{\tilde{\delta}}).
	\end{equation*}
	The last term can be bounded by Lemma \ref{lem:rhoalbet}, where we use that $1 \leq p(2-\theta)/(p-1) \leq 2$. In total, by putting the last inequalities together and using that $(1-\eps^2)^{-3}$ is close to 1 for the error terms, we get
	\begin{equation}
	\label{eq:energylocalupper2}
	\begin{split}
		E(\chi_{t\ell,t\delta,j} \rho) = & E(\tilde{\rho}(\mathds{1}_{\tilde{\ell} \Delta} \ast \eta_{\tilde{\delta}})) \\
		\leq & (1-\eps^2)^3 \intR \elda \left( (1-\eps^2)^{-3}\rho(x) \right) \chi_{t\ell,t\delta,j}(x) dx + C\eps \intR (\rho + \rho^2)\chi_{t\ell,t\delta,j} \\
		& + \frac{C}{\eps^{5/2p-1}} \int_{2t\ell(1-\eps^2)^{-3} \Delta} \left| \nabla \rho^\theta \right|^p + \frac{C}{\eps} \intR |\nabla \sqrt{\rho}|^2 \chi_{t\ell,t\delta,j} \\
		& + C \intR \rho |\nabla \sqrt{\chi_{t\ell,t\delta,j}} |^2.
	\end{split}
	\end{equation}
	Following \cite{Lewin_2020}, we insert (\ref{eq:energylocalupper2}) into (\ref{eq:energylocalupper1}) and sum over the tiling using equation (\ref{eq:incomp_unity}) to obtain
	\begin{multline}
		E(\rho) \leq (1-\eps^2)^3 \intR \elda \left( (1-\eps^2)^{-3}\rho(x) \right) dx + C\eps \intR (\rho + \rho^2) \\
		+ \frac{C}{\eps} \intR |\nabla \sqrt{\rho}|^2 + \frac{C}{\eps^{5/2p-1}} \intR \left| \nabla \rho^\theta \right|^p.
	\end{multline}
	When we sum over the sets $2t\ell(1-\eps^2)^{-3} \Delta$, we use that they only have finitely many intersections, resulting in a bigger constant in front of it. Furthermore, we have used our usual estimate
	\begin{multline*}
		\int_{C_{t\ell}} \frac{d\tau}{(t\ell)^3} \sum_{z \in \Zbb^3} \sum_{j=1}^{24} \intR \rho(x) \left| \nabla \sqrt{\chi_{t\ell,t\delta,j}(x-\ell z - \tau)} \right|^2 \; dx\\
		= \frac{24}{(t\ell)^3} \intR \intR \rho(x) \left| \nabla \sqrt{\chi_{t\ell,t\delta,j}(x-z)} \right|^2 \; dz \; dx \leq \frac{C}{\ell\delta} \intR \rho = C\eps \intR \rho.
	\end{multline*}
	By equation (\ref{eq:eueglower1}) in Lemma \ref{lem:eueglower} ,we get for $0 < \alpha = (1-\eps^2)^3 < 1$
	\begin{equation*}
		\alpha^2 \elda(\alpha^{-1}\rho(x)) \leq \elda(\rho(x)) + C(1-\alpha)\alpha^{-1} \rho(x).
	\end{equation*}
	Therefore,
	\begin{multline*}
		(1-\eps^2)^3 \intR \elda \left( (1-\eps^2)^{-3}\rho(x) \right) dx \leq \frac{1}{	(1-\eps^2)^3} \intR \elda(\rho(x))dx \\
		+ \frac{1-(1-\eps^2)^3}{(1-\eps^2)^6} \intR \rho(x)dx
	\end{multline*}
	where the pre-factor on the last term can be bounded by $C\eps$ and the first pre-factor can be bounded by $1+C\eps$. We conclude by using $C\eps \intR \elda(\rho) \leq C\eps \intR (\rho + \rho^2)$ which follows from Lemma \ref{lem:euegrough}.
\end{proof}

\begin{proposition}[Lower bound for small $\eps$]
\label{prop:energylower}
	Let $\omega$ be a short-range interaction and $0 < \eps$ small. Assume additionally
	\begin{equation*}
		2 \leq p\theta \leq \frac{2}{5}p + 1.
	\end{equation*}
	Then there exists a constant $C$ such that
	\begin{equation}
		E(\rho) \geq \intR \elda(\rho) - C \eps \intR (\rho + \rho^2) - \frac{C}{\eps} \intR |\nabla \sqrt{\rho}|^2 - \frac{C}{\eps^{5/2p-1}} \intR |\nabla \rho^\theta |^p.
	\end{equation}
	for $\rho \geq 0$.
\end{proposition}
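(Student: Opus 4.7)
The proof will mirror the structure of Proposition~\ref{prop:energyupper}, but replace each upper-bound tool with its lower-bound counterpart. I would again take $\ell=\eps^{-3/2}$ and $\delta=\sqrt{\eps}$, so that $\delta/\ell=\eps^2$ is small, $(\ell\delta)^{-1}=\eps$, and $(1+\delta)/\ell\le C\eps$; this way the two localization errors $C\eps^2\intR\rho^2$ and $(C/(\ell\delta))\intR\rho+C(1+\delta)/\ell\intR\rho^2$ fit into the target bound. The inputs are Corollary~\ref{cor:energylocallower} (lower-bound localization), the lower bound in Proposition~\ref{prop:rholocalupper} (replacing $\rho$ by $\brho$ on each cell), the averaged lower bound~(\ref{eq:ThLim1}) of Proposition~\ref{prop:ThLim}, Lemma~\ref{lem:euegupper} (to pass from $\elda(\brho)$ down to $\elda(\rho(x))$) and Lemma~\ref{lem:rhoalbet} (to absorb the deviation $\brho-\rho$).

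The first step is to apply Corollary~\ref{cor:energylocallower} at the dilated parameters $(t\ell,t\delta)$ and to average in $t\in(1-\alpha,1+\alpha)$ with weight $t^{-4}$, in $R\in SO(3)$, and in $\tau\in C_{t\ell}$. This produces
\begin{equation*}
E(\rho)\ \geq\ \Big\langle \sum_{z\in\Zbb^3}\sum_{j=1}^{24} E\bigl(\xi_{t\ell,t\delta,j}(R\cdot-t\ell z-\tau)\rho\bigr)\Big\rangle_{t,R,\tau}\ -\ C\eps\intR(\rho+\rho^2).
\end{equation*}
The second step is to apply the lower half of Proposition~\ref{prop:rholocalupper} to each local piece, which replaces $E(\xi_{t\ell,t\delta,j}\rho)$ by $E(\brho\,\xi_{t\ell,t\delta,j})$ at the cost of terms proportional to $\eps(t\ell)^3(\brho+\brho^2)$, $(t\ell)^2 \brho/(t\delta)$, $\eps^{-1}\int|\nabla\sqrt\rho|^2\xi$, and $(t\ell)^p\eps^{1-p}\int_{2t\ell\Delta}|\nabla\rho^\theta|^p$. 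After averaging over $\tau$ and summing over $(z,j)$ using the partition-of-unity identity~(\ref{eq:incomp_unity}), the first two of these collapse (via $\sum_j\brho^k$ controlled by $\rho^k$ against the partition, and the finite-overlap property of the sets $2t\ell\Delta$) into $C\eps\intR(\rho+\rho^2)$, and the last two into $(C/\eps)\intR|\nabla\sqrt\rho|^2+C\eps^{1-5p/2}\intR|\nabla\rho^\theta|^p$, matching our parameter choice $\ell=\eps^{-3/2}$. The third step is to observe that $E(\brho\,\xi_{t\ell,t\delta,j}(R\cdot-t\ell z-\tau))=|t\ell\Delta|\,e_\Delta(\brho,t\ell,t\delta)$ by rotation and translation invariance, and to invoke the averaged lower bound~(\ref{eq:ThLim1}) of Proposition~\ref{prop:ThLim} to conclude that the $t$-average of $e_\Delta(\brho,t\ell,t\delta)$ is at least $\elda(\brho)-C\rho_0^2\delta^2-C\delta(\brho+\brho^2)/\ell-C\rho_0/\ell^2$, all of which is absorbed into $C\eps\intR(\rho+\rho^2)$ after summation (with the usual case split on the sign of $e_\Delta$, as in the proof of Proposition~\ref{prop:ThLim}).

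The fourth step is to turn $\elda(\brho)$ into $\elda(\rho(x))$ pointwise. Using Lemma~\ref{lem:euegupper} applied to $\rho_0=\brho$ and $\rho_1=\brho-\rho(x)\geq 0$, we get
\begin{equation*}
\elda(\brho)\ \geq\ \elda(\rho(x))\ -\ C(\brho-\rho(x))\,\brho
\end{equation*}
for $x$ in the support of $\indld$. Integrated against the partition and summed, the leading piece yields $\intR\elda(\rho(x))\,dx$ (exactly as in Proposition~\ref{prop:energyupper}), while the deviation produces $C\sum_{z,j}\int(\brho-\rho)\brho\,\xi_{t\ell,t\delta,j}$. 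This is controlled by the second inequality~(\ref{eq:rhoalbet2}) of Lemma~\ref{lem:rhoalbet} applied with $\alpha=\beta=a=1$: the assumption $p\theta\geq 2$ ensures the exponent $(2-\theta)p/(p-1)$ lies in $[1,2]$, so the resulting $\brho^{(2-\theta)p/(p-1)}$ term is bounded by $\brho+\brho^2$ and again gives $C\eps\intR(\rho+\rho^2)$, while the companion gradient term $\ell^p\eps^{1-p}\intR|\nabla\rho^\theta|^p$ contributes $C\eps^{1-5p/2}\intR|\nabla\rho^\theta|^p$.

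The main obstacle I anticipate is the bookkeeping in Step~2: several of the per-tile errors from Proposition~\ref{prop:rholocalupper} feature $\brho$ (not $\rho$) and a factor $(t\ell)^k$, and one must verify that after translation-averaging each of them collapses to an integral of $\rho$ or $\rho^2$ with a prefactor that behaves like $\eps$ at our choice $\ell=\eps^{-3/2}$, $\delta=\sqrt{\eps}$. Equally delicate is the case distinction on the sign of $\elda(\brho)$ (and on whether the $\xi_{t\ell,t\delta,j}$ is fully inside the support of $\rho$), which has to be handled uniformly enough that the boundary-cell contributions are dominated by $C(\ell+\delta)\intR\rho/\ell=O(\eps)\intR\rho$ via Lemma~\ref{lem:lowerrough}, exactly as in the proof of Proposition~\ref{prop:ThLim}; keeping track of this case split together with the $t$-, $R$- and $\tau$-averagings is where the argument is most tedious, though no new ideas beyond those already developed for the upper bound are required.
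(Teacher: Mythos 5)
Your outline follows the same tools and parameter choices as the paper ($\ell=\eps^{-3/2}$, $\delta=\sqrt{\eps}$, Corollary~\ref{cor:energylocallower}, the lower half of Proposition~\ref{prop:rholocalupper}, Proposition~\ref{prop:ThLim}, Lemma~\ref{lem:euegupper}, Lemma~\ref{lem:rhoalbet}), but it has a genuine gap precisely where you dismiss the difficulty as ``bookkeeping'' and assert that ``no new ideas beyond those already developed for the upper bound are required.'' That assertion is false. In Proposition~\ref{prop:energyupper} the per-tile comparison used $\rhob=\min_{\supp\xi}\rho$, and $\rhob\le\rho(x)$ holds trivially on each cell, so the error terms are instantly integrals of $\rho$, $\rho^2$. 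For the lower bound the roles flip and you must use $\brho=\max_{\supp\xi}\rho$. Your Step~2 claims ``$\sum_j\brho^k$ controlled by $\rho^k$ against the partition,'' and your Step~4 silently replaces the error $C(\brho-\rho(x))\brho$ from Lemma~\ref{lem:euegupper} by something controlled by $\rho$. Neither step is valid as stated: if $\rho$ has a large, sharply localized spike inside a cell, then $\brho$ dominates $\rho(x)$ on most of that cell, and translation/rotation averaging does not help (every translate of the tile that covers the spike still sees the same $\brho$, so $\int_{C_{t\ell}}\sum_{z,j}\brho^2\,d\tau$ is not bounded by $\intR\rho^2$).

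The paper resolves this with a dichotomy that is entirely absent from your proposal and that is exactly where the upper assumption $p\theta\le\tfrac25 p+1$ enters a second time. On a fixed cell one first disposes of two easy cases, in which the target inequality~(\ref{eq:energylower2}) follows directly from the rough bounds of Lemmas~\ref{lem:lowerrough}, \ref{lem:upperrough}, \ref{lem:euegrough}: either $\brho^{2/3}\le C\eps$ (so the local $L^{5/3}$ and $L^2$ excesses are absorbed by $C\eps\int(\rho+\rho^2)$), or the local gradient term $\int_{4\ell\Delta}|\nabla\rho^\theta|^p$ already dominates $\eps^{5p/2-1}\int(\indldt)(\rho^2+\rho^{5/3})$. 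In the remaining ``hard'' case one has simultaneously $\brho\ge C\eps^{3/2}$ and a quantitative smallness of the gradient relative to $\brho^2+\brho^{5/3}$; combining these with the Sobolev inequality of Lemma~\ref{lem:sob} gives $\brho^\theta-\rhob^\theta\le C\brho^\theta\eps^{\sigma}$ with $\sigma>0$ \emph{precisely because} $p\theta\le\tfrac25 p+1$, and hence $\brho\le C\rhob\le C\rho(x)$ for $\eps$ small. Only with this in hand can the $\brho$-dependent errors from Proposition~\ref{prop:rholocalupper}, Proposition~\ref{prop:ThLim} and Lemma~\ref{lem:euegupper} be converted into $C\eps\int(\rho+\rho^2)$ as you describe. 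You should make this dichotomy explicit; without it the proof does not close, and the case split on the sign of $\elda$ that you do mention is not the relevant one.
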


\begin{proof}
	Again, we take $\delta = \sqrt{\eps}$ and $\ell = \eps^{-3/2}$. From Corollary \ref{cor:energylocallower}, we get for $0 < \beta < 1/2$
	\begin{multline}
	\label{eq:energylower1}
		E(\rho) \geq \left( \int_{1-\beta}^{1+\beta} \frac{ds}{s^4} \right)^{-1} \int_{1-\beta}^{1+\beta} \frac{dt}{t^4} \frac{1}{(t\ell)^3} \times \\
		\times \sum_{z \in \Zbb} \sum_{j = 1}^{24} \int_{SO(3)} \int_{C_{t\ell}} E \left( \xi_{t\ell,t\delta,j}(R \cdot - t\ell z -\tau)\rho \right) dR \; d\tau - C\eps \intR (\rho + \rho^2),
	\end{multline}
	where we already averaged over $t$. Recall that $\xi_{\ell,\delta,j} = \indvar{\ell\mu_j\Delta}{\delta}$. We want to proceed similarly to the upper bound, so we will prove the following estimate for any tetrahedra $\Delta$ of fixed size
	\begin{multline}
	\label{eq:energylower2}
		\left( \int_{1-\beta}^{1+\beta} \frac{ds}{s^4} \right)^{-1} \int_{1-\beta}^{1+\beta} \frac{dt}{t^4} \frac{1}{(t\ell)^3} E \left( \rho (\indldt) \right) \\
		\geq \left( \int_{1-\beta}^{1+\beta} \frac{ds}{s^4} \right)^{-1} \int_{1-\beta}^{1+\beta} \frac{dt}{t^4} \frac{1}{(t\ell)^3} \Bigg\{ \intR \elda(\rho(x)) (\indldt) dx 
		- C \intR \rho |\nabla \sqrt{\indldt} |^2 \\
		- C\eps \intR (\rho + \rho^2)(\indldt) -\frac{C}{\eps} \intR |\nabla \sqrt{\rho}|^2 (\indldt) -\frac{C}{\eps^{5/2p-1}} \int_{4\ell\Delta} | \nabla \rho^\theta |^p \Bigg\}.
	\end{multline}
	As before, the last integral over the larger set $4\ell\Delta$ is not a problem, it only affects the constant in front of this term when we sum over all tetrahedra. We do this after inserting (\ref{eq:energylower2}) in (\ref{eq:energylower1}). Then we conclude in the same way as for the upper bound.
	From now on, denote with an abuse of notation
	\begin{equation*}
		\rhob \coloneqq \min_{2\ell\Delta + B_{2\delta}} \rho, \quad \brho \coloneqq \max_{2\ell\Delta + B_{2\delta}} \rho
	\end{equation*}
	instead of taking the minimum and the maximum over the smaller set. For the remainder of the proof, we will need that $\brho \leq C \rho(x)$ because most of our bounds involve the maximum $\brho$. In the following, we will see why this is false only in certain cases which can be dealt with differently. First, recall that by Lemmata \ref{lem:lowerrough}, \ref{lem:upperrough} and \ref{lem:euegrough} we have
	\begin{multline*}
		\left| E(\rho(\indldt)) - \intR (\indldt)\elda(\rho(x))dx \right| \leq C \intR (\indldt)(\rho^2 + \rho^{5/3}) \\
		+ C \intR (\indldt)|\nabla \sqrt{\rho} |^2 + C \intR \rho |\nabla \sqrt{\indldt}|^2.
	\end{multline*}
	Hence, we have shown (\ref{eq:energylower2}) when
	\begin{equation*}
		\intR (\indldt)(\rho^2 + \rho^{5/3}) \leq C\eps \intR (\indldt)(\rho + \rho^2) +\frac{1}{\eps^{5/2p-1}} \int_{4\ell\Delta} | \nabla \rho^\theta |^p.
	\end{equation*}
	This is true if $\brho^{2/3} \leq C\eps$. Hence we may assume that $\brho \geq C\eps^{3/2}$ and
	\begin{equation*}
		\intR (\indldt)(\rho^2 + \rho^{5/3}) \geq \frac{1}{\eps^{5/2p-1}} \int_{4\ell\Delta} | \nabla \rho^\theta |^p.
	\end{equation*}
	because otherwise we would also be done. Now, again by Sobolev, we have
	\begin{equation*}
	\begin{split}
		(\brho^\theta - \rhob^\theta)^p & = \lVert \brho^\theta - \rho^\theta \rVert^p_{L^\infty(\overline{2\ell \Delta + B_{2\delta}})} \\
		& \leq C \ell^{p-3} \int_{4\ell\Delta} | \nabla \rho^\theta |^p \\
		& \leq C \ell^{p-3} \eps^{5/2p-1} \intR (\indldt)(\rho^2 + \rho^{5/3}) \\
		& \leq C \ell^p \eps^{5/2p-1} (\brho^2 + \brho^{5/3}).
	\end{split}
	\end{equation*}
	With $\brho \geq C\eps^{3/2}$, we are able to estimate
	\begin{equation*}
		\brho^{5/3} = \brho^{p\theta} \frac{1}{\brho^{p\theta - 5/3}} \leq C \brho^{p\theta} \frac{1}{\eps^{p\theta 3/2 - 5/2}}
	\end{equation*}
	and
	\begin{equation*}
		\brho^2 = \brho^{p\theta} \frac{1}{\brho^{p\theta - 2}} \leq C \brho^{p\theta} \frac{1}{\eps^{p\theta 3/2 - 3}} \leq C \brho^{p\theta} \frac{1}{\eps^{p\theta 3/2 - 5/2}}
	\end{equation*}
	where we used that $\eps$ is small and $2 \leq p \theta$. Now, if we take the $p$-th root and insert our choice $\ell = \eps^{-3/2}$, we obtain
	\begin{equation*}
		\brho^\theta - \rhob^\theta \leq C \brho^\theta \eps^{1 + \frac{3}{2p} + \frac{3\theta}{2}}.
	\end{equation*}
	By our main assumption $p\theta \leq \frac{2}{5}p + 1$, the exponent on $\eps$ is positive, so if we take $\eps$ small enough, we get
	\begin{equation*}
		\brho \leq C \rhob \leq C \rho(x)
	\end{equation*}
	on $2\ell\Delta + B_{2\delta}$. This is what we wanted and now we can prove (\ref{eq:energylower2}) by familiar arguments.
	Namely, by the same reasoning as in Proposition \ref{prop:rholocalupper}, but with $\brho$ being the maximum over a larger set, we get
	\begin{multline*}
		E(\rho(\indldt)) \geq E(\brho(\indldt)) - C \eps \ell^3(\brho + \brho^2) \\
		- \frac{C}{\eps} \intR |\nabla \sqrt{\rho} |^2 (\indldt) - \frac{C}{\eps^{5/2p-1}} \int_{4\ell\Delta} | \nabla \rho^\theta |^p.
	\end{multline*}
	Because of $\brho \leq C\rho$, we can estimate the second term by $C\eps \intR (\rho + \rho^2)(\indldt)$. Now we average our estimate over $t$ and get by Proposition \ref{prop:ThLim} for the first part
	\begin{equation*}
	\begin{split}
		\left( \int_{1-\beta}^{1+\beta} \frac{ds}{s^4} \right)^{-1} \int_{1-\beta}^{1+\beta} \frac{dt}{t^4} & \frac{1}{(t\ell)^3} E \left( \brho (\indldt) \right) = |\Delta| \left( \int_{1-\beta}^{1+\beta} \frac{ds}{s^4} \right)^{-1} \int_{1-\beta}^{1+\beta} \frac{dt}{t^4} e_\Delta(\brho,t\ell,t\delta) \\
		& \geq \left( \elda(\brho) - C\eps(\brho + \brho^2) \right) |\Delta|.
	\end{split}
	\end{equation*}
	For the last term, we use
	\begin{equation*}
		C\eps(\brho + \brho^2) |\Delta| \leq \left( \int_{1-\beta}^{1+\beta} \frac{ds}{s^4} \right)^{-1} \int_{1-\beta}^{1+\beta} \frac{dt}{t^4} \frac{1}{(t\ell)^3} C\eps \intR (\rho + \rho^2)(\indldt).
	\end{equation*}
	by $\brho \leq C\rho$. With the same reasoning
	\begin{equation*}
		\elda(\brho)|\Delta| = \left( \int_{1-\beta}^{1+\beta} \frac{ds}{s^4} \right)^{-1} \int_{1-\beta}^{1+\beta} \frac{dt}{t^4} \frac{1}{(t\ell)^3} \intR \elda(\brho)(\indldt).
	\end{equation*}	
	By this estimate and Lemma \ref{lem:euegupper}, we have
	\begin{equation*}
		\elda(\brho) \geq \elda(\rho(x)) - C(\brho - \rho(x))\rho(x)
	\end{equation*}
	on the support of $\indldt$. We estimate $\intR (\brho - \rho(x))\rho(x)(\indldt)$ using Lemma \ref{lem:rhoalbet}, where we again fulfil the assumptions on $p$ and $\theta$. This proves equation (\ref{eq:energylower2}) and thereby the Proposition.
\end{proof}
}


\bibliographystyle{acm}
\bibliography{./bib/biblio}

\end{document}